\documentclass[onecolumn,journal,12pt,a4paper,twoside,draftcls]{IEEEtran}
\usepackage{indentfirst}
\usepackage{amsmath}
\usepackage{amsthm}
\usepackage{amsfonts,color,xcolor}
\usepackage[noadjust]{cite}
\usepackage[pdftex]{graphicx}
\usepackage{mathtools}
\usepackage{comment}
\usepackage{lipsum}
\usepackage{mathabx}
\usepackage{mathrsfs,bm,bbm}
\mathtoolsset{showonlyrefs}
\IEEEoverridecommandlockouts
\ifCLASSOPTIONcompsoc
\usepackage[caption=false,font=normalsize,labelfont=sf,textfont=sf]{subfig}
\else
\usepackage[caption=false,font=footnotesize]{subfig}
\fi

\usepackage[russian,USenglish]{babel}

\newcommand{\rev}{\textcolor{black}}

\newcommand{\figr}{Fig.~}
\newcommand{\secr}{Sec.~}

\usepackage{etoolbox}
\usepackage{tikz}
\newrobustcmd*{\mycircle}[1]{\tikz{\filldraw[draw=#1,fill=#1] (0,-0.3) circle [radius=0.08cm];}}
\newrobustcmd*{\squareA}[1]{\tikz{\filldraw[draw=#1,fill=#1] (0,-0)
rectangle (0.1cm,0.14cm);}}
\usetikzlibrary{shapes,arrows}
\usetikzlibrary{positioning,arrows.meta}
\let\OLDthebibliography\thebibliography
  \renewcommand\thebibliography[1]{
  \OLDthebibliography{#1}
  \enlargethispage{2mm}
  \vspace{-0.8mm}
  \setlength{\parskip}{0pt}
  \setlength{\itemsep}{-0.2pt}
}

\usepackage{acro}
\DeclareAcronym{AWGN}{short = AWGN ,long = additive white gaussian noise}
\DeclareAcronym{AoI}{short = AoI ,long = age of information}
\DeclareAcronym{AoII}{short = AoII ,long = age of incorrect information}

\DeclareAcronym{CDF}{short = CDF ,long = cumulative distribution function}
\DeclareAcronym{CRA}{short = CRA ,long = contention resolution ALOHA}
\DeclareAcronym{CRDSA}{short = CRDSA ,long = contention resolution diversity slotted ALOHA}
\DeclareAcronym{CSA}{short = CSA ,long = coded slotted ALOHA}
\DeclareAcronym{C-RAN}{short = C-RAN ,long = cloud radio access network}
\DeclareAcronym{DAMA}{short = DAMA ,long = demand assigned multiple access}
\DeclareAcronym{DSA}{short = DSA ,long = diversity slotted ALOHA}
\DeclareAcronym{eMBB}{short = eMBB ,long = enhanced mobile broadband}
\DeclareAcronym{FEC}{short = FEC ,long = forward error correction}
\DeclareAcronym{GEO}{short = GEO ,long = geostationary orbit}
\DeclareAcronym{GF}{short = GF ,long = generating function}
\DeclareAcronym{HMM}{short = HMM ,long = hidden Markov model}
\DeclareAcronym{IC}{short = IC ,long = interference cancellation}
\DeclareAcronym{IoT}{short = IoT ,long = Internet of Things}
\DeclareAcronym{IRSA}{short = IRSA ,long = irregular repetition slotted ALOHA}
\DeclareAcronym{KPI}{short = KPI, long = key performance indicator}
\DeclareAcronym{LEO}{short = LEO ,long = low Earth orbit}
\DeclareAcronym{MAC}{short = MAC ,long = medium access}
\DeclareAcronym{MAP}{short = MAP, long = maximum a posteriori}
\DeclareAcronym{mMTC}{short = mMTC ,long = massive machine-type communications}
\DeclareAcronym{MC}{short = MC ,long = Markov chain}
\DeclareAcronym{NTN}{short = NTN, long = non-terrestrial network}
\DeclareAcronym{PDF}{short = PDF ,long = probability density function}
\DeclareAcronym{PER}{short = PER ,long = packet error rate}
\DeclareAcronym{PLR}{short = PLR ,long = packet loss rate}
\DeclareAcronym{PMF}{short = PMF ,long = probability mass function}
\DeclareAcronym{RA}{short = RA ,long = random access}
\DeclareAcronym{rv}{short = r.v. ,long = random variable}
\DeclareAcronym{SA}{short = SA , long = slotted ALOHA}
\DeclareAcronym{SIC}{short = SIC ,long = successive interference cancellation}
\DeclareAcronym{SNR}{short = SNR ,long = signal-to-noise ratio}
\DeclareAcronym{SFG}{short = SFG ,long = signal flow graph}
\DeclareAcronym{TDM}{short = TDM ,long = time division multiplexing}
\DeclareAcronym{VoI}{short = VoI, long = value of information}

\newtheorem{lemma}{Lemma}
\newtheorem{remark}{Remark}
\newtheorem{example}{Example}

\newcommand{\pb}{\ensuremath{\overline{p}}}
\newcommand{\pr}{\ensuremath{P}}

\newcommand{\given}{\, | \,}
\newcommand{\givenS}{\ensuremath{\vert}}
\newcommand{\norm}[2]{\left \lVert #1 \right \rVert_{#2}}
\newcommand{\dens}{\ensuremath{\rho}}
\newcommand{\area}{\ensuremath{\mathcal A}}
\newcommand{\rad}{\ensuremath{R}}
\newcommand{\radMax}{\ensuremath{\rad_{m}}}
\newcommand{\powLaw}{\ensuremath{\alpha}}

\newcommand{\Mcn}{\ensuremath{X_n}}
\newcommand{\McA}{\ensuremath{\mathcal{X}}}
\newcommand{\TransMat}{\ensuremath{T}}
\newcommand{\Rc}{\ensuremath{Y}}
\newcommand{\rc}{\ensuremath{y}}
\newcommand{\Rcn}{\ensuremath{\Rc_n}}
\newcommand{\rcn}{\ensuremath{\rc_n}}
\newcommand{\Rcnvec}{\ensuremath{\Rc^n}}
\newcommand{\rcnvec}{\ensuremath{\rc^n}}
\newcommand{\Distn}{\ensuremath{D_n}}
\newcommand{\distn}{\ensuremath{d_n}}
\newcommand{\dist}{\ensuremath{d}}
\newcommand{\mc}{\ensuremath{x}}
\newcommand{\mcn}{\ensuremath{\mc_n}}

\newcommand{\Readn}{\ensuremath{{Z_n}}}
\newcommand{\readn}{\ensuremath{z_n}}
\newcommand{\lastRcn}{\ensuremath{{W_n}}}
\newcommand{\lastrcn}{\ensuremath{w_n}}
\newcommand{\lastRc}{\ensuremath{{W}}}
\newcommand{\lastrc}{\ensuremath{w}}

\newcommand{\pTx}{\ensuremath{\zeta}}
\newcommand{\ps}{\ensuremath{p_s}}
\newcommand{\peras}{\ensuremath{p_e}}

\newcommand{\asymm}{\ensuremath{\eta}}

\newcommand{\nodes}{\ensuremath{m}}

\newcommand{\Agen}{\ensuremath{\Delta_n}}
\newcommand{\agen}{\ensuremath{\delta_n}}
\newcommand{\avgAge}{\ensuremath{\bar{\Delta}}}
\newcommand{\ent}{\ensuremath{H}}
\newcommand{\condentforg}{\ensuremath{h}_f}
\newcommand{\condent}{\ensuremath{h}}
\newcommand{\entFun}{\ensuremath{\tilde H}}

\newtheorem{theorem}{Theorem}

\begin{document}

\title{A Spatio-Temporal Model for Information Freshness in Massive Random Access}
\author{Andrea~Munari,~\IEEEmembership{Senior Member,~IEEE}, Alessandro Buratto,~\IEEEmembership{Student Member,~IEEE}, Federico Chiariotti,~\IEEEmembership{Senior Member,~IEEE}, Leonardo Badia,~\IEEEmembership{Senior~Member,~IEEE}, and Petar Popovski,~\IEEEmembership{Fellow,~IEEE}
\thanks{A. Munari is with the Institute of Communications and Navigation, German Aerospace Center (DLR), We\ss ling, Germany. email: andrea.munari@dlr.de. A. Buratto, F. Chiariotti, and L. Badia are with the Dept. of Information Engineering, University of Padova, Italy. email: \{alessandro.buratto.1@studenti.unipd.it, federico.chiariotti@unipd.it, leonardo.badia@unipd.it\}. P. Popovski is with the Dept. of Electronic Systems, Aalborg University, Denmark. email: petarp@es.aau.dk.}
\thanks{This work was supported in part by the Italian National Recovery and Resilience Plan (NRRP), as part of the RESTART partnership (PE0000001), under the European Union NextGenerationEU Project, by the the German Federal Ministry of Research, Technology, and Space (BMFTR) with the xG-RIC project as part of the research program Communication Systems 
“Souverän. Digital. Vernetzt.” (grant number 16KIS2429K), and by the Villum Investigator Grant ``WATER'' from the Velux Foundations, Denmark.}
\thanks{A preliminary version of this work \cite{Munari25_SPAWC} was presented at IEEE SPAWC, Surrey (UK), 7--10 July, 2025.}
\vspace{-1.5cm}
}
\date{}
\thispagestyle{empty}
\maketitle

\begin{abstract}
\vspace{-0.2cm}
Massive connectivity, a key building block of 5G, is expected to play an important role in the next generation of wireless systems, 
 and its expected requirements are being revolutionized through the modeling of the information dynamics related to the vast numbers of Internet of things (IoT) devices. 
Motivated by this, the present paper introduces a model \rev{that captures} the spatio-temporal \rev{nature} of freshness of information sent via random access channel policies from an extremely large set of IoT devices \rev{via simple scalar parameters, i.e., the probability of success and accuracy of received updates}. There are many information freshness metrics, starting from the age of information (AoI), all of which are proxies for the actual application performance, characterized over the temporal dimension. Our model adds the spatial dimension to this picture, observing that sensors distributed over the same area may have a strong correlation, and information from multiple close-by sensors may improve the overall accuracy of the receiver. We focus on characterizing the uncertainty of the receiver, expressed through the conditional entropy, considering a network of partially reliable, spatially distributed sensors observing the same process and reporting their measurements over a slotted ALOHA channel. We consider a simple forgetful receiver and a more complete model which accounts for the full history of past observations, deriving their performance, and optimizing the transmission probability of nodes to minimize the expected uncertainty.
\end{abstract}


\section{Introduction}
\label{sec:intro}

The rapid development of the \ac{IoT} over the past decade has led to the deployment of tens of billions of distributed sensors,  enabling a wide range of applications, from predictive maintenance in manufacturing plants to environmental monitoring~\cite{ericsson2024mobility, kalor2024wireless}. However, managing a massive number of low-power sensors, with limited communication capabilities and severe computational and energy constraints due to their small batteries, poses several challenges captured by novel key performance indicators and constraints~\cite{guo2021enabling}, which have shaped and oriented recent research on medium access protocol design.

In 5G, massive machine-type communication has already been recognized as a cornerstone service category, with random access protocols playing a central role in enabling scalable connectivity under sporadic traffic patterns. Yet, the requirements of 6G push this problem to a new level. Beyond sheer connectivity, 6G \ac{IoT} scenarios are tightly linked to the concept of goal-oriented communications~\cite{gunduz2023timely}, where communication protocols shall be designed and optimized to convey information that serves a concrete task, e.g., monitoring, and where information freshness and accuracy become as important as throughput.

In this context, \ac{AoI} has emerged as a key metric of timeliness~\cite{Kaul11_SECON}: unlike traditional network latency, which focuses on a single packet, \ac{AoI} measures the freshness of the most recent update available to the receiver, thus acting as a much closer proxy for the actual misalignment between the real condition of the measured process and the best available estimate. Since its introduction in 2011, \ac{AoI} has been an active research field, and a significant amount of scientific literature has been dedicated to its characterization~\cite{Yates19_TIT} and its use in network optimization~\cite{Modiano18_AoI}. Several extensions of \ac{AoI} have also been developed, providing even closer proxies of the application performance, at the cost of higher complexity and application-specific definitions. The \ac{AoII}~\cite{Ephremides19_AoII} and \ac{VoI}~\cite{Kellerer19} are particularly interesting, as they consider the estimation error directly in their calculation, penalizing scenarios in which the estimate is misaligned with the actual state of the system.

However, while the temporal dimension of information freshness has been thoroughly explored and characterized, the spatially distributed nature of \ac{IoT} networks has been mostly neglected so far, or only considered as part of the medium access model~\cite{yang2021spatiotemporal}. In real \ac{IoT} networks, the physical distance of a sensor from the target event to be measured is one of the factors determining the quality of individual measurements, and fresher, lower-quality measurements may be less informative than older, better measurements. Consequently, metrics such as \ac{AoI} become less useful as proxies for the actual knowledge accuracy at the receiver.

Corrections to the definitions of \ac{AoI} or \ac{VoI} that account for this phenomenon have been proposed~\cite{tong2022age,zancanaro2023modeling,fidler20242d}, but these are still indirect measurements of the knowledge uncertainty. We argue that the entropy of the \emph{a posteriori} distribution of the state, conditioned on the information available to the receiver~\cite{luo_TIT2009,Cocco23_JSAIT}, provides a direct characterization of the performance of the monitoring system, without the need for such proxies, and is thus the ideal target for massive access optimization. This work is inspired by the use of entropy as an estimation and control theoretical metric~\cite{saridis2002entropy}, which has shown significant advantages over traditional metrics such as minimum mean square error under non-Gaussian errors~\cite{chen2019minimum}. Additionally, the receiver entropy can capture more than strictly spatial and temporal relations, as it is tightly coupled with system dynamics and structural factors.

\subsection{Motivating Example}

\begin{figure}
    \centering
    \definecolor{forestgreen}{RGB}{49,118,63}
\definecolor{darkbrown}{RGB}{141,110,56}
\tikzset{pics/.cd,
  sensor/.style={
    code={
        \draw[pic actions] (-0.04,-0.115) rectangle (0.04,0.025);
        \draw (0,0.025) -- (0,0.085);
        \draw (0,0.085) -- (0.03,0.115);
        \draw (0,0.085) -- (-0.03,0.115);
    } 
  },
  tree/.style={
    code={
        \draw[fill=darkbrown] (-0.03,-0.1) rectangle (0.03,-0.3);
        \node [cloud, inner sep=2pt,draw,minimum width = 0.25cm,fill=forestgreen, aspect=2] {};
    } 
  },
  satellite/.style={
    code={
        \draw (0.07,-0.07) -- (0.17,-0.17);
        \draw[pic actions,rotate around={45:(0.27,0.27)}] (0.095,0.17) rectangle (0.445,0.37);
        \draw (0.38,0.24) -- (0.24,0.38);
        \draw (0.3,0.16) -- (0.16,0.3);
        \draw (0.07,0.07) -- (0.39,0.39);
        \draw[pic actions,rotate around={45:(-0.27,-0.27)}] (-0.095,-0.17) rectangle (-0.445,-0.37);
        \draw (-0.38,-0.24) -- (-0.24,-0.38);
        \draw (-0.3,-0.16) -- (-0.16,-0.3);
        \draw (-0.07,-0.07) -- (-0.39,-0.39);        
        \draw[pic actions] circle[radius=1mm];
        \draw (0,0) -- (-0.17,0.17);
    } 
  }
}

\begin{tikzpicture}[scale=1.1, transform shape]

\draw (0,0)[black,fill=blue!15] ellipse (2.4cm and 1.5cm);
\draw (0,0)[black,dashed,fill=blue!30] ellipse (1.6cm and 1cm);
\draw (0,0)[black,dashed,fill=blue!45] ellipse (0.8cm and 0.5cm);
\draw (2.165,0.66) -- (0,3);
\draw (-2.165,0.66) -- (0,3);

\draw (0,0.3) pic[draw]{tree};
\draw (-0.25,3.25) pic[draw,fill=gray!20]{satellite};

\draw (2,0) pic[draw,fill=gray!20]{sensor};
\draw (0.9,-0.2) pic[draw,fill=gray!20]{sensor};
\draw (0.6,0.85) pic[draw,fill=gray!20]{sensor};
\draw (0.1,1) pic[draw,fill=gray!20]{sensor};
\draw (-0.1,-1.3) pic[draw,fill=gray!20]{sensor};
\draw (-1.1,-0.15) pic[draw,fill=gray!20]{sensor};
\draw (-2,-0.58) pic[draw,fill=gray!20]{sensor};
\draw (1.5,0.75) pic[draw,fill=gray!20]{sensor};
\draw (-1.6,0.95) pic[draw,fill=gray!20]{sensor};
\draw (1.75,0.15) pic[draw,fill=gray!20]{sensor};
\draw (-1.55,-0.45) pic[draw,fill=gray!20]{sensor};
\draw (-1.25,0.65) pic[draw,fill=gray!20]{sensor};
\draw (-0.25,0.05) pic[draw,fill=gray!20]{sensor};


\draw [<->] (0,0) -- (-1.66,-1.1) node [pos=0.66,above] {\tiny $\radMax$};
\draw [<->] (0,0) -- (0,-0.5) node [pos=0.5,right] {\tiny $\rad$};
\node at (0.5,0.15) {\tiny $\area_0$};
\node at (1.1,0.3) {\tiny $\area_1$};
\node at (1.8,0.45) {\tiny $\area_2$};

\end{tikzpicture}
    \caption{A schematic of the considered example scenario.}
    \label{fig:example}
    \vspace{-1em}
\end{figure}

Consider the scenario represented in Fig.~\ref{fig:example}, in which a phenomenon of interest (represented by the tree) is observed by a set of spatially distributed sensors that can perform an uplink transmission towards an \ac{IoT} gateway. This could be instantiated by a terrestrial base station, or by a non-terrestrial receiver, e.g., a \ac{LEO} satellite in the example. Closer sensors will be able to observe the process with a high accuracy, while farther ones will still obtain measurements, albeit with a reduced quality. Sensors beyond a certain distance will not observe the process at all. The IoT application, fed with the data collected by the gateway, monitors the system through sensor updates. 

If observations can be polled at will, 
picking the closest sensor is a natural choice, so as to maximize reliability. 
However, such an approach is hardly viable in practical \ac{IoT} systems, as \textit{(1)} it would quickly deplete the energy of such node, and \textit{(2)} sensors might be able to obtain and report new measurements only sporadically due to duty-cycle and normative constraints. On top of this, data collection is likely to be performed relying on random access protocols over a shared channel, as
scheduled schemes often have signaling requirements that might prove too costly for low-power sensors.
In addition, data collection can be hindered by wireless channel impairments, which can prevent communication. Updates from sensors placed farther away then become beneficial: although they may have a higher error rate, they still provide new information to the gateway, which can be integrated to the current belief over the state of the process. 

This setting presents two major challenges. The first is to determine the accuracy of the knowledge at the receiver's end and its uncertainty over the state of the system. Metrics derived from the \ac{AoI} do not fully capture the scenario, or need significant adjustments to account for the sensors' different levels of accuracy. The second is medium access optimization, as the difference in performance between a simple scheme in which sensors within a given radius transmit with a set rate and a more complex one in which sensors closer to the event of interest transmit more often is an open question. It is possible that preventing farther nodes from transmitting, reducing the frequency of received updates, i.e., the system throughput, but increasing their quality, may have a beneficial effect on the receiver uncertainty.
Furthermore, if sensors have some knowledge of their position, it might be useful to tag messages with the distance, but this incurs additional overhead. A complete characterization of the system also allows for optimization, improved performance, and reduction of unnecessary signaling. We argue that understanding the trade-offs that emerge in these settings is paramount to derive solid design principles of \ac{IoT} massive connectivity in 6G~\cite{kalor2024wireless}, and we aim to provide useful insights in this paper. As we show in the following, this does not impose a complexity burden on the system, but requires a careful optimization of key parameters such as the coverage radius. Notably, simple configurations in which all nodes transmit with the same probability are proven to minimize the uncertainty.

\subsection{Related Work}

As the focus of the research community has moved from characterizing freshness to a more direct consideration of application performance in remote monitoring, several studies have analyzed the effect of different transmission policies in this scenario. In particular, \ac{AoII} has been the subject of a significant amount of work, as it combines the freshness and timeliness properties of \ac{AoI} with the consideration of the mismatch between the physical process and the beliefs of the receiver. In a single transmitter scenario, policies that can optimize \ac{AoII} under rate limitations and communication delays have been designed for slotted~\cite{chen2025preempting} and continuous time~\cite{cosandal2025multi}. In both cases, the optimal transmission policies are based on state-dependent thresholds, which depend on the transition probabilities of the Markov chain that describes the monitored process. A recent work~\cite{cosandal2024joint} has also considered pull-based strategies, i.e., systems in which a receiver might poll one of several sensors, finding that a characterization of the belief over the state, along with the system \ac{AoI}, represent a sufficient statistic to obtain the \ac{AoII}-optimal policy.

Another possibility that has been explored in the past few years has been the direct consideration of the estimator error at the receiver. Naturally, this requires the specification of the estimator and loss function, as different errors might have different impacts on performance. The specification of a cost for false alarms and missed detections was considered for an anomaly reporting system in~\cite{luo2024minimizing}, and the effect of different policies in the same scenario was analyzed in~\cite{salimnejad2023state} using a hold estimator that maintains the last received value as its best estimate of the current state. A more advanced class of estimators was proposed in~\cite{liyanaarachchi2025structured}: as the complexity of a \ac{MAP} approach makes analysis difficult, and martingale-based solutions that exploit the Markov property only represent a limited set of systems, structured estimators that can still prove tractable while representing a wider class of systems are useful in the characterization of this problem.
Finally, \cite{talli2025pragmatic} considered a joint design of communication and estimation policies by modeling the problem as a two-agent system, showing that a joint design can significantly outperform fixed estimators and pull-based schemes.

All the aforementioned works consider systems in which sensors have a perfect observation of the process, and the only bottleneck is communication. Imperfect transmitters have been considered in~\cite{holm2023goal}, which proposed a pull-based reinforcement learning-based scheduler for a receiver that implements a Kalman filter and decides which sensor to poll, and in~\cite{zakeri2025semantic}, which defined a sampling cost for the individual sensor in a push-based system, effectively reducing the observability of the process.

In this work, we are interested in a special case of imperfect observation, in which the spatial distribution of the sensor network is crucial to determine the accuracy of information. While older works only considered the spatial dimension's effect on medium access~\cite{yang2021spatiotemporal}, a recent paper~\cite{tong2022age} proposed a spatio-temporal definition of \ac{AoI}, which models spatial correlation as a proportional decrease of the \ac{AoI} of a sensor if another close-by sensor transmits. A similar definition has been given in~\cite{zancanaro2023modeling}, considering the sum of \ac{AoI} contributions from multiple sensors, weighted by their proximity to the sensor to be scheduled. Another recent work~\cite{fidler20242d} applied a distance-based penalty in determining the freshness for a given point in space, choosing the freshest sample, which might not be from the sensor that is physically closest, depending on its relative age.

However, adapting \ac{AoI} presents some drawbacks, since its value as a proxy for the actual monitoring performance requires a precise modeling of the sensor network and the effect of distance on the accuracy of the observation. 
The entropy of full-sequence~\cite{Cocco23_JSAIT} and forgetful~\cite{Munari25_ISIT} receivers was considered in our previous work under different transmission policies. Unlike the present contribution, these papers consider a case in which observations are perfectly reliable and each node independently monitors a different process.
While the analytical tools are similar to the ones used in this paper,~\cite{Cocco23_JSAIT,Munari25_ISIT} relied on the renewal property of successful transmissions, which does not hold under the present settings, requiring a more complex characterization.
The entropy rate at the receiver was tackled in~\cite{luo_TIT2009}, which considered the tracking of a binary Markov process through a binary channel: this is a complementary scenario with respect to~\cite{Cocco23_JSAIT,Munari25_ISIT}, as it does not include medium access aspects, but focuses on a single node transmitting at every slot over an unreliable channel that may yield incorrect observations. The present work combines both the effect of multiple nodes attempting to access the channel and of their observations being unreliable following a spatial model, requiring a more complex analysis.
The receiver entropy may prove more meaningful than proxy metrics~\cite{rezaeianPercom2007}, as it directly represents the performance of the monitoring   application~\cite{rezaeianArxiv2006}. Particularly relevant to the present paper is the uncertainty of information (UoI), studied in \cite{Liew22_TIT,Liew24_TIT} for scheduled monitoring of Markov processes, and defined as the entropy at the monitor on the current state of a source conditioned on the last received sample. The metric is akin to what considered here in one of the cases we target, i.e., when the receiver only has access to the last received value and to the time elapsed since its retrieval. 
The novelty of our work is to consider imperfect observations, whose accuracy depends on spatial aspects, i.e., the distance from the event of interest, and the interactions between accuracy and success rate over a collision channel.
To properly characterize these, we consider the case in which the whole history of observations is available at the receiver up to the present time, and explore how this can significantly improve performance.

\subsection{Contribution}
The goal of this work is to move beyond purely temporal notions of freshness, as captured by metrics such as \ac{AoI} or \ac{VoI}, by explicitly incorporating a spatial dimension. These two aspects jointly influence the uncertainty at the receiver, which we quantify through the conditional entropy of the system state, given the history of received messages.

This setting is particularly relevant in \ac{IoT} data collection scenarios, where a gateway asynchronously gathers updates from spatially distributed sensors. We consider a model in which sensors monitor a Markov process and access the channel via slotted ALOHA, providing a tractable yet representative framework for more complex systems. We leverage a general distance-based error model for sensor observations, and formulate an analytically tractable framework that  preserves the spatio-temporal characteristics of the scenario, which can be characterized through a \ac{HMM}. The resulting model has a single-sensor equivalent, whose activity rate and accuracy reflect the spatial distribution of nodes and their activation probabilities. We thus optimize the medium access parameters, analyzing the effect of the coverage radius and characterizing the optimal access probability for different regions as well as the effect of including geographical knowledge into transmitted packets.
Furthermore, we analyze a \emph{forgetful} receiver that only keeps the last received packet and a \emph{full-sequence based (FS)} receiver that considers the full history of received data up to the present moment.\footnote{These two receivers are often colloquially referred to as \emph{stateless} and \emph{stateful}, respectively. We avoid this terminology in the present work, as the belief of the forgetful receiver also evolves over time following the Markov chain that represents the process, and the use of ``stateless'' would thus be formally incorrect.}

To the best of our knowledge, our work is the first to fully characterize this type of scenario in terms of receiver uncertainty, providing a useful tool for system optimization. 
In particular, we show that restricting the coverage range to a smaller value may reduce said uncertainty at the expense of system throughput. For most of our work, we concentrate on binary Markov processes. Aside from being relevant in and of itself for applications such as anomaly detection, the binary case allows us to analyze and understand trends and obtain insights that can carry over to more complex models. We then extend our results to a general Markov model with $M>2$ states. Our main contributions can then be summarized as follows:
\begin{itemize}
    \item We observe that medium access is memoryless across slots, which can be captured by two parameters, the success probability and accuracy of updates. The problem could also be modeled as a single sensor with those aggregate properties. We characterize the spatio-temporal observation process and provide the parameters corresponding to different medium access settings;
    \item We analyze the uncertainty of a forgetful and full-sequence based receiver for a binary Markov process by computing the conditional entropy of the source state, given the observations available at the receiver. The forgetful approach can be of practical relevance in view of its simplicity and allows us to obtain an exact, closed-form expression of the metrics of interest. For the full-sequence based case, uncertainty is captured by formalizing an \ac{HMM} describing the observations collected by the receiver;
    \item For both solutions, we study in detail the effect of the asymmetry of the binary Markov process and prove the existence of an optimal coverage range to minimize the average uncertainty at the receiver. Notably, the solution differs from the system configuration that would be attained by minimizing \ac{AoI}, which also maximizes the system throughput, clarifying the critical role of considering the spatial dimension in information freshness;
    \item The initial model is extended to capture the case in which devices are aware of their location, and make this information available whenever sending an update. The receiver can then leverage such geographical knowledge to weigh the reliability of the incoming readings. Moreover, we also determine how the transmission probabilities should be adapted based on the nodes' distance from the sensed process, in order to minimize the receiver uncertainty. Our study provides non-trivial insights, revealing that both cases lead to negligible performance gains, hinting at the design of simpler \ac{IoT} systems;
    \item We finally extend our main results to a Markov process with $M$ states, showing that the main considerations we draw are not due to the simplistic nature of the binary process, but represent more general properties of spatio-temporal monitoring systems.
\end{itemize}
A preliminary version of this work was presented in~\cite{Munari25_SPAWC}, which contained only the characterization of the forgetful receiver.
The rest of this paper is organized as follows: first, Sec.~\ref{sec:sysModel} presents the system model and metrics. The analytical results for the forgetful and full-sequence receiver are then derived in Sec.~\ref{sec:forgetful_analysis} and Sec.~\ref{sec:mindful_analysis}, respectively.  Sec.~\ref{sec:optim} presents the optimization of the communication systems, and Sec.~\ref{sec:multiState} describes the results when extending the monitoring to an $M$-state process. Finally, Sec.~\ref{sec:concs} concludes the paper.

\section{System Model and Preliminaries}
\label{sec:sysModel}

\subsection{Notation}
We denote a discrete random variable (r.v.) and its realization by upper- and lowercase letters, respectively, e.g., $X$ and $x$. The probability mass function of the r.v. $X$ is indicated as $\pr[X=x] = p(x)$, and the conditional distribution of $X$ given $Y$ is $p_{X\givenS Y}(x \given y)$. Subscripts are omitted when no ambiguity arises. The Shannon entropy of a r.v. $X$ is denoted by $H(X)=-\sum_x p(x) \log_2 p(x)$, with the usual extensions to conditional entropy. When $n$ denotes the current time index, $X_n$ is the corresponding value of a random process, whereas we indicate by $X^n = [X_0, X_1, \dots, X_n]$ the vector of all the values of the process up to $n$. 
Finally, we denote vectors in lowercase boldface, e.g., $\bm a$. In particular, a vector denoting the position of a point in $\mathbb R^2$ is indicated as $\bm x$.

\subsection{System model}\label{ssec:model}

Consider a physical process of interest, modeled as a finite-state, time-homogeneous discrete time Markov chain \Mcn, $n\in \mathbb N$, taking values in \McA. We denote the one-step transition probability matrix for \Mcn\ as \TransMat. For the initial part of our study, we focus on the case of a two-state chain with $\mathcal{X}=\{0,1\}$, so that the transition matrix is
\begin{align}
    \TransMat =
    \begin{pmatrix}
        1-q         & q \\
        \asymm q    & 1-\asymm q
    \end{pmatrix}
    \label{eq:transMat}
\end{align}
with $q>0$, leading to the stationary distribution of the ergodic chain \mbox{$\pi_0 = \asymm/(1+\asymm)$}, \mbox{$\pi_1 = 1-\pi_0$}.  In \eqref{eq:transMat}, $\asymm \in (0,1/q)$ is the \emph{asymmetry factor}, denoting the ratio of the time spent in state $0$ to the time spent in state $1$. Without loss of generality, we will consider in the remainder $\eta\geq 1$. The case $\eta=1$ identifies symmetric source, whereas, for $\eta>1$, state $0$ is more frequently visited. The case $\eta<1$ means that state $1$ is more frequent than $0$, so we can simply swap the indices of the states. The general setup with $\lvert \McA \rvert > 2$ will be tackled in Sec.~\ref{sec:multiState}, introducing the required modeling details. 

We study a network topology composed of a receiver and a population of $\nodes\in\mathbb{N}$ \ac{IoT} devices, which are spread over a circular coverage area \area\ of radius \radMax, centered without loss of generality in the origin of the plane. Nodes are then randomly distributed over the coverage area according to the uniform distribution \ac{PDF}
\begin{equation}\label{eq:uniform_dist}
    f(x,y)=\begin{cases}
        \pi^{-1}\radMax^{-2}, &\text{if } (x,y)\in\area;\\
        0, &\text{otherwise.}
    \end{cases}
\end{equation}
At any time slot, each device within coverage independently decides whether to take a reading of the process and transmit a message containing the value to the receiver.\footnote{Nodes outside the coverage area are inactive and do not attempt any communication with the receiver.}  

Due to the displacement of devices within \area, the reliability of the observations produced by a node depends on its position. In particular, we assume that devices closer to the origin of the plane provide more precise readings, whereas the accuracy of messages coming from farther away progressively reduces, as the longer the information about the event propagates physically, the higher the probability that it may be distorted. Such a setting captures, for instance, an application tracking the value of a physical quantity at the specific coordinates of interest (plane origin), or a monitoring system that has to detect an event (e.g., presence/absence of an object) in the corresponding surroundings. 

We partition the coverage area \area\ in $K$ regions: 
\begin{align}\label{eq:region_def}
    \area_i := \{ \bm x \in \area: i \rad \leq \norm{\bm x}{2} < (i+1)\rad  \,\}
\end{align}
for $i\in \mathcal K = \{0,\dots,K-1\}$, and where $\rad:=\radMax/K$. Region $\area_i$ is thus an annulus with internal radius $i\rad$, external radius $(i+1)\rad$, and area $A_i=(2i+1)\pi\rad^2$. The overall area of the coverage region is $A=\pi\radMax^2=\pi K^2\rad^2$. We can thus consider the number of nodes in area $\area_i$, denoted as $\nodes_i$, as the outcome of an urn model \cite{johnson1977urn}. Specifically, the portion of the $\nodes$ total sensors falling into that specific annulus follows a binomial distribution, i.e., $\nodes_i\sim\text{Bin}\left(\nodes;\frac{A_i}{A}\right)=\text{Bin}\left(\nodes;\frac{2i+1}{K^2}\right)$, as the probability of any given node falling into $\area_i$ is $A_i/A$, due to the uniform distribution across the coverage area following~\eqref{eq:uniform_dist}. The expected number of nodes in $\area_i$ is thus $\mathbb{E}[\nodes_i]=\frac{(2i+1)\nodes}{K^2}$.

Whenever a device in $\area_i$ performs a reading of the process \Mcn, it obtains a value \Readn\ according to the conditional distribution $p(\readn\given \mcn, i)$. In the case of the two-state process under study, the PMF simplifies to having a correct ($\readn=\mcn$) or wrong ($\readn\neq\mcn$) reading. For the time being, we assume that the reliability of \Readn\ does not depend on the current value of the source, and we model
\begin{align}
    \lambda_i := \pr\left[\Readn = \Mcn \given i\right] = \max \left\{ \frac{1}{(1+i\rad)^{\powLaw}}, \frac{1}{2} \right\}
    \label{eq:powerLaw}
\end{align}
with $\alpha > 0$, so that $\pr \left[\Readn \neq \Mcn \given i\right] = 1-\lambda_i$. The extension to the case in which the accuracy of a sensor is a continuous, monotonically decreasing function of the distance is relatively straightforward, replacing \acp{PMF} with \acp{PDF} and sums with integrals in the following.
This power-law characterization has been shown to accurately capture spatial correlation among readings provided by sensor nodes for physical processes of interest \cite{Hribar2018_IoT}, although the analysis presented can be adapted to different reliability functions. 

The nodes send updates to the receiver over a shared wireless channel, following a slotted ALOHA policy \cite{Abramson77:PacketBroadcasting}. Accordingly, at the beginning of a slot, each node independently decides with probability \pTx$\in(0,1)$\ to sense the process and transmit a packet containing the obtained reading. As the transmission decision does not depend on the state of the underlying process or on its observation, the transmission process is memoryless. Following the on-off fading channel model \cite{OnOff2003,Sun16:PECCSA,Ivanonv17:TCOM}, a packet is erased with probability $\peras<1$, bringing no power contribution to the receiver, or arrives unfaded with probability $1-\peras$. We further regard collisions to be destructive, i.e., a slot in which two or more packets arrive non-erased at the receiver prevents decoding of any of them, while singleton packets arriving unfaded over a slot are successfully received~\cite{Abramson77:PacketBroadcasting}.
For the considered model, a slot sees the delivery of a message as an i.i.d. Bernoulli random variable with success probability
\begin{align}
    \ps = \nodes\; \pTx \; (1-\peras) \cdot [1-\pTx(1-\peras)]^{\nodes-1},
    \label{eq:psucc}
\end{align}
corresponding to the average throughput of the system.
At the receiver side, no knowledge can be gathered on the area out of which a received message was generated, e.g., due to location-unaware devices, or to the fact that no geographical information is piggybacked in transmitted packets. This assumption will be relaxed in \secr\ref{sec:locationAware}.

\begin{remark}
    The considered model captures setups in which packets coming from different nodes are equally likely to be retrieved. For instance, this is the case in settings where the relative displacement of nodes is not predominant on the overall path-loss. This includes ground nodes communicating with a satellite~\cite{lee2024handover} or long-range IoT scenarios (e.g., LoRa~\cite{callebaut2019characterization}) in which the event of interest is not co-located with the base station. The model has been widely used in the literature~\cite{beltramelli2020lora,yavascan2021analysis,testi2025packet} to abstract the specific type of physical layer being considered, aiming at results that can be applied to a wide class of systems, and has been proven to correctly identify the fundamental trade-offs that arise in the presence of random access contention.
\end{remark}

Interestingly, one could construct a single-sensor model whose performance is equivalent to the full system: this single-sensor equivalent summarizes the spatial aspect into the accuracy of individual observations, which results from the measurement distortion function described above, and the temporal aspect in the throughput \ps. 
Once the decentralized transmission strategy is established, this equivalent model allows for an abstract performance analysis based on parameters that depend on spatial patterns and medium access design.

\begin{remark}
The existence of a single-sensor equivalent depends only on the memoryless nature of the transmission process, and would also hold in the presence of distance-dependent reception or more complex channel models. Since the focus of the present work is on the spatial accuracy of the readings, the model we target allows us to factorize the effect of \ps\ and of the accuracy of the received measurements. In the general case, the joint distribution of accuracy and reception probability would need to be considered.
\end{remark}

In the remainder of the analysis, we will often focus on a generic slot $n$ over which the receiver decodes an incoming message (singleton). Denote this event as $\mathcal R_n$ and note that it occurs with probability $\ps$. For the sake of compactness, and to simplify notation, we will indicate the distribution of a r.v. $A_n$ conditioned on $\mathcal R_n$ as $\pb(a_n) = \pr[ A_n = a_n \given \mathcal R_n ]$.

In this context, our aim is to characterize the uncertainty the receiver has about the current state of the source. The metric is relevant for a number of IoT applications, e.g., dealing with actuation and decision-making, and, as will be illustrated later, is an interesting proxy to capture both the temporal and spatial components of information freshness for the system model under study. In the remainder, we will tackle two possible implementations at the receiver, dubbed \emph{forgetful} and \emph{full-sequence based} (FS). In both cases, we will capture the uncertainty considering information available at the application layer and assume that only the content of successfully decoded messages is forwarded from the lower layers of the receiver. Moreover, we assume that the number of users in the coverage area, the channel access probability, and the source statistics, as well as the reliability law in \eqref{eq:powerLaw}, are known at the receiver.\footnote{Assuming knowledge of \nodes, \peras\ and \pTx\ is reasonable in practical systems. The possibility of remote source monitoring without prior knowledge of source statistics has also been recently studied, e.g., in \cite{Asgari25}.}

\begin{remark}
    The modeling approach presented is inspired by IoT applications, in which control decisions are taken at the application layer, and little to no cross-layer information is exchanged, e.g., presence of idle vs. collided slots. To leverage this additional knowledge, which could help reduce receiver uncertainty by defining transmission policies linked to the source evolution \cite{Cocco23_JSAIT,Munari25_ISIT}, specific cross-layer interfaces must be implemented. As we target practical system implementations, we do not tackle this aspect in the present work.
\end{remark}

\subsection{Receiver entropy}
Let $\Rcn$ denote the channel output that the receiver observes at time $n$. 
The r.v. has alphabet $\mathcal X \cup \{\bigast\}$, as
\Rcn\ belongs to the Markov process state space $\mathcal X$ when the receiver decodes a packet containing a reading of the tracked process, whereas symbol $\bigast$ is used to represent the output of a slot in which no packet is decoded.
In this case, the uncertainty at time $n$ given the current realizations of the involved processes is 
\begin{align}
    \condent(\rcnvec) := H(\Mcn \given \Rcnvec = \rcnvec),
    \label{eq:cond_ent_def}
\end{align}
where we recall that $Y^n$ is the sequence of all past observations up to and including slot $n$. 

To capture average system performance, we consider the conditional entropy on the source state given the random vector $Y^n$, $H(X_n\given Y^n)$. Specifically, we are interested in its limiting behavior as $n$ grows, denoted as
\begin{align}
    \entFun := \lim_{n\to\infty} H(X_n\given Y^n).
    \label{eq:ent_mindful_def}
\end{align}
The existence of the limit is proven in Appendix \ref{app:lemma1}.

\begin{remark}
    The considered metric characterizes the uncertainty on the true state of the process conditioned on the available observations. As such, it is not directly tied to any specific estimator, and shall not be confused with the uncertainty on the estimate $\hat X_n$ the receiver may extract (i.e, with $H(\hat X_n)$). On the other hand, $H(X_n\given Y^n)$ is by definition driven by the conditional distribution $p(x_n\given y^n)$, which can be used to implement a maximum a posteriori (MAP) estimator, minimizing the probability of error. The value of $\condent(y^n)$ can then be used to gauge the level of uncertainty at the receiver, e.g., when deciding whether to take an action as response to the source state based on the MAP estimate. 
    It shall also be noted that the metric provides broader insights on the problem of remote monitoring and estimation, giving bounds on the average error probability $P_e$. Specifically, for the binary source $X_n$ under study, we have
    \begin{align}
       H_b^{-1}( H(X_n\given Y^n) ) \stackrel{(a)}{\leq} P_e \stackrel{(b)}{\leq} \frac{1}{2} H(X_n\given Y^n),
       \label{eq:bounds}
    \end{align}
    where $H_b(p) = -p\log p -(1-p)\log(1-p)$ is the binary entropy function. Within \eqref{eq:bounds}, (a) follows from Fano's inequality \cite{Cover_Thomas}, characterizing a lower bound on $P_e$ for any estimator. Even more relevantly, Kovalevsky's result in (b) bounds from above the average error probability of a MAP estimator \cite{kovalevsky,Feder94_TIT}.\footnote{The bound is valid for $P_e\leq 1/2$, which is always true for the considered binary $X_n$.} 
    The inequalities in \eqref{eq:bounds} corroborate the use of the conditional entropy $H(X_n\given Y^n)$ as metric for protocol design in remote monitoring applications, as its reduction directly improves the attainable performance in terms of error probability.
\end{remark}

\subsection{Age of Information (AoI)}
\label{sec:aoi}
The metric defined in \eqref{eq:ent_mindful_def} aims to capture how both the spatial and temporal freshness components of the information available to the receiver influence its uncertainty. In the remainder, it will also be useful to consider a measure of the sole temporal dimension. This is provided by the instantaneous \ac{AoI} $\Delta_n$, defined as the number of slots since the reception of the latest packet, and by its average $\avgAge := \mathbb E[\Agen]$. Recall that in the system model considered: \textit{(1)} a message is generated at the time of transmission; and, \textit{(2)} the slotted ALOHA channel access probability is the same for all nodes and independent of the collected reading. The average AoI can be characterized in terms of the success probability in~\eqref{eq:psucc} as~\cite{yates2017status}
\begin{align}
    \avgAge = \frac{1}{2} + \frac{1}{\ps}.
    \label{eq:age}
\end{align}

\section{Analytical Characterization of the Receiver Uncertainty: Forgetful Case}

\label{sec:forgetful_analysis}

The first receiver type we consider monitors the source solely based on the last received message, i.e., it is oblivious of previously received information. Following this approach, the receiver in time slot $n$ has knowledge about: \textit{(1)}, the time elapsed since the last reception; and \textit{(2)}, the content of the last decoded packet.
Recalling that we assume that transmitting nodes sample the process at the beginning of the slot used to attempt delivery, \textit{(1)} also corresponds to the current \ac{AoI} at the receiver \Agen. Moreover, we indicate the reading contained in the last decoded message as of time $n$ as $\lastRcn \in \mathcal X$. In other words, if a new reading is received over a slot, the quantity is reset to the obtained update, whereas it retains the value of the last retrieved message if no packet is decoded over the current time unit. The uncertainty of the \emph{forgetful} receiver in slot $n$, given that a specific realization is observed, is then captured by the entropy
\begin{align}
    \condentforg(\lastrcn,\agen) := H(\Mcn \given \lastRcn = \lastrcn, \Agen=\agen).
    \label{eq:condent_forgetful}
\end{align}
The quantity $\condent_f$ is relevant from an operational point of view, as it can serve as a basis for properly gauging current knowledge or making actuation decisions. Note indeed that, in the forgetful case, the posterior distribution $p(x_n\given w_n,\delta_n)$ suffices to derive a MAP estimator, and the entropy metric can provide a measure of the uncertainty on the true source state to be used in combination with the available estimate.
We also note that $\condent_f$ reduces to the uncertainty of information introduced in~\cite{Liew22_TIT,Liew24_TIT} in the case of perfectly reliable readings. 
In this setting, we are interested in the metric $\entFun_f$, capturing the limiting behavior of the conditional entropy $H(X_n\given W_n,\Agen)$. We characterize it with the following result.
\begin{lemma}
    For the forgetful receiver, we have
    \begin{align}
        \entFun_f := \lim_{n\to\infty} H(X_n\given W_n,\Agen) = \sum_{w_n,\agen} \condent_f(w_n,\agen) p(w_n,\agen),
            \label{eq:ent_forgetful}
    \end{align}
    where $p(w_n,\agen)$ is the joint stationary distribution of the last received symbol $W_n$ and the AoI $\Agen$.
\end{lemma}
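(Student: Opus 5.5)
We write the finite-$n$ conditional entropy as the average of the instantaneous uncertainty:
\begin{align}
H(X_n\given W_n,\Agen) = \sum_{w,\delta} \condent_f^{(n)}(w,\delta)\, p_n(w,\delta),
\end{align}
where $p_n$ is the joint law of $(W_n,\Agen)$ at time $n$. The superscript on $\condent_f^{(n)}$ is there only to show that we have not yet proved it is time-invariant. The argument then has three steps: show that $\condent_f^{(n)}(w,\delta)$ does not depend on $n$ once $n>\delta$; show that $p_n(w,\delta)$ converges to a stationary law; and exchange the limit with the infinite sum over $\delta$.

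\textbf{Step 1: the posterior is time-invariant.} Condition on $W_n=w$ and $\Agen=\delta$. The last successful slot is $n-\delta$, and it carried a reading $Z_{n-\delta}=w$ from a node in some region $i$. Because the transmission process is memoryless and independent of the source, the following hold:
\begin{itemize}
\item the region of the successful sender is distributed according to $\pb(i)$, independently of $X$;
\item the reading satisfies $\pb(w\given x_{n-\delta})=\sum_i \pb(i)\,[\lambda_i\mathbbm 1(w=x)+(1-\lambda_i)\mathbbm 1(w\neq x)]$;
\item the event of no successes in slots $n-\delta+1,\dots,n$ is independent of $X$.
\end{itemize}
Hence
\begin{align}
p(x_n\given w,\delta)\propto\sum_{x} p_{n-\delta}(x)\,\pb(w\given x)\,[T^{\delta}]_{x,x_n}.
\end{align}
This depends on $n$ only through $p_{n-\delta}(x)$, the marginal of the source at time $n-\delta$. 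By ergodicity of $T$, $p_{n-\delta}(x)\to\pi_x$. If the chain is started stationary, equality holds for every $n$; otherwise $\condent_f^{(n)}(w,\delta)\to\condent_f(w,\delta)$ pointwise, by continuity of entropy over a finite alphabet.

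\textbf{Step 2: convergence of the joint law.} For $n>\delta$ the AoI satisfies $\pr[\Agen=\delta]=\ps(1-\ps)^{\delta}$, which is already geometric and independent of $n$ (with $\delta$ counted from $0$ at a reception slot, as in the definition of $\Agen$). The conditional law of $W_n$ given $\Agen=\delta$ is
\begin{align}
\sum_x p_{n-\delta}(x)\,\pb(w\given x),
\end{align}
which converges to $\sum_x\pi_x\pb(w\given x)$. So $p_n(w,\delta)\to p(w,\delta)$ pointwise, and $p(w,\delta)$ is the stationary joint distribution referred to in the statement. The event $\{$no reception in $[0,n]\}$ has probability $(1-\ps)^{n+1}\to0$. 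Its contribution is bounded by $\log_2|\McA|$ times that probability, so it vanishes in the limit.

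\textbf{Step 3: exchanging limit and sum (the main obstacle).} The sum over $\delta$ is infinite, so pointwise convergence of the terms is not enough on its own. Each summand is bounded by
\begin{align}
\condent_f^{(n)}(w,\delta)\,p_n(w,\delta)\le \log_2|\McA|\;\ps(1-\ps)^{\delta},
\end{align}
and this bound is summable and uniform in $n$. Dominated convergence (Tannery's theorem) therefore lets us pass the limit inside the sum, which gives \eqref{eq:ent_forgetful}.

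The delicate part is Step 1: one must argue carefully that conditioning on $\Agen=\delta$, that is, on the pattern of successes and failures, carries no information about the source. This follows because the ALOHA decisions, the erasures and the node positions are all independent of $X^n$. It is also where the single-sensor equivalent with parameters $\ps$ and $\pb(i)$ is used.
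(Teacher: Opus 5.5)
Your proof is correct, but it takes a different route from the paper.

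\textbf{The paper's route.} The paper's appendix argues structurally. $\Delta_n$ is an ergodic countable-state chain, and $(X_n,W_n)$ is ergodic by the argument used for $(X_n,Y_n)$ in the appendix on the definition of $\tilde H$. Hence the joint chain $(X_n,W_n,\Delta_n)$ is ergodic. By \emph{assuming} a stationary initial law, $H(X_n \mid W_n,\Delta_n)$ becomes constant in $n$, so the limit exists trivially and equals the stationary average.

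\textbf{Your route.} You instead compute the time-$n$ joint law explicitly through the single-sensor equivalent. You prove pointwise convergence of both $h_f^{(n)}$ and $p_n$, and then swap limit and sum using the geometric envelope $\log_2|\mathcal X|\,p_s(1-p_s)^{\delta}$. This costs a dominated-convergence step, but it buys three things:
\begin{itemize}
\item It covers arbitrary (non-stationary) initialization and the event of no reception yet, which the paper removes by assumption.
\item It gives the stationary law in closed form,
$p(x,w,\delta)=p_s(1-p_s)^{\delta}\sum_{x'}\pi_{x'}\,\overline{p}(w\mid x')\,[T^{\delta}]_{x',x}$.
\item It does not rely on the paper's claim that $\Delta_n$ is independent of $(X_n,W_n)$ with a product-form stationary law. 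That claim is not literally true: with perfectly reliable readings, $X_n=W_n$ surely when $\Delta_n=0$, but not when $\Delta_n\geq 1$.
\end{itemize}
Your formula shows that only the $(W_n,\Delta_n)$ marginal factorizes. That is exactly what the lemma and the later identity $p(w_n,\delta_n)=p(w_n)p(\delta_n)$ require.

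\textbf{What the paper's route buys.} It is shorter and reuses the ergodicity argument already given for $(X_n,Y_n)$. Its conclusion still stands despite the product-form slip: the joint chain is irreducible and aperiodic with the (non-product) stationary law above, hence positive recurrent.
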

\begin{IEEEproof}
   See Appendix \ref{app:lemma2}.
\end{IEEEproof}

While clearly suboptimal, this type of receiver is of interest for a number of reasons. Firstly, it allows for the derivation of compact closed-form expressions for the metrics of interest. Moreover, its discussion will be instrumental to introduce the fundamental trade-offs that characterize the system, such as the effect of tuning the coverage radius by preventing nodes farther than a certain distance from transmitting, which reduces the system throughput, but may increase the receiver's accuracy. Secondly, the solution epitomizes the behavior of practical \ac{IoT} systems, in which low-complexity approaches are often preferred. In this perspective, the values of $\condentforg(\lastrcn,\agen)$ could easily be pre-computed and stored in a look-up table, allowing the receiver to gauge the current level of the entropy in \eqref{eq:condent_forgetful} without any computational effort. Finally, the forgetful receiver provides a meaningful benchmark when readings are reliable, as clarified by the following result.
\begin{remark}\label{rem:ideal} 
Following the data processing inequality, we have $\condentforg(\lastrcn,\agen)\geq \condent(\rcnvec)$, as $(\lastrcn,\agen)$ can be deterministically derived from $\rcnvec$. In this case, packet receptions represent renewals from the receiver's perspective, as all information about previous steps is forgotten. If sensor readings are always reliable, the forgetful receiver minimizes the entropy, in the sense that $ \condentforg(\lastrcn,\agen)=\condent(\rcnvec)$, as the next steps of the process depend only on the last observed state thanks to the Markov property.
\end{remark}

\begin{example}
\label{ex:forgetful}
    Consider the tracking of a symmetric source ($\eta=1$), with transition probability \mbox{$q=0.05$}. Two examples of how $\condentforg(\lastrcn,\agen) = H(\Mcn\given \lastRcn=\lastrcn,\Agen=\agen)$ may evolve over time are reported in \figr\ref{fig:timeline}, assuming a reliability power-law coefficient $\alpha=0.02$, $\peras=0.1$, and setting a transmission probability $\pTx=10^{-4}$. In the first case, \figr\ref{subfig:R6}, a smaller radius of $\radMax=6\rad$ was considered, whereas the rightmost plot, \figr\ref{subfig:R15}, was obtained for a wider coverage of $\radMax=15\rad$, with $\rad=10$~m. The total number of nodes was scaled proportionally to the area, maintaining the same average number of nodes per unit area, which we denote as $\dens=\frac{\nodes}{\pi\radMax^2}$. In both examples, $\dens\simeq0.05$, up to a negligible error due to the integer number of nodes. The example hints at a key trade-off. On the one hand, selecting a smaller radius leads to more sporadic update deliveries (corresponding to resets of $\condentforg(\lastrcn,\agen)$), and the uncertainty tends to grow for longer periods of time. 
    Whenever the receiver does not retrieve messages for some time, such as in the central part of \figr\ref{fig:timeline}(a), its uncertainty tends to converge to the stationary entropy of the source, i.e., $ \ent(X) = -\pi_0 \log_2 \pi_0 - \pi_1 \log_2 \pi_1 = 1 \text{ [bit]}$. This effect is much more subdued in \figr\ref{subfig:R15}, as the presence of more nodes in the system results in an improvement of the frequency of message delivery, with a lower \ac{AoI} and an uncertainty that is more often brought back to its minimum. On the other hand, such reset value gets higher as \radMax\ increases. In fact, incoming messages are more likely to be generated by devices farther away from the origin, containing less reliable readings as per \eqref{eq:powerLaw}, leaving the receiver with a higher residual uncertainty. The critical balance of these two factors will determine system performance, epitomizing the role of spatial and temporal information freshness.
\end{example}

\begin{figure}
    \centering
    \subfloat[Coverage range $\radMax = 6\rad$, $\nodes=565$.\label{subfig:R6}]{
        \includegraphics[width=.26\columnwidth]{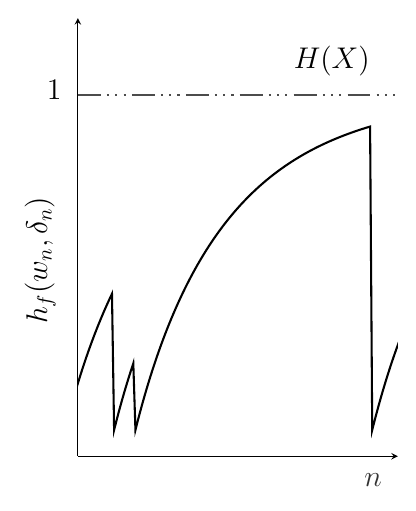}}
    \hspace*{3em}
    \subfloat[Coverage range $\radMax{=}15\rad$, $\nodes=3534$.\label{subfig:R15}]{
        \includegraphics[width=.26\columnwidth]{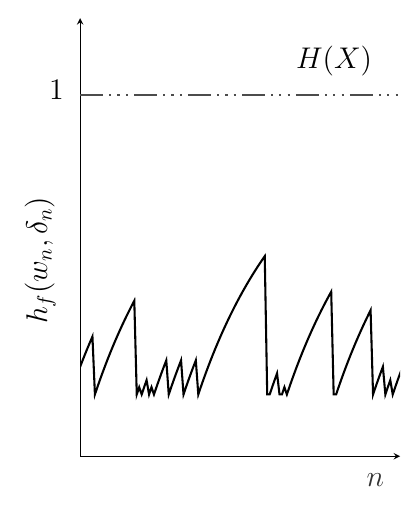}}
    \caption{Examples of time evolution of $\condent_f(\lastrcn,\agen)$ when tracking a symmetric source. The horizontal line reports the value of the stationary source entropy, $H(X) = -\pi_0\log_2 \pi_0 - \pi_1 \log_2 \pi_1$.
    The plot was generated for $\alpha=0.02$, $\rad=10$m, $q=0.005$, $\dens\simeq0.05$, $\pTx=10^{-4}$, $\peras=0.1$.}
    \label{fig:timeline}
    \vspace{-1em}
\end{figure}

\subsection{Derivation of the average uncertainty, $\entFun_f$}

To characterize the receiver uncertainty, we focus first on a generic time $n_0$ over which a message is successfully decoded. Accordingly, the slot sees the reset of: \textit{(1)}, the AoI value to $\Delta_{n_0}=0$, and \textit{(2)}, $\lastRc_{n_0}$ to the value contained in the incoming reading. Applying Bayes' rule, the distribution of the state of the tracked process conditioned on the received reading can be expressed as 
\begin{align}
    \pb(\mc_{n_0}\given \lastrc_{n_0}) = \frac{\pb(\lastrc_{n_0}\given \mc_{n_0}) \, \pi_{\mc_{n_0}}}{\sum_{\mc_{n_0}^{\prime}\in \mathcal X} \pb(\lastrc_{n_0}\given \mc_{n_0}^{\prime}) \, \pi_{\mc_{n_0}^{\prime}}},
    \label{eq:pXnGivenYnReset}
\end{align}
where we recall that the notation $\pb(\cdot)$ was introduced in Sec. \ref{sec:sysModel}-A to denote a distribution conditioned on having decoded a packet during the current slot. Within \eqref{eq:pXnGivenYnReset}, $\pi_{\mcn}$ is the stationary distribution of \Mcn, introduced in Sec. \ref{sec:sysModel}, and we have used the fact that the event of successfully receiving a packet is independent of \Mcn\ for the transmission policies under study.
In turn, $\pb(\lastrc_{n_0}\given \mc_{n_0})$ can be conveniently expressed in terms of the r.v. $D_{n_0} \in\mathcal K$, which captures the region from which the message originated. Leaning on the law of total probability, and on the fact that transmissions are independent on the state of the source:
\begin{align}
    \pb(\lastrc_{n_0}\given \mc_{n_0}) = \sum_{\dist_{n_0}=0}^{K-1} \pb(\lastrc_{n_0}\given \mc_{n_0},\dist_{n_0}) \, \pb(\dist_{n_0}).
    \label{eq:pYnGivenXnReset}
\end{align}
Following the urn model from Sec.~\ref{ssec:model}, we have
\begin{align}
    \pb(\dist_{n_0}) =  \frac{\mathbb{E}[\nodes_i]}{\nodes}=\frac{2\dist_{n_0}+1}{K^2}.
    \label{eq:pDnGivenXn}
\end{align}
In turn, the first factor within the summation in \eqref{eq:pYnGivenXnReset} captures the probability that the decoded message contains value $\lastrc_{n_0}$, conditioned on the actual state of the process and the reliability of the sender. Accordingly, we can write in a compact form
\begin{align}
    \pb(\lastrc_{n_0}\!\given\!\mc_{n_0},\dist_{n_0}) =  \lambda_{\dist_{n_0}}\mathbbm{1}(\mc_{n_0},\!\lastrc_{n_0})
     + (1{-}\lambda_{\dist_{n_0}})\mathbbm{1}(\bar{\mc}_{n_0},\!\lastrc_{n_0}),
     \label{eq:pYnGivenDnXn}
\end{align}
where $\mathbbm{1}(a,b)$ is the indicator function, taking value $1$ if $a=b$ and $0$ otherwise. 
For the power-law in \eqref{eq:powerLaw}, combining \eqref{eq:pDnGivenXn} and \eqref{eq:pYnGivenDnXn}, we obtain
\begin{equation}
    \label{eq:pYnGivenXnReset_law}
    \pb(\lastrc_{n_0}\given \mc_{n_0}) = \sum_{\mathclap{\dist_{n_0}=0}}^{\mathclap{K-1}}\ \frac{2\dist_{n_0}+1}{K^2(1+\dist_{n_0} \, \rad)^{\alpha}}, \text{ if } \lastrc_{n_0}=\mc_{n_0}
\end{equation}
and its complementary value for the case $\lastrc_{n_0}\neq \mc_{n_0}$.
Plugging \eqref{eq:pYnGivenXnReset_law} into \eqref{eq:pXnGivenYnReset} thus provides the statistics available at the receiver on the state of the process of interest upon reception of a packet. Recalling that $\pb(\mc_{n_0}\given\lastrc_{n_0}) = p_{\Mcn\given\lastRcn,\Agen}(\mc_{n_0}\given\lastrc_{n_0},0)$, the result allows to derive the general probability $p(\mcn\given \lastrcn,\agen)$ at any slot $n=n_0+\agen$ between $n_0$ and the successive reception. This is obtained as the $\agen$-step evolution of the transition matrix $T$ with initial distribution $\pb_{\Mcn\givenS\lastRcn}(0\given\lastrc_{n_0})$ and $\pb_{\Mcn\givenS\lastRcn}(1\given\lastrc_{n_0})$.

By its definition in~\eqref{eq:condent_forgetful}, the uncertainty of the forgetful receiver on the current state of the process given the currently available knowledge can then be computed as
\begin{equation}
    \condent_f(\lastrcn,\agen) = -\!\!\sum_{\mcn\in\mathcal X} p(\mcn\given\lastrcn,\agen) \log_2 p(\mcn\given\lastrcn,\agen).
    \label{eq:condEntropy_def}
\end{equation}

Finally, we derive the conditional entropy $\entFun_f$. From \eqref{eq:ent_forgetful}, this can be done computing the PMF $p(\lastrcn,\agen)$, capturing, for a generic time slot $n$ under stationary conditions, the joint probability of having decoded the last message \agen\ slots ago and of its content being \lastrcn. Observing that packet receptions are i.i.d. across slots with probability \ps, the current \ac{AoI} value is independent of the collected reading, as each node's transmission probability does not depend on its age or position.
Consequently, $p(\lastrcn,\agen) = p(\lastrcn) p(\agen)$. Moreover, $p(\lastrcn)$ can simply be computed from \eqref{eq:pYnGivenXnReset} as
\begin{align}
    p(\lastrcn) = \sum\nolimits_{\mcn\in\mathcal X} \pb(\lastrcn\given \mcn) \,\pi_{\mcn}.
    \label{eq:pYnStat}
\end{align}
In turn, the receiver experiences an \ac{AoI} $\agen\geq 0$ at the start of slot $n$ if the last success was followed by $\agen$ failures, so that:
\begin{align}
    p(\agen) = \ps (1-\ps)^{\agen}.
    \label{eq:pAgenStat}
\end{align}
Combining \eqref{eq:pYnStat}-\eqref{eq:pAgenStat} with \eqref{eq:condEntropy_def} provides an exact formulation of $\entFun_f$.


\begin{remark} The approach followed in the analysis above considers an i.i.d. distribution for the random variable $D_n$ across slots. This implicitly corresponds to independently re-drawing the positions of all nodes at each time unit, and thus introduces an approximation of the behavior of a practical topology in which the locations of the devices are static. The validity of this assumption will be discussed in Sec.~\ref{sec:results_forgetful}.
\label{remark_iid}
\end{remark}

\subsection{Numerical results and key trends}
\label{sec:results_forgetful}

To gauge the impact of spatio-temporal information freshness on the forgetful receiver, we focus on a population of devices spread over a variable coverage area \area, scaling the number of nodes by maintaining a fixed average number of nodes per unit area $\dens\simeq 5\cdot 10^{-2}$ [nodes/m$^2$], up to a negligible approximation error given by the constraint that $\nodes$ must be an integer number, transmitting with probability $\pTx = 10^{-4}$ at every slot.\footnote{The chosen transmission probability is representative for IoT applications, corresponding to the behavior of a LoRa device sending packets of duration $300$ ms roughly once per hour.} A sent packet is erased with probability $\peras = 0.1$, and the tracked Markov process \Mcn\ transitions from state $0$ to $1$ with probability $q=5\cdot 10^{-3}$. Unless otherwise specified, we assume that the nodes produce readings following the reliability law in \eqref{eq:powerLaw}, where $\alpha = 0.02$ and $\rad=5$ m.

\begin{figure}
    \centering
    \includegraphics[width=.63\columnwidth]{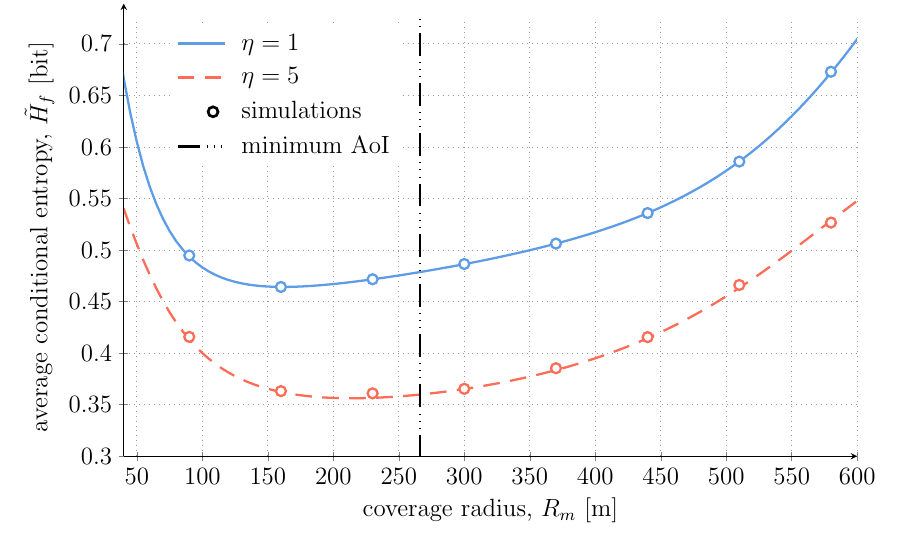}
    \caption{Conditional entropy $\entFun_f$ vs. coverage radius $\radMax$, when tracking a symmetric (dashed line) or an asymmetric (solid line) process, considering a forgetful receiver. The markers report the outcomes of simulation results, and the vertical line indicates the coverage radius that would minimize \ac{AoI}.}
    \label{fig:HvsRad}
\end{figure}

Some first fundamental insights emerge from \figr\ref{fig:HvsRad}, where we report the conditional entropy $\entFun_f$ against the coverage radius \radMax. The solid line reports the case of monitoring a symmetric process ($\eta=1$), whereas the dashed curve was obtained for an asymmetric case with $\eta = 5$. Circular markers show the results of detailed network simulations obtained at a subset of \radMax\ values, and aimed at verifying the approximation introduced in our analysis (see Remark \ref{remark_iid}). Specifically, for each simulation, the positions of the $\nodes$ nodes were drawn uniformly within \area, and their location was kept fixed for the whole duration of the run ($5\cdot10^4$ slots). Throughout the run, the Markov process evolved according to the transition matrix \TransMat, and the time evolution of $\condent_f(\lastrcn,\agen)$ was computed via \eqref{eq:condEntropy_def}. The value of $\entFun_f$ was then obtained as the time average of $\condent_f$ as per \eqref{eq:condent_forgetful}. Each of the reported marked points shows in turn the average of $50$ independent topologies. The excellent match between analysis and simulations under all conditions confirms the tightness of the modeling assumptions.

More interestingly, \figr\ref{fig:HvsRad} reveals the existence of an optimal coverage range to minimize $\entFun_f$. When $\radMax$ is too small, the updates are sporadic as few nodes fall within \area. In this case, even if incoming messages are highly reliable in view of the proximity of active devices to the origin, the receiver's uncertainty is dominated by the lack of timely (fresh) updates. However, if the coverage radius is increased beyond a critical point $\radMax^*$, the conditional entropy grows once more (rightmost region of the plot). In such conditions, even if messages are received more often, their reliability decreases sharply due to the higher probability of obtaining readings from nodes farther away from the origin, leading to a greater uncertainty. Striking a proper balance between the temporal and spatial components of information freshness is thus paramount and shall represent a key criterion in the design of the system.\footnote{The coverage radius could, for instance, be controlled in LEO IoT systems by tuning the receiver beam-width via beamforming, or in terrestrial systems by admitting at logon only users that are received with a high enough (average) power level.} From this standpoint, it is relevant to note that the optimal radius $\radMax^*$ can be significantly smaller than the one that would be obtained considering \ac{AoI} only. This is reported for reference by the vertical dash-dotted line in the figure. From \eqref{eq:age}, such a configuration is clearly obtained by maximizing the system throughput \ps\ \cite{Yates17:AoI_SA,Munari21_TCOM}. This is achieved when the number of users in coverage satisfies $\nodes\pTx(1-\peras)=1$, that is, for an incoming channel load of $1$ [pkt/slot], corresponding to an optimal radius $(\pi\! \dens \pTx (1-\peras))^{-1/2}$.

\figr\ref{fig:HvsRad} also highlights a difference in optimal coverage when tracking symmetric and asymmetric sources, with the latter case presenting a lower uncertainty and a larger $\radMax^*$. This is due to the change in $p(x_n|w_n,\delta_n)$ in slots without an update: we know that $p(x_{n+1}|w_n,\delta_n+1)=\sum_{x_n\in\mc{X}}\TransMat_{x_n,x_{n+1}}p(x_n|w_n,\delta_n)$. Consequently, highly asymmetric sources tend to quickly return to the most frequent state, thus reducing the entropy in the absence of updates. On the other hand, the uncertainty over the value of the state after several steps from the last update is very high if $\eta=1$. This effect reduces the impact of the temporal component of information freshness, represented by the lower \ac{AoI}, as the receiver can tolerate higher values of the age without a significant increase in the entropy. The relative importance of obtaining accurate updates, which depends on the location that successful packets come from and represents the spatial component of our problem, thus increases, decreasing the optimal value of \radMax.

\begin{figure*}
    \centering
    \subfloat[Optimal coverage range.\label{subfig:opt_radius}]{
        \includegraphics[width=.47\columnwidth]{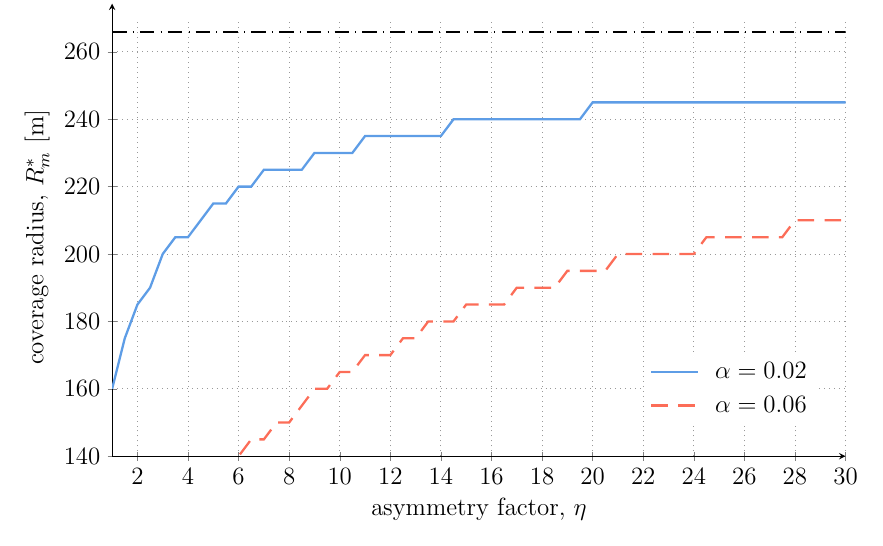}}
    \hspace*{2em}
    \subfloat[Minimum conditional entropy.\label{subfig:min_Hf}]{
        \includegraphics[width=.47\columnwidth]{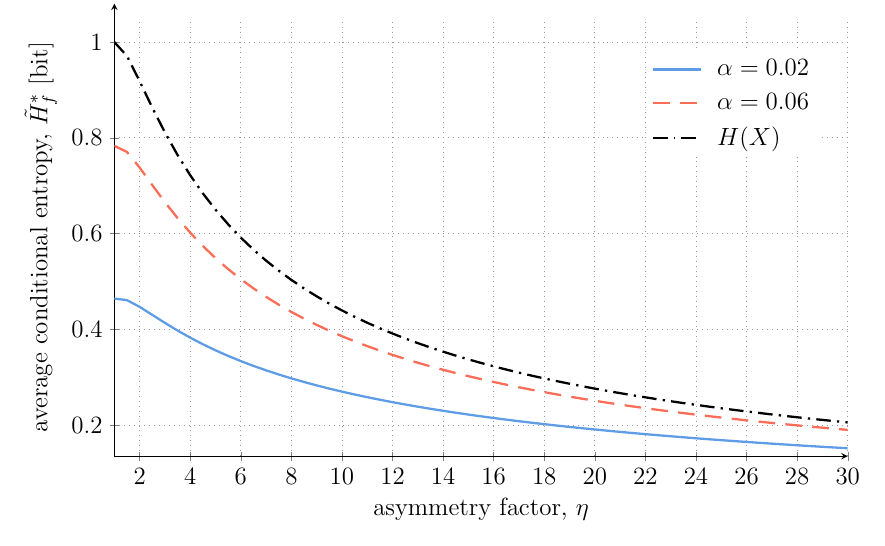}}
    \caption{Optimal coverage range $\radMax^*$ and minimum average conditional entropy $\tilde H_f^*$ obtained with $\radMax^*$ vs asymmetry factor $\eta$ for a forgetful receiver. The horizontal line corresponds to coverage for minimum AoI in (a) and to the source stationary entropy in (b).}
    \label{fig:optHandRadius}
\end{figure*}

To further inspect this phenomenon, we consider the optimal coverage radius $\radMax^*$, shown in \figr\ref{subfig:opt_radius}, and the corresponding minimum conditional entropy $\entFun_f^*$, depicted in \figr\ref{subfig:min_Hf}, against the asymmetry factor. The optimal radius increases sharply as the asymmetry grows and, eventually, $\radMax^*$ tends to converge to the value minimizing \ac{AoI}, i.e., $(\pi\! \dens \pTx (1-\peras))^{-1/2}$, since the temporal component of freshness dominates.\footnote{The stepwise behavior of the radius is due to the fact that only integer and even values were considering due to computational constraints.} A corresponding decrease in the attainable average uncertainty is experienced as $\eta$ grows, obtained by admitting more nodes in the system. \figr\ref{fig:optHandRadius} reports the trends for two different values of the exponent $\alpha$, determining the reliability of the messages incoming according to \eqref{eq:powerLaw}. It is interesting to note that for larger values of $\alpha$ the optimal coverage shrinks, as an effect of the faster degradation with distance of the quality of the updates. In this setting, trading off messages' timeliness for an increased reliability pays off. For instance, when $\eta=5$ (the same setup considered in \figr\ref{fig:HvsRad}), the optimal coverage radius reduces to less than one half of what would be optimal in terms of \ac{AoI} when $\alpha=0.06$. \figr\ref{subfig:min_Hf} also reports the stationary entropy, i.e., the entropy of the steady-state distribution of the process. As $\eta$ grows, the three values tend to converge, but the gains with $\alpha=0.02$ are quite impressive, further showing the viability of a forgetful approach in systems with significant hardware constraints.

\section{Full-Sequence Based Receiver}\label{sec:mindful_analysis}

Consider now an FS receiver monitoring the state of the source relying on the whole history of past observations. In this case, the uncertainty is driven by the conditional distribution $p(\mcn\given y^n)$. To track this, observe that, for the transmission policy under study, $p(y^n\given x^n,y^{n-1}) = p(\rcn\given\mcn)$, so that $(\Mcn,\Rcn)$ is a \ac{HMM}. The corresponding observation probabilities can be expressed as
\begin{align}
    p(\rcn\given\mcn) = 
    \begin{cases}
        \ps \, \pb_{\lastRcn\given\Mcn}(\rcn\given\mcn),        & \rcn \in\mathcal X;\\
        1 - \ps,                                         & \rcn = \bigast.
    \end{cases}
\label{eq:emission_prob}
\end{align}
Specifically, when a message is decoded over the current slot, i.e., $\rcn\in\mathcal X$, the r.v. \lastRcn\ describing the last received value is reset to \rcn, and the AoI is reset to $0$. Thus, $p(\rcn\given\mcn)$ is captured by $\ps\,\pb_{\lastRcn\given\Mcn}$, where the latter PMF was obtained in \eqref{eq:pYnGivenXnReset_law}. On the other hand, an observation $\bigast$ is obtained whenever no message is retrieved due to either a collision or lack of transmissions, i.e., with overall probability $1-\ps$.

Leaning on this, the joint distribution of the current source value and the overall vector of collected observations can be derived via the forward recursion of the \ac{HMM} \cite{rabiner1989tutorial} as
\begin{align}
    p(\mcn,y^n) = \sum_{\mathclap{x_{n-1}\in \mathcal X}} p(x_{n-1},y^{n-1}) p(\mcn\given x_{n-1}) p(\rcn\given\mcn),\ n>1
\end{align}
with initial conditions at $n=1$ set considering the stationary distribution $\pi_x$ of the tracked source, i.e.,
\begin{align}
    p(x_1,y_1) = \sum_{x_1\in\mathcal X} \pi_{x_1}\, p(y_1\given x_1).
\end{align}
The conditional distribution of the source is then obtained as 
\begin{align}
    p(\mcn \given y^n) = \frac{p(\mcn,y^n)}{\sum_{\mcn\in\mathcal X} p(\mcn,y^n)},
    \label{eq:posterior_FS}
\end{align}
allowing us to derive the receiver uncertainty $\condent(y^n)$ via \eqref{eq:cond_ent_def}.

Before engaging in the discussion of the performance of the FS receiver, we report in the following two examples, which are instrumental to understanding the behavior of the two receivers for both symmetric and asymmetric sources. 

\begin{figure*}
    \centering
    \subfloat[FS receiver, $\eta{=1}$.\label{subfig:ex_1}]{\includegraphics[width=0.225\linewidth]{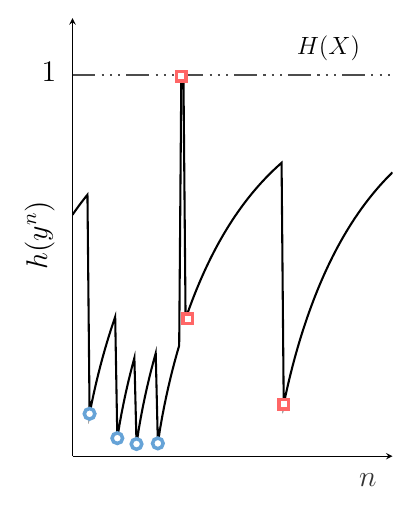}}
    \hspace{.3em}
    \subfloat[FS receiver, $\eta{=5}$.\label{subfig:ex_2}]{
    \includegraphics[width=0.225\linewidth]{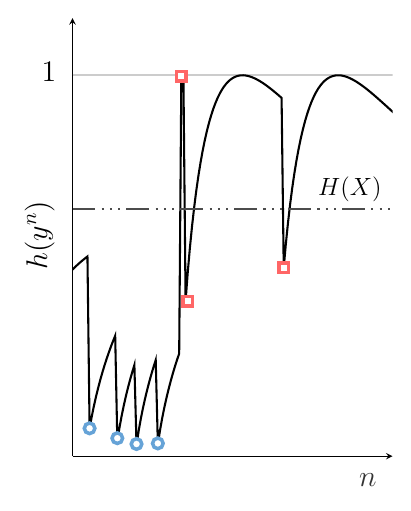}}
    \hspace{.3em}
    \subfloat[FS receiver, $\eta{=20}$.\label{subfig:ex_3}]{
    \includegraphics[width=0.225\linewidth]{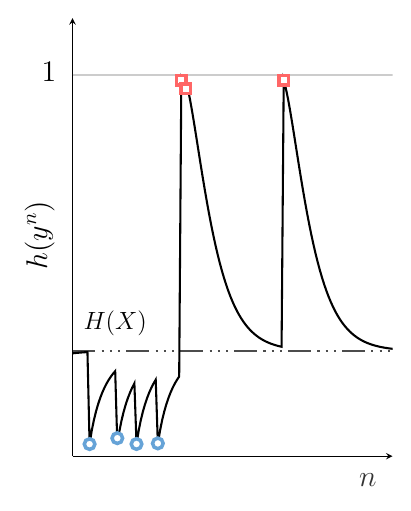}}
    \hspace{.3em}
    \subfloat[forgetful receiver, $\eta{=}5$.\label{subfig:ex_4}]{
     \includegraphics[width=0.225\linewidth]{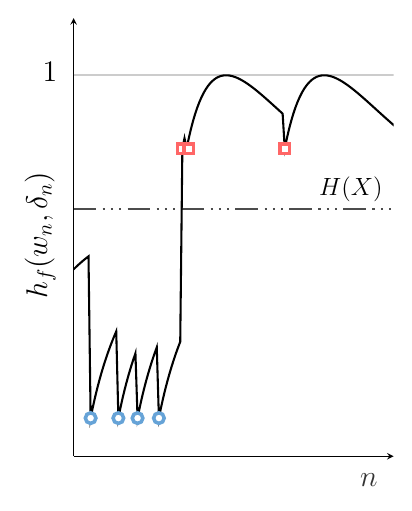}}
    \caption{Instances of time evolution of $\condent(y^n)=H(X_n|Y^n=y^n)$ and $\condent_f(\lastrcn,\agen)=H(X_n|W_n=w_n,\Delta_n=\delta_n)$, discussed in Examples \ref{ex:mindful_symm} and \ref{ex:mindful_asymm}. In all cases, the receiver observes the same set of channel outputs at the same time instants. Blue circle markers denote decoding of a message containing reading $0$, whereas red square markers report decoding a message with reading $1$. The dash-dotted line shows the source stationary entropy $H(X)$. Results obtained for $q=0.005$,  $m = 392$ ($\rho\simeq0.05$), $R=10$m, $q=0.01$, $\alpha=0.02$, $\zeta=10^{-4}$, $\peras = 0.1$.}
    \label{fig:entropy-evolution}
\end{figure*}

\begin{example}
\label{ex:mindful_symm}
    A first example of the evolution of $\condent(y^n)=H(\Mcn\given Y^n=y^n)$ over time is reported in Fig.~\ref{subfig:ex_1}, considering a symmetric source and using an FS receiver. The snapshot was obtained for a coverage range of $\radMax=50$ m, with $\alpha=0.02$, $\nodes=392$ (corresponding to $\dens\simeq0.05$), $\pTx=10^{-4}$, $\peras=0.1$. In the plot, the reception of a message containing reading $0$ is highlighted with blue circle markers, whereas reception of a $1$ is indicated with red square markers.  In the early phase of the example, after collecting several $0$ messages in a row, the receiver has a strong belief about the current state of the source, with $\mathsf P[\Mcn=0\given Y^n = y^n]$ reaching $0.948$. The subsequent reception of a (possibly unreliable) reading with value $1$ then leads to a rapid spike in the level of uncertainty, which then progressively reduces as further messages corroborate a change in the state of the source. The effect of taking all history into account is apparent even with a qualitative comparison with the simpler, more regular trend discussed in Example~\ref{ex:forgetful} for the forgetful receiver. On the one hand, uncertainty can suddenly rise when a message contradicting a strong belief arrives. On the other hand, the reset values of entropy upon decoding are not always the same as in the forgetful case, but rather strongly depend on the past collected messages.
\end{example}

\begin{example}
\label{ex:mindful_asymm}
    Figs.\ \ref{subfig:ex_2} and \ref{subfig:ex_3} present the time evolution of $\condent(y^n)$ for a FS receiver, when tracking an asymmetric source, with $\eta=5$ and $\eta=20$, respectively. The same parameters discussed in Example \ref{ex:mindful_symm} were used and the receiver experiences the same sequence of incoming readings at the same time. Clearly, as $\eta$ increases, the stationary source entropy $H(X)$, i.e., the uncertainty at which the receiver tends when not receiving updates for long time, decreases (horizontal dashed lines in the plots).
    Note that, while $\condent(y^n)$ for the tracked binary source can never exceed the stationary entropy $H(X) = 1$ in the symmetric case, the instantaneous uncertainty of the receiver can exceed the value $H(X)<1$ for larger values of $\eta$. By definition, the entropy is limited to $1$ regardless of the value of $\eta$, as $X_n$ is a binary variable.
    Let us first consider the case of a slight asymmetry in the source, as in Fig.\ \ref{subfig:ex_2}. Here, the reception of a single reading with value $1$ leads to a high uncertainty, and while the second consecutive packet reduces the uncertainty, the overall entropy is higher than in the case with $\eta=1$, since state $1$ is generally less likely. We can observe that, after receiving packets reporting the less frequent value, the entropy gradually grows as the probability that the system is actually in that state decreases over time. When the probability flips, i.e., there is a higher chance that the system is in state $0$, the entropy reaches $1$ and then decreases, asymptotically converging from above to $H(X)$. On the other hand, when the reported value is $0$, the entropy immediately after a packet is much lower, gradually increasing with time and converging to $H(X)$ from below. If the asymmetry grows significantly, as in Fig. \ref{subfig:ex_3}, which shows the trend for $\eta=20$, even multiple consecutive readings confirming that the system is in the less frequent state lead to an entropy close to $1$, as the low probability of being in state $1$ leads the FS receiver to maintain a high uncertainty. The convergence towards $H(X)$ is then much faster from state $1$ and correspondingly slower from state $0$.
    
    For completeness, \figr\ref{subfig:ex_4} reports the uncertainty $\condent_f(\lastrcn,\agen)$ of a forgetful receiver exposed to the same sequence of observations, considering a source with $\eta=5$. As discussed in Example \ref{ex:forgetful}, the effect of ignoring previously received readings results in resetting $\condent_f(\lastrcn,\agen)$ to the uncertainty of the incoming message. In the case of an asymmetric process \Mcn, this corresponds to a higher value when receiving the less likely state $1$ or a lower value when collecting a $0$.\footnote{Similar trends were noted in \cite{Liew22_TIT} for the UoI, which, as discussed, coincides with the studied metric for the forgetful receiver in the special case of perfectly reliable readings.} A comparison with \figr\ref{subfig:ex_2} provides a visual intuition of the sub-optimality of the forgetful approach.
\end{example}

\subsection{Numerical results and discussion}

Unless otherwise stated, all the parameters for the results are set as specified in Sec. \ref{sec:results_forgetful}. The uncertainty \entFun\ of the FS receiver was estimated via Monte Carlo simulations, taking the mean across multiple runs of $h(y^n)$ over sufficiently long intervals to estimate the limiting behavior in \eqref{eq:ent_mindful_def}.
\begin{figure}
    \centering
    \includegraphics[width=0.63\columnwidth]{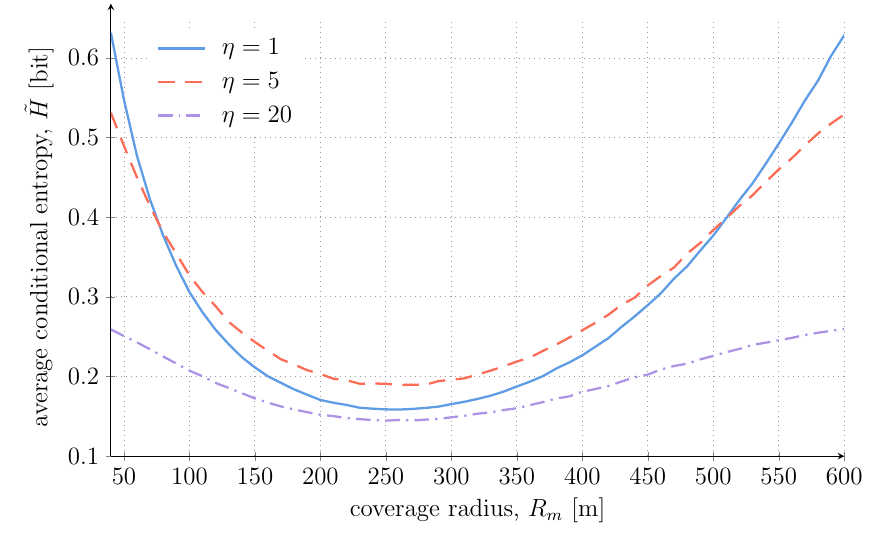}
    \caption{Average conditional uncertainty of the FS receiver, $\entFun$, vs coverage range \radMax. A binary source is monitored, considering different asymmetry factors. $\dens\simeq0.05$, $R=10$~m, $q=0.005$, $\alpha=0.02$, $\zeta=10^{-4}$, $\peras = 0.1$.}
    \label{fig:HvsRad_minfdul}
\end{figure}

As a first result, we report in \figr\ref{fig:HvsRad_minfdul} \entFun\ against the coverage range \radMax. Three types of sources are considered: symmetric ($\eta=1$, solid curve), slightly asymmetric ($\eta=5$, dashed curve), and highly asymmetric ($\eta=20$, dash-dotted curve). The fundamental trade-off driven by range selection that emerged in the forgetful case is confirmed even when all collected messages over time are taken into account. Large uncertainty is experienced both for a small coverage, due to sporadic updates, and for exceedingly large coverage due to an increase in collisions and to the reception of less reliable readings.
An interesting trend emerges considering the behavior under different values of $\eta$. In intermediate coverage ranges, i.e., when the receiver is fed with a good mix of reliable and less reliable messages, a relatively low asymmetry in the source leads to higher values of \entFun\ in comparison to the tracking of a symmetric process. The result can be explained referring to the discussion carried out in Examples \ref{ex:mindful_symm} and \ref{ex:mindful_asymm}, as the retrieval of sequences of possibly erroneous messages reporting a visit to the less likely state hinders an accurate belief at the receiver. When asymmetry increases further (e.g., $\eta=20$), the trend is more than counterbalanced by the higher predictability of the source behavior (see, e.g., \figr\ref{subfig:ex_3}). Notably, asymmetry is always beneficial in terms of \entFun\ for very low and very large coverage radii. In the former case, updates are highly reliable, so that uncertainty is reset to low values with every incoming message, and even a slight reduction of the source stationary entropy at which $\condent(y^n)$ tends to converge without new readings is useful. In the latter, most messages are poorly informative and \entFun\ will eventually converge to the source entropy $H(X)$.

\begin{figure}
    \centering
    \subfloat[Optimal coverage range.]{
        \includegraphics[width=.47\columnwidth]{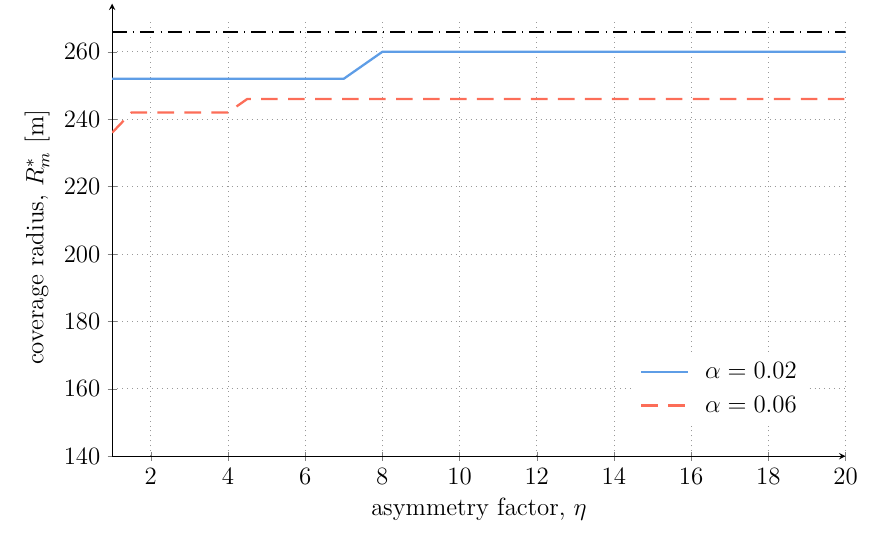}}
    \subfloat[Minimum conditional entropy.\label{subfig:forg_ent}]{
        \includegraphics[width=.47\columnwidth]{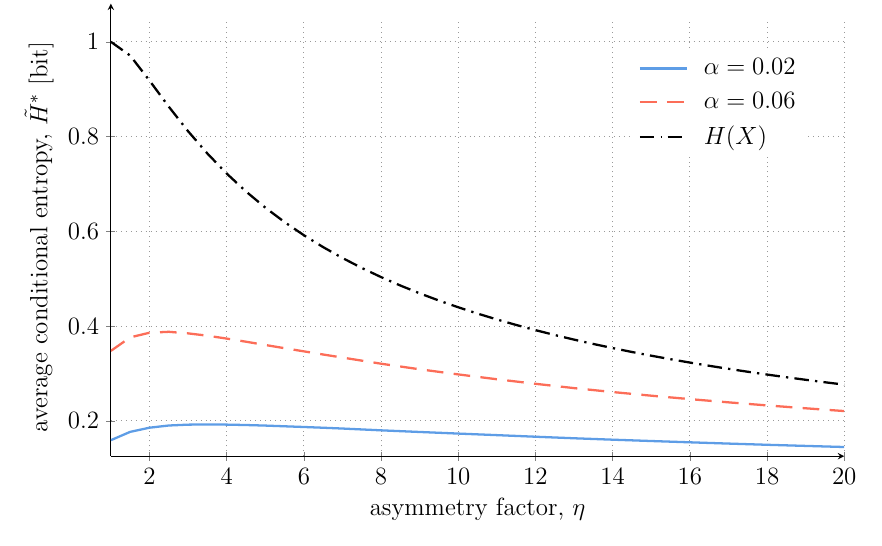}}
    \caption{Optimal coverage range $\radMax^*$ and minimum average conditional entropy $\entFun^*$ obtained with $\radMax^*$ vs asymmetry factor $\eta$ for a FS receiver. The horizontal line corresponds to coverage for minimum AoI in (a) and to the source stationary entropy in (b).}
    \label{fig:forgetful-hmm-eta}
\end{figure}

These aspects are further explored in \figr\ref{fig:forgetful-hmm-eta}, showing the optimal coverage range and the corresponding minimum conditional entropy \entFun\ against the asymmetry factor, considering an FS receiver and two values of the exponent $\alpha$ of the reading reliability function. The role played by $\eta$ is apparent, with asymmetry being detrimental up until $\eta \simeq 15$ for the set of parameters under study and $\alpha=0.02$. As expected, a larger $\alpha$ leads to higher uncertainty, and the optimal system configuration is achieved for a smaller $\radMax^*$, sacrificing the frequency of updates to favor their reliability. It is also interesting to compare the results with the trends reported in \figr\ref{fig:optHandRadius} for the forgetful case. The optimal coverage range for the FS approach is larger and converges to the value that minimizes \ac{AoI} (i.e., ${\sim}265$~m) more quickly as $\eta$ increases. Overall, the possibility to leverage beliefs on the full history of received updates allows to better cope with potentially less reliable readings, and benefits from an increase in the frequency of update receptions. In all cases, extending the coverage radius beyond $(\pi\rho\pTx(1-\peras))^{-1/2}$ is detrimental, as it would trigger an overloaded channel (i.e., load larger than $1$ pkt/slot) and result in a reduction in the number of received messages. In general, the importance of properly tuning the coverage range is also evident for an FS receiver, although the optimal setting is closer to what would be obtained considering \ac{AoI} alone. This can be seen in \figr\ref{fig:HvsRad_minfdul}, noting how uncertainty deteriorates rather quickly when \radMax\ moves away from its optimal value. We can also note that the FS receiver increases the gain over the stationary entropy even for higher values of $\alpha$, as shown in \figr\ref{subfig:forg_ent}: this further confirms that the FS receiver is able to exploit even unreliable information, reducing the entropy more than the forgetful receiver even for low values of $\eta$.

\begin{figure*}
    \centering
    \subfloat[Symmetric source ($\eta=1$).]{
        \includegraphics[width=.47\columnwidth]{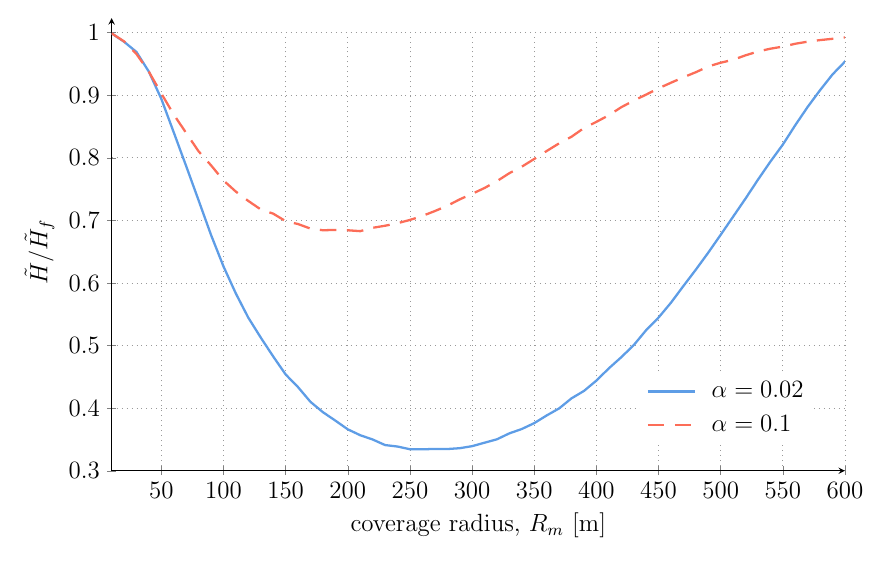}}
    \hspace*{.3em}
    \subfloat[Fixed reliability ($\alpha=0.02$).\label{subfig:fixed_rel}]{
        \includegraphics[width=.47\columnwidth]{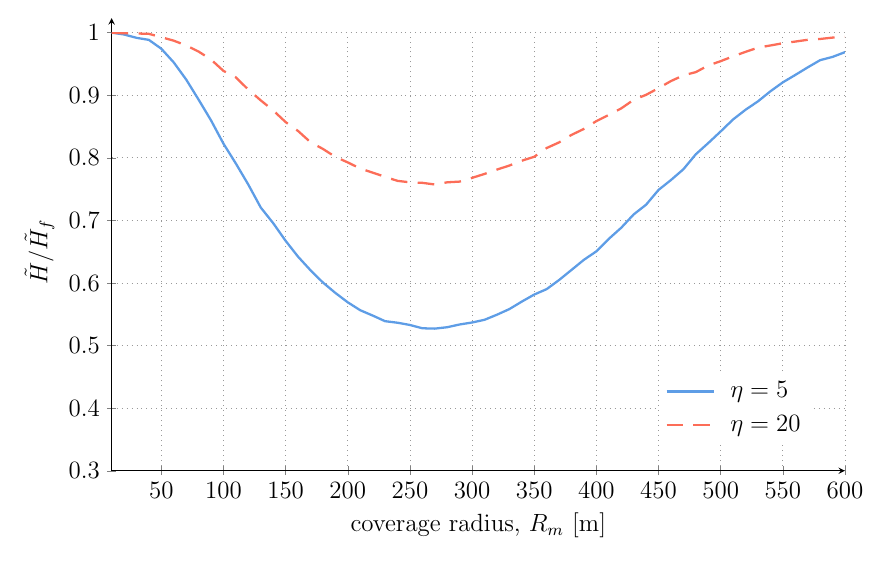}}
    \caption{Ratio of the average conditional entropy of the FS and of the forgetful estimator, i.e., $\entFun/\entFun_f$, vs coverage range \radMax. In (a), a symmetric source is tracked ($\eta=1$) and different values of $\alpha$ are considered. In (b), $\alpha=0.02$, and two values of the asymmetry factor $\eta$ for the source are shown.}
    \label{fig:ratio}
\end{figure*}

A direct comparison of the performance achieved with the two types of receiver is shown in \figr\ref{fig:ratio}, reporting the ratio $\entFun/\entFun_f$ versus the coverage radius \radMax, and considering different sets of parameters. 
In all cases, both receivers exhibit the same behavior for very small and very large coverage areas. The former condition is formally captured in Remark~\ref{rem:ideal}, as all readings are perfectly reliable for a single coverage area. As \radMax\ diverges, the channel is dominated by collisions and the uncertainty converges to $H(X)$. On the other hand, significant improvements are obtained for intermediate coverage, with a reduction of the uncertainty up to a factor $\sim 3$ attained by leveraging the whole set of collected observations $Y^n$. The most notable gains are experienced when $\alpha$ increases, as the FS receiver manages to better handle incoming unreliable readings by relying on past messages. Smaller, yet still noticeable gaps can be observed when reliability degrades less sharply with distance or in the presence of highly asymmetric sources, as shown in \figr\ref{subfig:fixed_rel}.

\begin{figure}
    \centering
    \includegraphics[width=0.63\columnwidth]{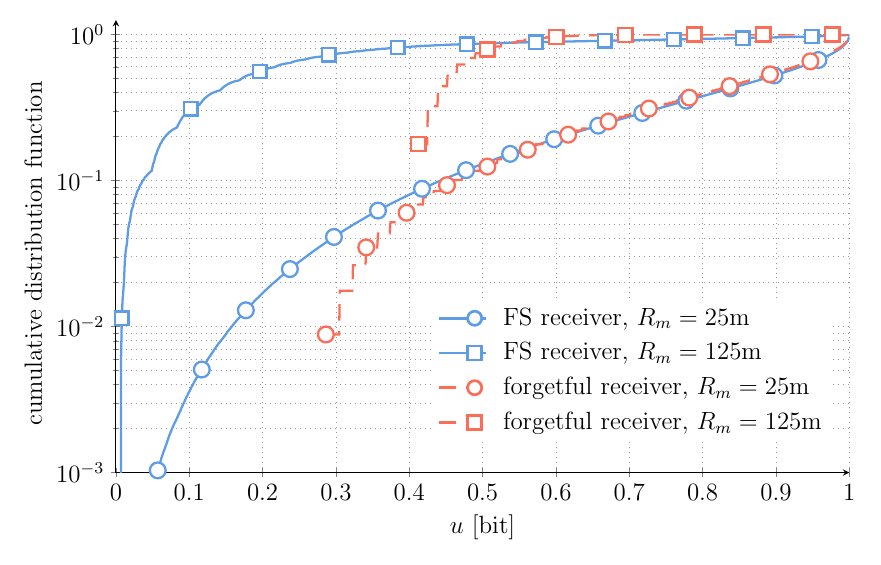}
    \caption{Cumulative distribution function of $\condent_f(\lastrcn,\agen)$ for the forgetful receiver, and of $\condent(y^n)$ for the FS receiver. Different values of coverage range are considered, tracking a symmetric source.}
    \label{fig:cdf}
\end{figure}

Further insight on the behavior of the two receivers is offered in \figr\ref{fig:cdf}, reporting the \ac{CDF} of the receiver uncertainty at a generic point in time, i.e., of $\condent_f(\lastrcn,\agen)$ (dashed lines) and $\condent(y^n)$ (solid lines). A symmetric source is considered ($\eta=1$), and results are shown for two different coverage radii: a relatively small one ($25$ m, circle markers) and a larger one ($125$ m, square markers). For the forgetful receiver, the metric can be obtained from \eqref{eq:condEntropy_def}, \eqref{eq:pYnStat}-\eqref{eq:pAgenStat} as   
\begin{align}
\mathsf P\big[ \condent(\lastrcn,\agen) \leq u \big] = \sum_{\substack{(\lastrcn,\agen) \text{ s.t.}\\ \condent(\lastrcn,\agen)\leq u}} p(\lastrcn,\agen).
\end{align}
In the FS case, the empirical \ac{CDF} was derived via Monte Carlo simulations, tracking the values of $\condent(y^n)$ over multiple episodes.
The plot pinpoints how the uncertainty of the forgetful approach is bounded from below by the entropy of the state of the source conditioned on the value of an incoming message, determined by the distribution $\pb(\mcn\given\lastrcn)$ computed in \eqref{eq:pXnGivenYnReset}. The more unreliable readings can be, i.e., the larger the radius, the higher the uncertainty of the receiver upon decoding (see Example~\ref{ex:forgetful}). The situation changes when the whole sequence $Y^n$ is accounted for. In this case, the receiver can experience a much lower uncertainty, e.g., by receiving a sequence of updates, reinforcing the belief of the source about being in a specific state (Example \ref{ex:mindful_symm}). The result is also interesting from a system design standpoint, providing guidelines on the choice of the receiver complexity and on the coverage radius when not only the average uncertainty but also the probability of not exceeding a possibly critical value for the application of interest is to be considered.

Finally, we analyze the interplay of the maximum coverage radius $\radMax$, the accuracy decay parameter $\alpha$, the transmission probability $\pTx$, and the average number of nodes per unit area $\dens$. Fig.~\ref{fig:additional_results}(a) shows the entropy of the two receivers as a function of the coverage radius under different settings: we note that increasing the number of nodes per unit area to $\dens'=2\dens$ allows the receiver to obtain more accurate information if $\radMax$ is set to a low value, as there are more sensors providing accurate readings. However, the larger number of nodes increases the risk of collision: setting a relatively high $\radMax$ results in a much higher entropy due to the lower success probability. On the other hand, 
operating with $2\dens$ and $\pTx/2$ results in almost exactly the same performance as with the original values, because doubling the number of nodes in each area and halving their transmission probability leads to approximately the same collective behavior. This also highlights a system design trade-off: a denser sensor network may be more expensive to deploy, but the lifetime of individual nodes will be longer, as their transmission rates, and, thus, their energy expenditure, will decrease accordingly without any performance loss in terms of the receiver entropy.

The effect of the reliability parameter $\alpha$ is shown in Fig.~\ref{fig:additional_results}(b), considering a fixed coverage radius and transmission probability. Following Remark~\ref{rem:ideal}, we see that the two receiver models are identical for $\alpha=0$. The FS receiver is then better able to exploit a larger number of sensors, while the entropy of the forgetful receiver quickly grows with $\alpha$ independently of the number of nodes in the area. However, even relatively low values of $\alpha$ substantially degrade the entropy of both receiver types.

\begin{figure*}
    \centering
    \subfloat[Average entropy as a function of \radMax, with $\alpha=0.02$.\label{fig:entropy-load}]{
        \includegraphics[width=.47\columnwidth]{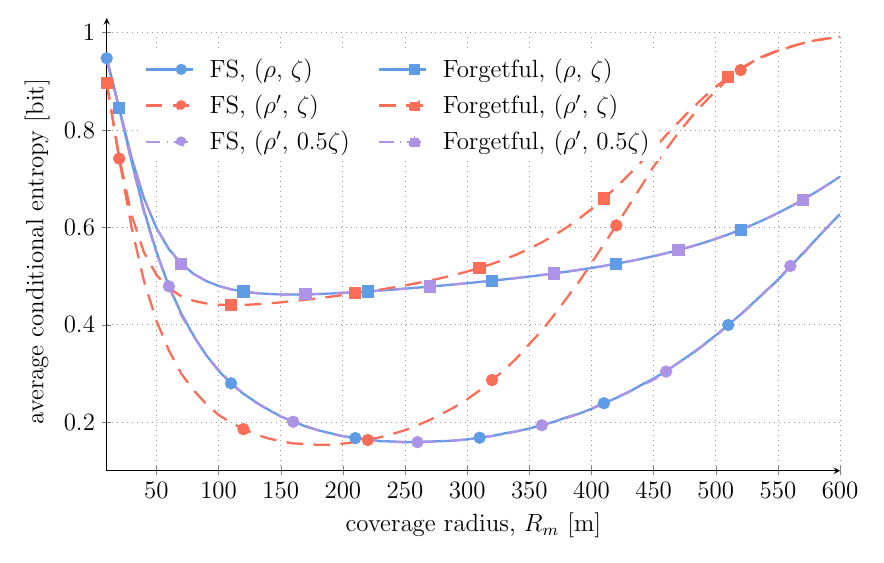}}
    \hspace*{.3em}
    \subfloat[Average entropy as a function of the reliability parameter $\alpha$ with $\nodes=2454$ and $\nodes'=2\nodes=4908$, corresponding to $\dens$ and $\dens'$ nodes per unit area, respectively, with $\radMax=125$m.\label{fig:entropy-reliability}]{
        \includegraphics[width=.47\columnwidth]{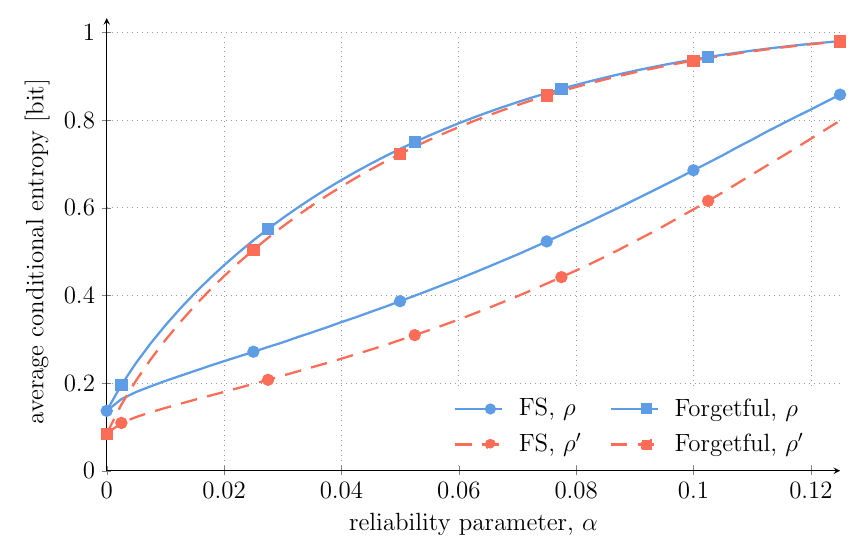}}
    \caption{Average uncertainty at the FS (\entFun) and forgetful ($\entFun_f$) receiver, with $\dens\simeq0.05$ and $\dens'=2\dens\simeq0.1$, $\zeta=10^{-4}$, $R=10$ m, $\eta=1$, $q=0.005$, $\peras=0.1$.}
    \label{fig:additional_results}
\end{figure*}

\section{System Optimization}\label{sec:optim}

The results discussed so far depended on two simplifying assumptions: \textit{(1)}, the receiver cannot rely on any geographical information for received packets, only accounting for the average reliability level of incoming readings; \textit{(2)}, all devices employ the same transmission probability $\pTx$, irrespective of their location, and thus on the reliability of their updates. 
Both aspects trigger relevant questions on whether system performance could be optimized by relaxing the working hypotheses. In the first case,  uncertainty might be reduced by properly weighting the reliability of the content of a decoded packet based on its source's accuracy. The problem is also relevant from a system design standpoint, as acquiring and reporting geographical information entails additional complexity at the device side (e.g., requiring GNSS capabilities) as well as overhead to notify the position of the transmitter in a sent message. Similarly, more complex protocols that adapt the transmission probability to the position of the source could be considered, accounting for the different accuracy as well as for the larger number of devices that populate farther regions. 

\subsection{The role of location awareness}\label{sec:locationAware}

Let us first consider the impact of position-aware nodes. To study this, we extend our model assuming that each device knows the region $\area_i$ it belongs to, and piggybacks this information whenever sending an update. Consequently, the
observation $Y_n$ has alphabet $(\mathcal X \times \mathcal K) \cup \{\bigast\}$, and the analytical framework presented for the two types of receiver we consider can be readily extended to derive the conditional distribution on the current state of the source given the available knowledge. Specifically, we have:
\begin{itemize}
    \item \emph{Forgetful receiver:} in this case, the uncertainty at time $n$ is driven not only by the latest received reading, $\lastRcn$, and the time elapsed since its decoding, $\Agen$, but also by the region from which it originated, $\Distn$. Following the same approach discussed in Sec. \ref{sec:forgetful_analysis}, upon reception of a packet at time $n_0$, we get
    \begin{align}
        \pb(\mc_{n_0}\given \lastrc_{n_0},\dist_{n_0}) =  \frac{\pb(\lastrc_{n_0},\dist_{n_0} \given \mc_{n_0}) \, \pi_{\mc_{n_0}}}{\sum\limits_{\mc_{n_0}^{\prime}\in \mathcal X} \pb(\lastrc_{n_0},\dist_{n_0}\given \mc_{n_0}^{\prime}) \, \pi_{\mc_{n_0}^{\prime}}}\,.
    \end{align}
    Note that, as the location information is available to the receiver, the condition is on both $\lastrc_{n_0}$ and $\dist_{n_0}$, while the original forgetful receiver update in~\eqref{eq:pXnGivenYnReset} only uses the former, computing the emission probability $\pb(\lastrc_{n_0}\given\mc_{n_0})$ by applying the law of total probability in~\eqref{eq:pYnGivenXnReset}. In this case, the emission probability is
    \begin{align}
        \pb(\lastrc_{n_0},\dist_{n_0} \given \mc_{n_0}) = \pb(\lastrc_{n_0}\given\mc_{n_0},\dist_{n_0}) \, \pb(\dist_{n_0}),
        \label{eq:pWnDnGivenXn}
    \end{align}
    where $\pb(\lastrc_{n_0}\given\mc_{n_0},\dist_{n_0})$ and $\pb(\dist_{n_0})$ were derived in \eqref{eq:pYnGivenDnXn} and \eqref{eq:pDnGivenXn}, respectively. The distribution for any $n=n_0+\agen>0$ between the current and next decoding follows by taking the $\agen$-step evolution of the process via the transition matrix \TransMat, allowing us to compute $\condentforg(\lastrcn,\agen,\distn)$ for a generic time slot $n$. The average uncertainty at the receiver, $\entFun_f$, can in turn be evaluated by taking the expectation of this quantity over the joint distribution     \begin{align}
        p(\lastrcn,\agen,\distn) = p(\agen) \sum_{\mc_{n_0}\in\mathcal X} \pb(\lastrc_{n_0},\dist_{n_0} \given \mc_{n_0}) \,\pi_{\mc_{n_0}},
    \end{align}
    where we leveraged again the independence of the AoI from the content and origin of the last received message.
    \item \emph{FS receiver:} the \ac{HMM} characterizing the relationship between source state and channel output can again be applied, taking into account the fact that at every slot the receiver either observes $\bigast$ (collision or idle slot), or a pair $(\lastrcn,\distn)$ in the event of successful decoding. The emission probability in the latter case is given in \eqref{eq:pWnDnGivenXn}, allowing us to obtain the conditional probability via the standard \ac{HMM} forward recursion. 
\end{itemize}

\begin{figure*}
    \centering
    \subfloat[$\alpha=0.02$.]{
        \includegraphics[width=.47\columnwidth]{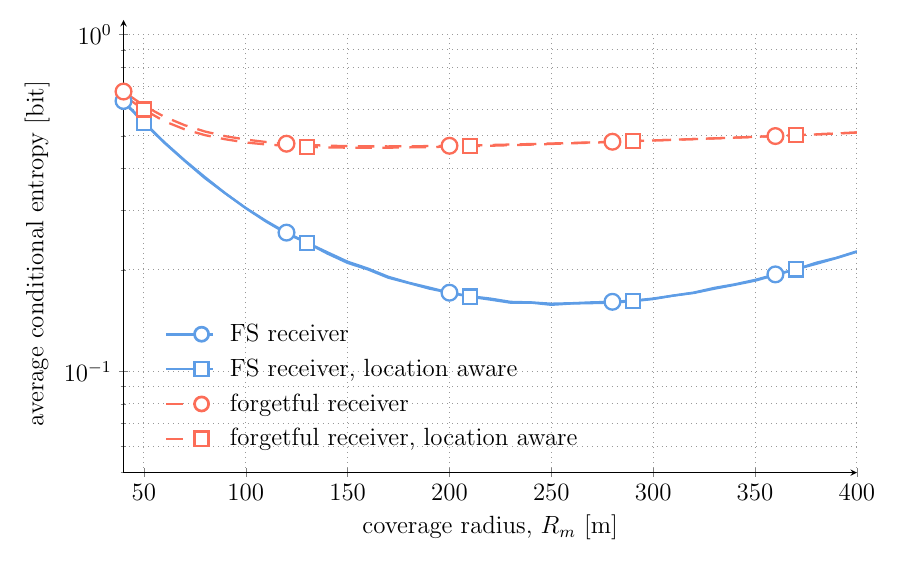}}
    \hspace*{.3em}
    \subfloat[$\alpha=0.1$.]{
        \includegraphics[width=.47\columnwidth]{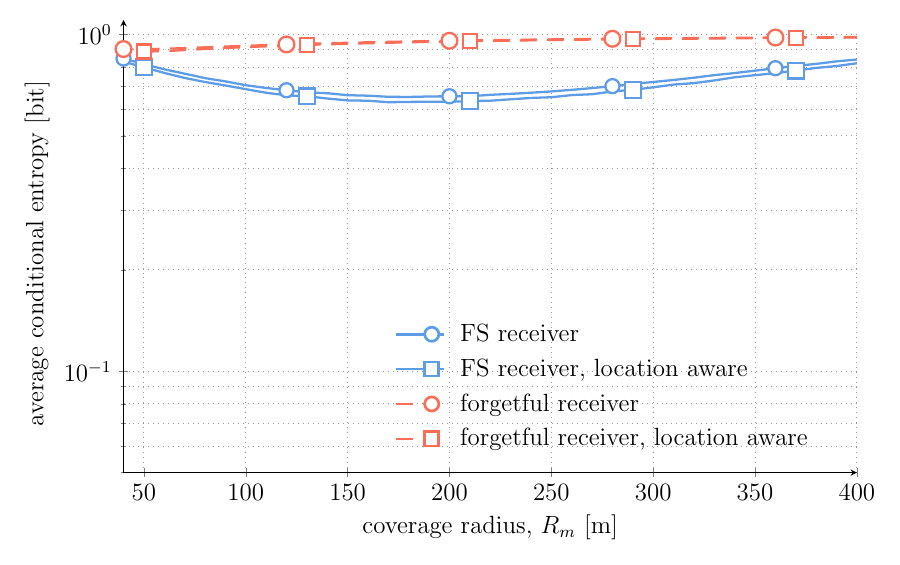}}
    \caption{Average conditional entropy for a forgetful ($\entFun_f$) and a FS (\entFun) receiver, reported against the coverage radius \radMax. Circle markers denote the performance achieved when the receiver has no knowledge about the position of the transmitters, whereas square markers refer to the case of location being piggybacked into sent messages. In (a), the reliability parameter $\alpha$ is set to $0.02$, whereas (b) refers to the case $\alpha=0.1$. A symmetric source is monitored.}
    \label{fig:location-aware}
\end{figure*}

The results of this analysis are shown in \figr\ref{fig:location-aware}, where we report the average conditional entropy for the forgetful (dashed lines) and FS (solid lines) receivers, considering the case without (circle markers) and with (square markers) location awareness. A symmetric source was considered, and the groups of curves refer to two values for the reading reliability parameter $\alpha$.

The first and key outcome that can be extracted is that piggybacking location information on the updates has a very limited impact on the overall uncertainty at the receiver, both in the forgetful and in the FS case. The result is non-obvious and provides a relevant design take-away, supporting the relevance of simpler IoT systems in which devices need not be aware of their position. From this standpoint, the modeling assumptions introduced in \secr\ref{sec:sysModel} prove to be sound, leading to a broad applicability of the main setting considered in our framework. A detailed look at \figr\ref{fig:location-aware} also reveals that a slight improvement (up to $1.96\%$) can be noticed for the forgetful receiver already for $\alpha=0.02$. In the FS case, instead, the role of weighing the reliability of incoming updates based on the distance of their sender only emerges when $\alpha$ is slightly larger, with an overall reduction in \entFun\ up to $3.73\%$ for $\alpha=0.1$. This relatively small effect for both receiver types is due to the relative frequency of low-quality updates, which is quite high due to the spatial distribution of nodes: when receiving sequences of low-quality updates, the two receivers still have a high entropy, and the knowledge of the location of the higher-quality updates has a small effect due to their low relative frequency.

\subsection{Transmission probability optimization}

To understand whether an adaptation of the transmission probability can be beneficial, consider an extension of the model in which all nodes within the same region $\area_i$ access the channel with probability $\pTx_i$. We thus obtain a vector $\bm \pTx = [\pTx_0, \dots, \pTx_{K-1}]$, and can find the optimal configuration $\bm \pTx^*$ by solving the problem
\begin{align} \label{eq:problem1}
    \bm\pTx^*=\underset{\bm \pTx\in[0,\pTx_{\max}]^K}{\operatorname{argmin}} & \quad \entFun \\
    \text{s.t.}& \quad 0 \leq \pTx_i \leq \pTx_{\max}\ \forall i,
\end{align}
where $\pTx_{\max}$ is the maximum transmission rate for a node, e.g., due to duty cycle or other operational constraints. 
To this aim, we define the set of threshold-based solutions $\mathcal{T}$. A solution $\bm\pTx$ belongs to $\mathcal{T}$ if and only if it respects the following condition:
\begin{equation}
\begin{aligned}
    \bm\pTx\in\mathcal{T}\iff\exists K^*\in\{0,\ldots,K-1\}:&\pTx_i=\pTx_{\max}\ \forall i<K^*\\ &\wedge \pTx_j=0\ \forall j>K^*.
\end{aligned}
\end{equation} We also define the accuracy $\bar{\lambda}(\bm\pTx)$ of a solution, i.e., the probability of a packet received when using solution $\bm\pTx$ reporting the correct value of the state, as
\begin{equation}
    \bar{\lambda}(\bm\pTx)=\pb(\lastrc_{n_0}\given \mc_{n_0})\mathbbm{1}(\mc_{n_0},\lastrc_{n_0}).
\end{equation}
We then state two Lemmas, which will be useful in determining the optimal solution.

\begin{lemma}\label{lm:eq_success}
    Consider any solution $\bm\pTx$. If $\nodes_i$ is a non-decreasing function of $i$, i.e., there are more nodes in outer areas, there exists a threshold-based solution $\bm\pTx'\in\mathcal{T}$ with an equal success probability, i.e.,
    \begin{equation}
        \nodes_j\geq\nodes_i\ \forall j>i\implies\forall\bm\pTx\ \exists\bm\pTx'\in\mathcal{T}:\ps(\bm\pTx')=\ps(\bm\pTx).
    \end{equation}
\end{lemma}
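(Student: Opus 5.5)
The plan is to combine an intermediate-value argument along a continuous one-parameter family of threshold solutions with a comparison showing that, for a fixed ``collision budget'', the threshold solution does at least as well as $\bm\pTx$. I read $\mathcal T$ as in its definition, with $\pTx_{K^*}\in[0,\pTx_{\max}]$ left free. With this reading, letting $K^*$ and $\pTx_{K^*}$ vary traces a continuous path in $[0,\pTx_{\max}]^K$. The path fills regions from the inside out: it starts at $\bm 0$ and ends at $(\pTx_{\max},\dots,\pTx_{\max})$.

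\textbf{Step 1 (rewrite $\ps$).} Set $a_i=\pTx_i(1-\peras)$. Generalizing \eqref{eq:psucc} to region-dependent access probabilities gives
\begin{align}
\ps(\bm\pTx)=\prod_{j}(1-a_j)^{\nodes_j}\sum_i \nodes_i\frac{a_i}{1-a_i}.
\end{align}
Introduce $u_i=-\log(1-a_i)\in[0,u_{\max}]$, which is increasing in $\pTx_i$, and the total load $G=\sum_i \nodes_i u_i$. Then
\begin{align}
\ps=e^{-G}\,S(\bm u),\qquad S(\bm u)=\sum_i \nodes_i\left(e^{u_i}-1\right).
\end{align}
This separates the collision term $e^{-G}$, which depends only on the weighted sum $G$, from a convex, separable term $S$.

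\textbf{Step 2 (continuity and the endpoint $0$).} Along the threshold path, $G$ increases continuously and monotonically from $0$ to $G_{\max}$. For the given $\bm\pTx$, let $G_0=\sum_i\nodes_i u_i\le G_{\max}$. Then there is a unique threshold solution $\bm\pTx^{G_0}\in\mathcal T$ with the same load $G_0$. Since $\ps$ is continuous along the path and equals $0$ at $G=0$, the intermediate value theorem yields some $\bm\pTx'\in\mathcal T$ with $\ps(\bm\pTx')=\ps(\bm\pTx)$, provided that
\begin{align}
\ps(\bm\pTx^{G_0})\ge \ps(\bm\pTx).
\end{align}
Because both vectors share the factor $e^{-G_0}$, this reduces to showing $S(\bm u^{G_0})\ge S(\bm u)$.

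\textbf{Step 3 (the main obstacle: comparison at equal load).} Here the hypothesis $\nodes_j\ge\nodes_i$ for $j>i$ must be used. I will maximize the convex function $S$ over the polytope $\{\bm u\in[0,u_{\max}]^K:\sum_i\nodes_i u_i=G_0\}$. A convex function attains its maximum at a vertex. Every vertex has at most one fractional coordinate, with all other coordinates equal to $0$ or $u_{\max}$. It then remains to show that, among these vertices, the one that saturates regions $0,\dots,K^*-1$ and leaves the remainder to region $K^*$ is optimal.

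I would attempt an exchange argument. Take a vertex with full set $F$ and fractional region $k$. Swapping a full region with an empty region of larger index, or moving the fractional mass, can be done at fixed $G_0$. Writing $S$ as $\sum_i \nodes_i(e^{u_i}-1)$ with $\nodes_i u_i$ fixed, the value of $S$ per unit of load is $(e^{u}-1)/u$. This quantity is increasing in $u$, so it pays to carry the load with as many saturated coordinates as possible, i.e., to saturate regions with small $\nodes_i$ first. Since $\nodes_i$ is non-decreasing, these are exactly the inner regions, which matches the threshold structure. Each exchange step should then not decrease $S$, which would give $S(\bm u^{G_0})\ge S(\bm u)$ and complete the proof via Step 2.

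The delicate point is to verify the exchange inequality carefully when the fractional coordinate sits in a region with a larger $\nodes_i$. I expect this to follow from the convexity of $e^{u}$ together with the ordering of the $\nodes_i$, but it is the step most likely to need a more refined majorization argument. If the exchange route becomes messy, I would fall back on a direct majorization argument on the per-node load vector, in which each node in region $i$ carries load $u_i$ and the threshold profile majorizes any feasible profile of the same total load.
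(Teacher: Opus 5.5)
Your Step 2 reduction is sound and matches the paper's proof in spirit. The threshold family is a continuous path in $[0,\pTx_{\max}]^K$ from $\bm 0$ to the all-$\pTx_{\max}$ vector, and $\ps$ vanishes at its start. By the intermediate value theorem it therefore suffices that $\ps$ somewhere on the path reaches $\ps(\bm\pTx)$; the paper does the same via $n^*$, $K^*$ and a bisection on $\pTx_{K^*}$.

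\textbf{The gap is in Step 3.} The equal-load inequality $S(\bm u^{G_0})\ge S(\bm u)$ is false in general, not just in a delicate case. Take $\nodes_0=1$, $\nodes_1=3$ and $\bm u=(0,u_{\max})$, so that $G_0=3u_{\max}$ and $\bm u^{G_0}=(u_{\max},\tfrac23u_{\max})$. With $x=e^{u_{\max}/3}>1$,
\begin{align}
S(\bm u)-S(\bm u^{G_0})=2(x^3-1)-3(x^2-1)=(x-1)^2(2x+1)>0 .
\end{align}
Your own remark that $(e^u-1)/u$ increases in $u$ explains why. $S=\sum_\nu(e^{u_\nu}-1)$ is Schur-convex in the per-node load vector, so at fixed load it rewards concentrating load on as few nodes as possible at level $u_{\max}$. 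Saturating the small inner regions first does not do this: it spreads the leftover load thinly over the larger region $K^*$. In the example, the per-node profile $(u_{\max},u_{\max},u_{\max},0)$ of $\bm u$ majorizes the threshold profile $(u_{\max},\tfrac23u_{\max},\tfrac23u_{\max},\tfrac23u_{\max})$. So your majorization fallback holds in the opposite direction, and the exchange steps would decrease $S$.

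\textbf{No repair of Step 3 can prove the lemma as stated,} because the path may never reach $\ps(\bm\pTx)$ at any load. Keep $\nodes_0=1$, $\nodes_1=3$ and set $a_{\max}=\pTx_{\max}(1-\peras)=0.3$. Then $\bm\pTx=(0,\pTx_{\max})$ has load $0.9$ and $\ps=3\cdot0.3\cdot0.7^2=0.441$. On $\mathcal T$, the branch $K^*=0$ gives $\ps=a_0\le0.3$. The branch $K^*=1$ gives $\ps=(1-a_1)^2(0.3+1.8a_1)\le(7/9)^2\cdot0.7=343/810\approx0.423$. The paper's proof glosses over the same point: here its $n^*=3$ exists, but the final equation for $\pTx_{K^*}$ has no root.

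\textbf{What is provable needs a load hypothesis instead of an equal-load comparison.} Since
\begin{align}
\frac{\partial\ps}{\partial a_k}=\frac{\nodes_k}{1-a_k}\prod_j(1-a_j)^{\nodes_j}\left[\frac{1}{1-a_k}-\sum_i\frac{\nodes_i a_i}{1-a_i}\right],
\end{align}
the condition $(\nodes+1)\pTx_{\max}(1-\peras)\le1$ makes the bracket non-negative on the whole box. Then $\ps$ is coordinatewise non-decreasing and the all-$\pTx_{\max}$ vector in $\mathcal T$ maximizes it. Your Step 2 then closes directly by the intermediate value theorem. This covers the paper's operating regime and does not even use the monotonicity of $\nodes_i$.
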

\begin{IEEEproof}
    The proof of the Lemma is provided as Appendix~\ref{app:lemma3}. 
\end{IEEEproof}

\begin{lemma}\label{lm:accuracy}
If we consider the set $\mathcal{L}(\ps)=\{\bm\pTx:\ps(\bm\pTx)=\ps\}$, the solution 
\begin{equation}
    \bm\pTx^*(\ps)=\underset{\bm \pTx\in\mathcal{L}(\ps)}{\operatorname{argmax}}\ \ \bar{\lambda}(\bm\pTx)
\end{equation}
belongs to the set $\mathcal{T}$ of threshold-based policies if $\rad$, $\powLaw$, and $\pTx_{\max}$ satisfy 
\begin{equation}
    \frac{1}{(1+\rad)^{\powLaw}}\geq\frac{1-\pTx_{\max}}{(1+2\rad)^{\powLaw}}+\pTx_{\max}.\label{eq:cond_j1}
\end{equation}
\end{lemma}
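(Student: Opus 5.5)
The plan is to use an exchange (perturbation) argument on the level set $\mathcal{L}(\ps)$: any solution that is not threshold-based can be modified so that its accuracy does not decrease while its success probability stays the same. Two reformulations come first. Write $\beta:=1-\peras$ and $\Pi(\bm\pTx):=\prod_{j}(1-\pTx_j\beta)^{\nodes_j}$. Then
\begin{align}
    \ps(\bm\pTx)=\Pi(\bm\pTx)\,S(\bm\pTx), \qquad S(\bm\pTx):=\sum_{i} w_i, \qquad w_i:=\frac{\nodes_i\pTx_i\beta}{1-\pTx_i\beta}.
\end{align}
Conditioned on a success, the sender lies in $\area_i$ with probability $w_i/S$. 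Hence the accuracy is a weighted average,
\begin{align}
    \bar\lambda(\bm\pTx)=\frac{\sum_i w_i\lambda_i}{\sum_i w_i},
\end{align}
where the $\lambda_i$ from \eqref{eq:powerLaw} are non-increasing in $i$ and $\lambda_0=1$. Maximizing $\bar\lambda$ over $\mathcal{L}(\ps)$ therefore amounts to pushing the conditional sender distribution towards inner regions without altering $\ps$.

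The core step is a pairwise exchange. Suppose $\bm\pTx\in\mathcal{L}(\ps)\setminus\mathcal{T}$. Then there are indices $i<j$ with $\pTx_i<\pTx_{\max}$ and $\pTx_j>0$; it suffices to handle the innermost unsaturated index $i$ and the outermost active index $j$. Perturb $\pTx_i\to\pTx_i+\epsilon$ and $\pTx_j\to\pTx_j-\delta(\epsilon)$, choosing $\delta$ by the implicit function theorem so that $\ps$ stays constant. From $\partial\ps/\partial\pTx_k$ one obtains $\delta'(0)$ explicitly in terms of $\nodes_i,\nodes_j,\pTx_i,\pTx_j$. I will then compute the directional derivative of $\bar\lambda$ along this curve. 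It has the form
\begin{align}
    \frac{d\bar\lambda}{d\epsilon}\propto \frac{\partial w_i}{\partial \pTx_i}\,(\lambda_i-\bar\lambda)-\delta'\,\frac{\partial w_j}{\partial \pTx_j}\,(\lambda_j-\bar\lambda),
\end{align}
and the goal is to show it is nonnegative. Iterating the exchange, or equivalently arguing that a maximizer cannot admit a strictly improving direction, drives the solution to one with $\pTx_k=\pTx_{\max}$ for $k<K^*$, $\pTx_k=0$ for $k>K^*$, and at most one fractional region $K^*$. That is exactly a solution in $\mathcal{T}$. Compactness of $\mathcal{L}(\ps)$, a closed subset of $[0,\pTx_{\max}]^K$, guarantees that the argmax exists.

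The main obstacle is the sign of this derivative. Raising $\pTx_i$ not only adds weight at a more reliable region: through the constant-$\ps$ constraint it also changes $\Pi$, and so rescales every other weight. The constraint ties $\epsilon$ and $\delta$ through factors of the form $1-\pTx\beta\ge 1-\pTx_{\max}$. This is where I expect a condition like \eqref{eq:cond_j1} to appear. After clearing denominators, the requirement should reduce to $\lambda_i$ dominating a convex combination $(1-\pTx_{\max})\lambda_j+\pTx_{\max}\cdot 1$ of the outer accuracy and the maximal accuracy $\lambda_0=1$. My plan is to first bound $\bar\lambda$ from above and the reweighting factor from below in terms of $\pTx_{\max}$, and then reduce to adjacent regions $j=i+1$. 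For $i=0$ the inequality holds trivially because $\lambda_0=1$, so the binding pair is $(1,2)$, which gives precisely $\lambda_1\ge(1-\pTx_{\max})\lambda_2+\pTx_{\max}$.

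To extend from the binding pair to all pairs, I will use the shape of the power law: the gap $\lambda_i-\lambda_{i+1}$ is largest relative to $1-\lambda_{i+1}$ at the smallest $i$. I would verify this monotonicity of the ratio directly from $(1+iR)^{-\alpha}$, together with the cap at $1/2$, and then chain adjacent exchanges to cover non-adjacent pairs. If the adjacent-pair reduction turns out not to be monotone in the way I hope, the fallback is to check the first-order (KKT) conditions of the problem restricted to $\mathcal{L}(\ps)$. Those conditions should show that every interior stationary point with two fractional coordinates is not a maximum under \eqref{eq:cond_j1}.
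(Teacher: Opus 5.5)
Your skeleton matches the paper's. You move transmission probability from the outermost active zone to the innermost unsaturated one at constant $p_s$; you do this infinitesimally, the paper with finite transfers, which is cosmetic. You then reduce monotonicity of $\bar\lambda$ to two ingredients: a lower bound $1-\zeta_{\max}$ on the ratio of weight removed outside to weight added inside (the paper's Lemma~5, its $\xi_k/\xi_j$), and an upper bound on $\bar\lambda$. Inner index $0$ is trivial, and the pair $(1,2)$ with $\bar\lambda\le 1$ gives the hypothesis.

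\textbf{The extension to $i\ge 2$ fails.} Under $\bar\lambda\le 1$ and ratio $\ge 1-\zeta_{\max}$, the pair $(i,i+1)$ requires $\rho_i:=(\lambda_i-\lambda_{i+1})/(1-\lambda_{i+1})\ge\zeta_{\max}$, and the lemma's hypothesis is exactly $\rho_1\ge\zeta_{\max}$. You correctly note that $\rho_i$ is largest at the smallest $i$. But that makes $(1,2)$ the \emph{least} restrictive pair, so nothing follows for $i\ge 2$. Along the power law $\rho_i$ decays towards $0$, and it is exactly $0$ once the cap at $1/2$ is active, where your reduced inequality is false. Chaining adjacent exchanges does not help, because for a fixed inner index the adjacent pair is already the hardest one.

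The paper does not derive the cases $j>1$ from the hypothesis. It drops $\bar\lambda\le 1$ and uses that, by the choice of $j$, zones $0,\dots,j-1$ are all saturated. Hence $\bar\lambda_c$ is at most their node-weighted average, which convexity of $\lambda_i$ bounds by $\lambda_j+c_j(1-\lambda_j)$, with $c_j=(2j+1)(j+1)/(6j^2)\to 1/3$. Even this leaves the $j$-dependent requirement $(1-\zeta_{\max})(\lambda_j-\lambda_{j+1})\ge\zeta_{\max}c_j(1-\lambda_j)$. So some check over the active zones, or a restriction on $K$, is needed; a monotonicity argument in $i$ cannot replace it.

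\textbf{The ratio bound is only anticipated.} The mechanism you expect (factors $1-\zeta\beta\ge 1-\zeta_{\max}$) is not what the constraint produces. Set $u_\ell=\beta\zeta_\ell$ and $\gamma:=1/S-1$. Your own computation then gives $\partial\log p_s/\partial u_\ell=(\partial w_\ell/\partial u_\ell)(\gamma+u_\ell)$. Hence along the level set $-dw_j/dw_i=(\gamma+u_i)/(\gamma+u_j)$. Where $\gamma+u_i,\gamma+u_j>0$, $d\bar\lambda\ge 0$ if and only if $(\gamma+u_j)(\lambda_i-\bar\lambda)\ge(\gamma+u_i)(\lambda_j-\bar\lambda)$.

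If $u_i\ge u_j$, this holds with no hypothesis, since $\lambda_i\ge\lambda_j$ and $\bar\lambda\ge\lambda_j$. The whole difficulty is $u_i<u_j$, where the ratio is below $1$ and depends on the load. For example, with $u_i=0$, $u_j=\zeta_{\max}$, $\beta=1$, the ratio is at least $1-\zeta_{\max}$ if and only if $S\le 1/(2-\zeta_{\max})$. So this step needs a real argument and probably an explicit load restriction.

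You cannot simply import the paper's Lemma~5 here. It compares with the transfer that keeps the all-idle probability $\prod_\ell(1-u_\ell)^{m_\ell}$ fixed. Along that transfer $p_s$ increases, as asserted there, only when $\zeta_i\ge\zeta_j$. That is exactly the case that is already automatic.
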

\begin{IEEEproof}
    The proof of the Lemma is provided as Appendix~\ref{app:lemma4}. 
\end{IEEEproof}

We then use the Lemmas to prove the following result.

\begin{theorem} \label{th:opt_pTx}
If the number of nodes in each zone $\nodes_i$ is a non-decreasing function of $i$, and $\frac{1}{(1+\rad)^{\powLaw}}\geq\frac{1-\pTx_{\max}}{(1+2\rad)^{\powLaw}}+\pTx_{\max}$, there exists an optimal activation probability vector $\bm \pTx^*$ presenting a threshold structure, i.e., $\bm\pTx^*\in\mathcal{T}$. In other words, the intersection between the set of optimal solutions to problem~\eqref{eq:problem1} and the set of threshold-based solutions $\mathcal{T}$ is not empty.
\end{theorem}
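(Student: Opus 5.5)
The plan is to reduce problem~\eqref{eq:problem1} to a two-parameter problem, using the single-sensor equivalent discussed in Sec.~\ref{sec:sysModel}. The first step is to show that, for either receiver, the limiting uncertainty $\entFun$ depends on $\bm\pTx$ only through two scalars: the success probability $\ps(\bm\pTx)$ and the accuracy $\bar\lambda(\bm\pTx)$. Since the channel is memoryless across slots, $(\Mcn,\Rcn)$ is an HMM with emission probabilities \eqref{eq:emission_prob}. When $\bm\pTx$ is no longer uniform, $\pb(\dist_{n_0})$ is reweighted by $\pTx_i$ and by the per-region success terms. Even so, after marginalizing over $D_{n_0}$ as in \eqref{eq:pYnGivenXnReset}, the reading is correct with probability $\bar\lambda(\bm\pTx)$ and wrong with probability $1-\bar\lambda(\bm\pTx)$, regardless of the value of $\mcn$. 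Hence two policies with equal $(\ps,\bar\lambda)$ induce the same joint law of $(X^n,Y^n)$, and therefore the same $\entFun$, and likewise the same $\entFun_f$.

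The second step is to argue that, at fixed $\ps$, $\entFun$ is non-increasing in $\bar\lambda$ on $[1/2,1]$. For this I would use a garbling argument. An observation of accuracy $\lambda' < \lambda$ (both at least $1/2$) can be obtained from one of accuracy $\lambda$ by passing each non-erased reading through an independent binary symmetric channel with crossover $\epsilon=(\lambda-\lambda')/(2\lambda-1)$. This yields a degraded observation sequence $\tilde Y^n$ that forms a Markov chain $X_n - Y^n - \tilde Y^n$. By the data processing inequality, $H(X_n\given \tilde Y^n)\ge H(X_n\given Y^n)$ for every $n$, and the inequality survives the limit. I expect this to be the main technical point: the channel must be applied independently in each slot, and $\bar\lambda\ge 1/2$ must hold, which it does because every $\lambda_i\ge 1/2$ by \eqref{eq:powerLaw}.

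The conclusion then follows. Let $\bm\pTx^{o}$ be any optimal solution; it exists because $\entFun$ is continuous on the compact set $[0,\pTx_{\max}]^K$, which I will assume or argue from the continuity of $\ps$ and $\bar\lambda$. By Lemma~\ref{lm:eq_success}, the level set $\mathcal L(\ps(\bm\pTx^o))$ contains a threshold-based point, so it is non-empty within $\mathcal T$. By Lemma~\ref{lm:accuracy}, under condition \eqref{eq:cond_j1} the maximizer $\bm\pTx^*$ of $\bar\lambda$ over this level set lies in $\mathcal T$, and it satisfies $\bar\lambda(\bm\pTx^*)\ge\bar\lambda(\bm\pTx^o)$ with equal $\ps$. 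The monotonicity from the second step then gives $\entFun(\bm\pTx^*)\le\entFun(\bm\pTx^o)$, so $\bm\pTx^*$ is also optimal and lies in $\mathcal T$. This shows that the intersection of the set of optimal solutions with $\mathcal T$ is non-empty.
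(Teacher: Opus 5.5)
Your proposal is correct and follows the paper's proof. You compare policies with equal $\ps$, use Lemma~\ref{lm:eq_success} and Lemma~\ref{lm:accuracy} to obtain a threshold-based policy of maximal $\bar\lambda$ on that level set, and conclude by monotonicity of $\entFun$ in $\bar\lambda$. The paper first discards policies with load above $1$, which only matters inside the lemmas' proofs. Like you, it takes the existence of a minimizer for granted. Note that continuity of $\ps$ and $\bar\lambda$ alone would not supply that minimizer, since $\entFun$ is the limit of a non-increasing sequence of continuous functions and is therefore only guaranteed to be upper semicontinuous. Your reduction of the emission law to the pair $(\ps,\bar\lambda)$, together with the binary-symmetric garbling and data-processing argument, is a rigorous version of the step the paper simply calls ``immediate'': higher accuracy at equal success probability never increases the receiver uncertainty.
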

\begin{IEEEproof}
We start with a trivial observation: any solution that generates a load higher than $1$, i.e., with $\sum_{i=0}^{K-1}\nodes_i\pTx_i>1$, cannot be optimal, as we can always find a solution with the same ratio of packets from each zone and a higher success probability by reducing the load accordingly. We then concentrate only on solutions with a load lower than $1$. 
Consider now any pair of solutions $\bm\pTx$ and $\bm\pTx'$, with the same success probability, i.e., $\ps(\bm\pTx)=\ps(\bm\pTx')$. It is immediate to show that, if $\bar\lambda(\bm\pTx') \geq \bar\lambda(\bm\pTx)$, i.e., received packets have a higher probability of containing the correct state, the entropy of the receiver is never higher when the nodes transmit according to $\bm\pTx'$ than when using $\bm\pTx$.

The theorem then straightforwardly follows from the two lemmas: as Lemma~\ref{lm:eq_success} states that we can always construct a threshold-based solution with the same throughput as any other solution, and Lemma~\ref{lm:accuracy} states that the highest-accuracy solution with a given success probability must be threshold-based, there is always a threshold-based solution which leads to an entropy at least as low as any other solution. The set of optimal solutions must then always include at least a threshold-based one.
\end{IEEEproof}

Theorem \ref{th:opt_pTx} offers another relevant hint for system design, proving how optimization of the transmission probability based on the positions of the nodes does not play a key role. This is especially relevant, since implementing a per-zone channel access vector $\bm \pTx^*$ would require once more some form of location awareness with respect to the tracked process for all nodes and entail further protocol overhead to distribute the transmission parameters. Both aspects can be critical in IoT systems, where the simplicity of devices and protocols is paramount. 

\begin{figure}
    \centering
    \includegraphics[width=0.63\columnwidth]{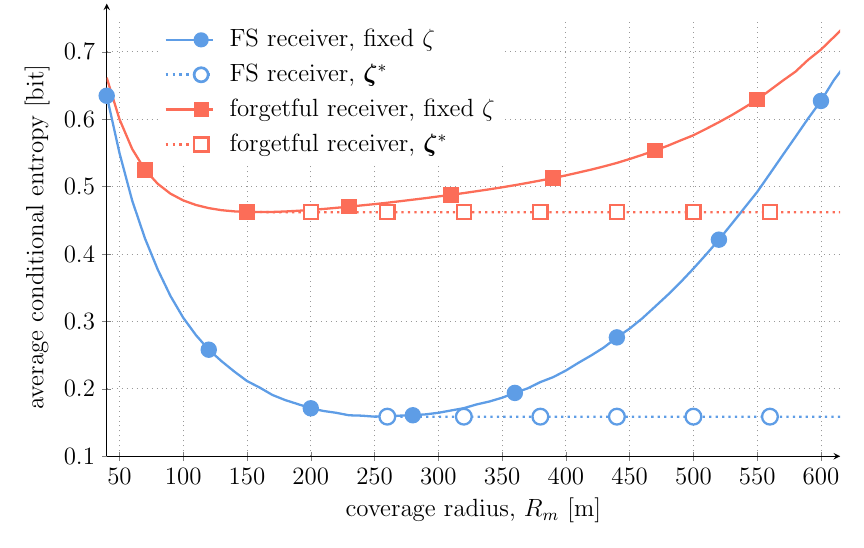}
    \caption{Average conditional entropy vs coverage range \radMax, considering a symmetric source. Solid lines report the behavior of an FS and a forgetful receiver when all nodes within coverage transmit with the same probability $\pTx=10^{-4}$. Dashed lines denote the performance of the two receivers when the transmission probabilities for each coverage zone are obtained by solving the optimization problem \eqref{eq:problem1}.}
    \label{fig:opt_pTx}
\end{figure}
Moreover, the result confirms that the modeling assumptions introduced in Sec. \ref{sec:sysModel} are valid: by simply setting an appropriate coverage range and allowing nodes in the area to transmit as dictated by their duty cycles, the system achieves a nearly optimal solution, as the only difference from $\bm \pTx^*$ is given by (at most) the most external active region. This is clearly shown in \figr\ref{fig:opt_pTx}, reporting \entFun\ against \radMax\ when tracking a symmetric source, and considering the case in which nodes access the channel with probability $\pTx=\pTx_{\max}=10^{-4}$ (solid lines) or with $\bm \pTx^*$ (dashed lines). The latter has been obtained by solving the optimization problem in \eqref{eq:problem1} via standard numerical tools. The performance of both FS (circle markers) and forgetful (square markers) receivers is reported. Up until $\radMax = \radMax^*$, the solutions coincide, as the optimal approach foresees all nodes to transmit with the same probability. Conversely, for larger radii, the uncertainty obtained for a constant $\pTx$ increases due to the reception of unreliable updates, and the optimal solution is simply to switch off any transmission from outside $\radMax^*$. In this perspective, the simpler modeling approach tackled in this paper proves to be especially valuable in determining the optimal coverage and the minimum attainable uncertainty. Also note how the effect of keeping the nodes in region $K^*$ active, with $\pTx_{K^*} < \pTx_{\max}$ in this case, as dictated by the optimal structure of Theorem \ref{th:opt_pTx}, does not play a significant role.

\begin{figure}
    \centering
    \includegraphics[width=0.63\linewidth]{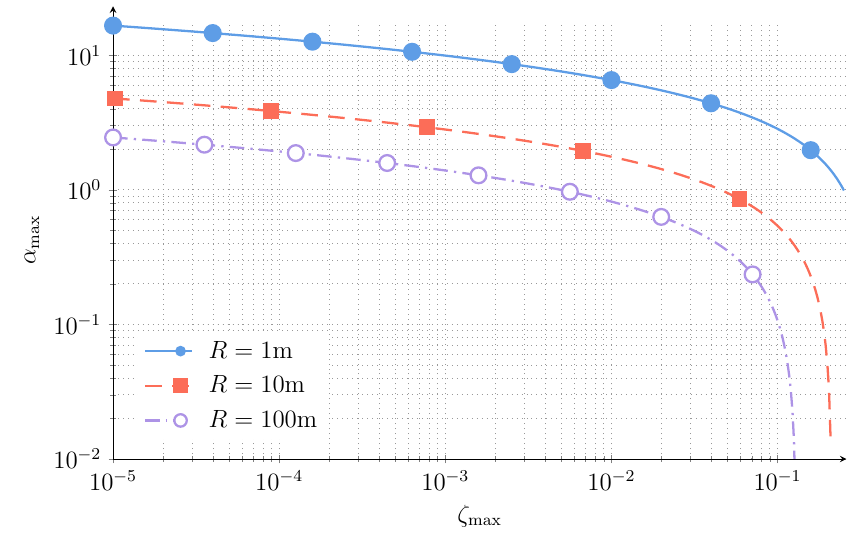}
    \caption{Maximum value of $\powLaw$ under which Theorem~\ref{th:opt_pTx} holds as a function of $\pTx_{\max}$ for different values of $\rad$.}
    \label{fig:alpha_condition}
\end{figure}

To conclude, we also observe that, while the theorem sets some conditions on the values of $\powLaw$, $\rad$, and $\pTx_{\max}$, practical system parameters 
are well within these bounds. 
This is illustrated in Fig.~\ref{fig:alpha_condition}, where, for any value of the granularity $\rad=\radMax/K$, i.e., the width of each annular region within the coverage area, the region below the corresponding curve gives all $\alpha$ values for which the threshold-based solution is proven to be optimal. Decreasing the duty cycle or the granularity of the zone division allows for larger values of $\powLaw$. All results in the simulations we show in this work fall within the condition in the theorem.

\section{Monitoring of an $M$-State Process}
\label{sec:multiState}

To conclude our study, we tackle the more general case of tracking a Markov source with an arbitrary number of states $M$ and alphabet $\mathcal X = \{0,\dots,M-1\}$. We assume once more that the nodes are not location aware and that all use a common transmission probability \pTx.

The one-step transition probabilities of the considered source model are illustrated in \figr\ref{subfig:Mstate}, providing a direct extension of the symmetric source case studied so far and leading to a stationary distribution $\pi_i = 1/M$, $\forall i \in \mathcal X$. In the remainder, we will focus solely on the FS receiver case, whose uncertainty \entFun\ can be captured extending the \ac{HMM} that describes the relationship between \Mcn\ and the channel outputs \Rcn.
Specifically, for any distribution $p(z_n\given \mcn, i)$, $i\in \mathcal K$, describing the probability that a device will produce reading $Z_n$ given the current state of the tracked process and its distance from it, the emission probability of the \ac{HMM} can be obtained as
\begin{align}
    p(\rcn\given \mcn) =
    \begin{cases}
    \ps \sum_{\distn\in\mathcal K} p(z_n \given \mcn,\distn) \, \pb(\distn), &\quad \rcn\in\mathcal X;\\
    1 - \ps, & \quad \rcn = \bigast,
    \end{cases}
\end{align}
where $\pb(\distn)$ is as derived in \eqref{eq:pDnGivenXn}.

\begin{figure}
\centering
    \subfloat[$M$-state source.\label{subfig:Mstate}]{
    \includegraphics[width=0.2\linewidth]{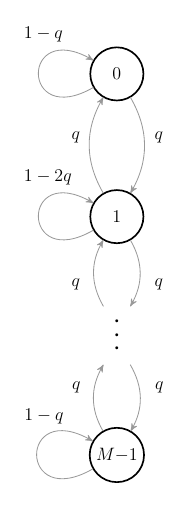}}
    \hspace{3em}
    \subfloat[Example of $p(z_n\given\mcn,i)$.\label{subfig:M_ex}]{
    \includegraphics[width=0.47\linewidth]{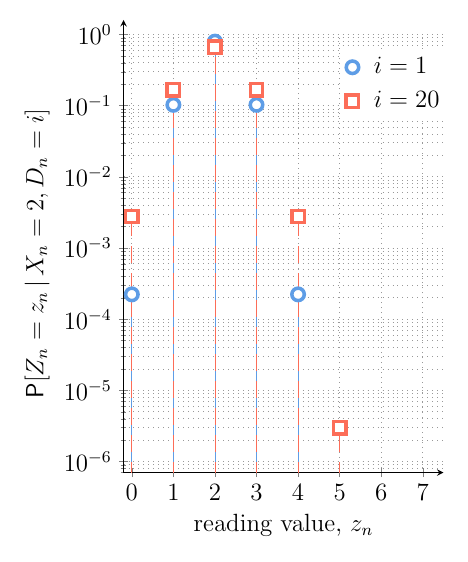}}
    \caption{A diagram of the $M$-state Markov source, with transition probabilities, and an example of the PMF of the observation produced by a sensor in area $\mathcal A_i$ given the state of the source, i.e., $p(z_n\given \mcn,i)$. The result is obtained assuming that $\mathcal{X}=\{0,\ldots,7\}$ and the true source state is $X_n=2$, and considering the model in \eqref{eq:gaussian}-\eqref{eq:sigma2}. To determine $\sigma_i^2$, the reliability exponent of the $2$-state model was set to $\alpha=0.02$.}
    \label{fig:source_and_PMF}
\end{figure}

In the remainder, we focus on a specific example for the reading reliability function, considering a discrete Gaussian distribution
\begin{align}
    p(z_n \given \mcn, i) =  \frac{e^{-(z_n-\mcn)^2/(2\sigma_i^2)}} {\sum_{z'_n\in\mathcal X} e^{-(z'_n-\mcn)^2/(2\sigma_i^2)}},
    \label{eq:gaussian}
\end{align}
where the parameter $\sigma_i^2$ controls how likely it is to obtain readings farther away from the true value. An example of the distribution is reported in \figr\ref{subfig:M_ex}, for $M=8$, showing the PMF of $Z_n$ conditioned on having the source in state $X_n  = 2$ when the reading is produced by a node very close to the center ($i=2$, solid line) or farther away ($i=10$,  dashed line). In order to have a relevant comparison with the results presented earlier, we set the parameter $\sigma_i^2$ so that, for $M=2$, the probability that a device in area $\mathcal A_i$ produces a correct reading corresponds to the one considered in \secr\ref{sec:sysModel}. This is obtained if $p(z_n\given \mcn,i) = \lambda_i$ when $z_n = \mcn$. Imposing this condition within \eqref{eq:gaussian} leads, after straightforward manipulations, to
\begin{align}
    \sigma_i^2 = \left(2 \ln \left(\frac{\lambda_i}{1-\lambda_i} \right)\right)^{-1}.
    \label{eq:sigma2}
\end{align}

Using this setting, we report in \figr\ref{fig:multi_state} the average uncertainty \entFun\ against the coverage range for a source with $M=2$ (solid line), $M=4$ (dashed line), and $M=8$ (dash-dotted line) states. In all cases, the transition probability $q$ is set to $5\cdot 10^{-3}$ and $\alpha=0.02$.
As expected, a larger number of source states leads to a higher uncertainty, with the effect being especially visible for very low or large coverage, where $\entFun \to H(X) = -\sum_{x\in\mathcal X} \pi_x \log_2 \pi_x$.
More interestingly, the fundamental trends and trade-offs discussed for the two-state case are confirmed, validating the approach of considering a simpler system model and allowing us to broaden the messages provided in the paper.

\begin{figure}
    \centering
    \includegraphics[width=0.63\columnwidth]{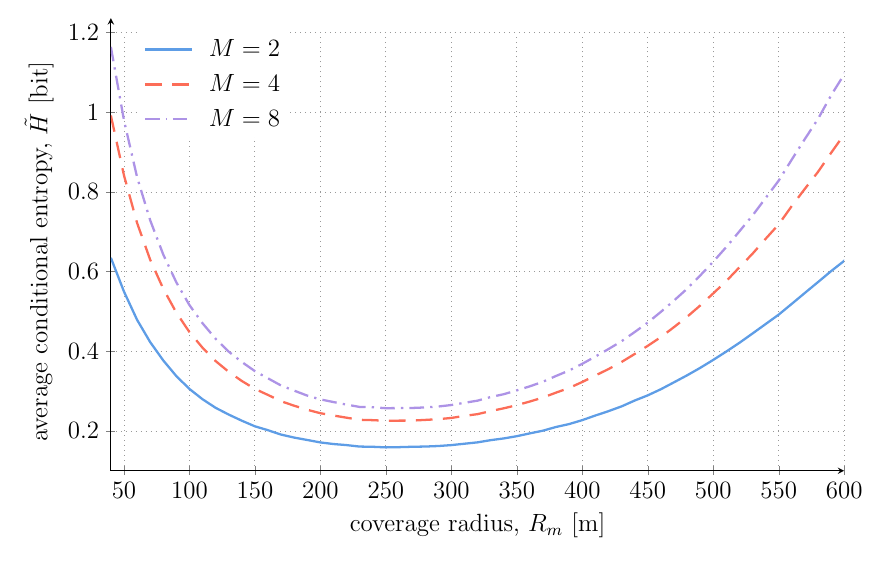}
    \caption{Average conditional entropy of the FS receiver when tracking a multi-state source with $M=8$. The source transitions with probability $q=5\cdot10^{-3}$, and the reliability parameters for the produced readings are set following \eqref{eq:gaussian} and \eqref{eq:sigma2}. To determine $\sigma_i^2$, the reliability exponent of the 2-state model was set to $\alpha=0.02$.}
    \label{fig:multi_state}
\end{figure}

\section{Concluding Remarks}
\label{sec:concs}

The freshness of information is a key metric for the development of 6G massive access, and its study is moving beyond the simple definition of \ac{AoI} and toward more expressive metrics that directly target application performance. At the same time, the temporal aspect of distributed \ac{IoT} observations, which has been extensively studied, needs to be complemented by a spatial component, integrating readings from sensors that may be farther from the event of interest and thus less reliable, but still useful.

In this work, we characterized the uncertainty of the receiver by studying its entropy in such a spatio-temporal observation system, showing its behavior under state asymmetry and multi-state Markov chains, and provide some general insights for optimization. Further, we showed that system optimization can be carried out with simple policies, which can be implemented in limited-hardware devices, with negligible performance loss.

Future avenues of research include the characterization of a wider set of spatio-temporal systems, considering multiple points of interest and structural distance metrics based on graph theory instead of simple Euclidean distance, in order to fully represent modern \ac{IoT} systems while still providing a solid theoretical characterization that can be used for further optimization.
This can be extended to a wider consideration of the system: we can imagine a vehicular traffic estimation problem measuring flow at crossings, i.e., nodes on a graph, in which entropy can capture the structure and system dynamics holistically. Furthermore, our research aims at triggering novel contributions on the topic of spatio-temporal freshness, e.g., exploring it under different channel models and taking into account the specific physical layer and protocol aspects of practical IoT systems.
\appendix

\subsection{On the definition of $\tilde H$} \label{app:lemma1}
We prove that $\tilde H$ is well defined by showing that $\lim\nolimits_{n\to\infty}H(X_n\given Y^n)$ exists. To this aim, consider the joint process $(X_n,Y_n)$. We have $p(x_n,y_n\given x^{n-1},y^{n-1}){=}p(x_n,y_n\given x_{n-1})$, since $X_n$ is Markovian, and $Y_n$ is only driven by the current value of the source. Accordingly, $(X_n,Y_n)$ is a finite-state Markov chain, which is time-homogeneous (as $X_n$ is homogeneous and the observation probabilities over the channel do not change over time). The chain is also irreducible. To prove this, it is sufficient to show that the one-step transition probability $p(x_{n{+}1},y_{n{+}1}\given x_n,y_n)$ between any two states is strictly positive \cite{TaylorKarlin1998}. We then observe that $p(x_{n{+}1},y_{n{+}1}\given x_n,y_n)$ can be factorized as $p(y_{n{+}1}\given x_{n{+}1}) p(x_{n{+}1}\given x_n)$, and that both factors are positive. For the transition probabilities of the tracked process, $p(x_{n{+}1}\given x_n)>0$ by definition, as we assume $q>0$ (see Sec. \ref{sec:sysModel}) and do not consider pathological cases in which the monitored Markov chain stays indefinitely in one state. In turn, $p(y_{n{+}1}\given x_{n{+}1})>0$ follows from \eqref{eq:emission_prob}, recalling that $p_s>0$ (as clarified in \eqref{eq:psucc} since the erasure probability is $p_e<1$ and $\zeta\in(0,1)$) and that $\pb_{\lastRcn\given\Mcn}$ is also strictly positive from \eqref{eq:pYnGivenXnReset_law}. Incidentally, we also remark that, in the case $K=1$, i.e., coverage only consists of the innermost region, all updates are reliable, and the process only visits states $(x_n,y_n)$ with $y_n\in\{x_n,\bigast\}$. In this special case, one would thus consider a reduced chain including only those states, which is irreducible following the same reasoning. Finally, $(X_n,Y_n)$ is also aperiodic, since, e.g., state $(0,0)$ has a strictly positive self-transition probability from the arguments above. 
These conditions ensure that the process is ergodic and a stationary distribution exists. In the remainder, we will assume that the initial distribution is the stationary one, so that the process is stationary. 

We now note that
   \begin{align*}
       H(X_n\given Y^n) &= H(X_n\given Y_n,\dots,Y_0)\\
       &\leq H(X_n\given Y_n, \dots, Y_1)\\
       &\stackrel{(a)}{=} H(X_{n-1}\given Y_{n-1}, \dots, Y_0),
    \end{align*}
    where ($a$) follows from the stationarity of $(X_n,Y_n)$. The sequence $H(X_n\given Y^n)$ is thus non-increasing and non-negative, proving its convergence to a limit. 

\subsection{Proof of Lemma 1} \label{app:lemma2}
Consider first $\Agen$. The process evolves from $\Delta_{n-1}$ as $\Delta_n{=}\Delta_{n-1}+1$ with probability $1{-}\ps$ (no packet received) or $\Delta_n{=}0$ otherwise. We thus have a countable state Markov chain, which can easily seen to be irreducible. The chain is also aperiodic (as the self-transition probability for state $0$ is $\ps>0$) and positive recurrent (since the return time to state $0$ is a geometric r.v. of parameter \ps, with mean value $1/\ps>0$). The process is then ergodic, and we will assume its initial distribution to be equal to the stationary distribution. Ergodicity can be inferred also for  $(X_n,W_n)$, following same the arguments provided in Appendix 1 for $(X_n,Y_n)$. Observing that $\Agen$ is independent of $(X_n,W_n)$, the joint process $(X_n,W_n,\Agen)$ is also a Markov chain, irreducible and aperiodic. Its stationary distribution is the product of the stationary distributions of $(X_n,W_n)$ and \Agen, and is thus proper, proving the ergodicity and 
stationarity of the chain. Accordingly, $H(X_n\given W_n,\Delta_n)$ is the same for all $n$. The limit thus exists and can be computed using the stationary distribution of $(W_n,\Delta_n)$ as in the rightmost equality of the Lemma statement.

\subsection{Proof of Lemma 2}\label{app:lemma3}
The success probability of a setting in which an arbitrary number $n$ of nodes transmit with probability $\pTx_{\max}$ is
\begin{equation}
    \ps(n)=n\pTx_{\max}(1-\peras)(1-\pTx_{\max}(1-\peras))^{n-1}\,.
\end{equation}
In the following, we define symbol $\pTx_{\text{eff}}=(1-\peras)\pTx_{\max}$ for the sake of brevity.

We can then find the value $n^*$ that solves $\ps(n)=\ps(\bm\pTx)$. The equation does not have a closed-form solution, but it can be solved numerically, and $n^*$ exists if the number of nodes $\nodes_i$ is non-decreasing, as the maximum throughput for $n$ transmitting nodes is limited to $(1-\peras)/n$. If we take the second derivative of $\ps(n)$ with respect to $n$, we get
\begin{equation}
    \frac{\partial^2\ps(n)}{\partial n^2}=\pTx_{\text{eff}}(1-\pTx_{\text{eff}})^{n-1}\log(1-\pTx_{\text{eff}})(2+n\log(1-\pTx_{\text{eff}})).
\end{equation}
We know that $\pTx_{\text{eff}}(1-\pTx_{\text{eff}})^{n-1}$ is always positive.
On the other hand, $\log(1-\pTx_{\text{eff}})$ is always negative, as $\pTx_{\text{eff}}<1$. In order for the second derivative to be negative, we then need to set $n\log(1-\pTx_{\text{eff}})>-2$.

Observe that, to maintain a load lower than $1$, we must have $n\pTx_{\text{eff}}\leq1$. Any solution with a load higher than $1$ has a lower success probability than solutions with a lower load, so this does not reduce the validity of the Lemma. We can thus set a more restrictive condition:
\begin{equation}
 n\log(1-\pTx_{\text{eff}})>-2n\pTx_{\text{eff}} \implies n\log(1-\pTx_{\text{eff}})>-2.
\end{equation}
We then note that
\begin{equation}
  n\log(1-\pTx_{\text{eff}})>-2 \iff\frac{\partial^2\ps(n)}{\partial n^2}<0.
\end{equation}
Removing the factor $n$ on each side, we have $\log(1-\pTx_{\text{eff}})>-2\pTx_{\text{eff}}$, which is true for $\pTx_{\text{eff}}<0.796$. As $n>1$, this is a less restrictive condition than $n\pTx_{\text{eff}}\leq1$, and the second derivative is always negative in the interval of interest.

The function is thus concave, and we can find the value of $n^*$ with the bisection method.
We then determine $K^*$ as:
\begin{equation}
    K^*=1+\sup\left\{k\in\{0,\ldots,K-2\}:\sum_{i=0}^k\nodes_i<n^*\right\}.
\end{equation}
We denote the number of nodes in the zones before $K^*$ as $\nodes_{b}=\sum_{i=0}^{K^*-1}\nodes_i$. If the inequality is strict, we then need to find the correct value $\pTx_{K^*}$, which is the solution of
\begin{equation}
\begin{aligned}
  \ps(\bm\pTx)=&\left[\frac{(1-\peras)\pTx_{K^*}\nodes_{K^*}}{1-(1-\peras)\pTx_{K^*}}+\frac{\pTx_{\text{eff}}\nodes_{b}}{1-\pTx_{\text{eff}}}\right](1-\pTx_{\text{eff}})^{\nodes_{b}}\\
  &\times(1-(1-\peras)\pTx_{K^*})^{\nodes_{K^*}}.
\end{aligned}
\end{equation}
We can find this solution numerically using the bisection method, as the success probability monotonically increases with $\pTx_{K^*}$ as long as $(\nodes_{-K^*}+\nodes_{K^*})\pTx_{\text{eff}}\leq1$.

\subsection{Proof of Lemma 3}\label{app:lemma4}

Consider a generic solution $\bm\pTx\notin\mathcal{T}$. If $\bm\pTx$ is not threshold-based, we need to prove that a threshold-based solution $\bm\pTx'$ exists, with $\ps(\bm\pTx')=\ps(\bm\pTx)$, and $\bm\pTx'$ has a higher accuracy. We can do so through a water-filling approach, by proving the existence of a swap in $\bm\pTx$ towards the inner regions that leaves $\ps(\bm\pTx)$ unchanged but improves the accuracy. Thus, we can reach the desired threshold-based $\bm\pTx'$ at the end of a necessarily finite-step procedure where, as long as two values $\pTx_j$ and $\pTx_k$ are both inside $(0,\pTx_{\max})$, with $j<k$, a transfer is made from $\pTx_k$ to $\pTx_j$ to either saturate $\pTx_j$ to $\pTx_{\max}$ or decrease $\pTx_k$ to $0$.

Formally, we define a series of intermediate solutions $\bm\pTx^{(n)}$, with $\bm\pTx^{(0)}=\bm\pTx$. We consider zones $j=\inf\left\{i:\pTx^{(n)}_i<\pTx_{\max}\right\}$, and $k=\sup\left\{i:\pTx^{(n)}_i>0\right\}$. If $j\geq k$, the $n$-th solution is threshold-based, and the procedure is over. Otherwise, we construct a solution $\bm\pTx^{(n+1)}$, with $\pTx^{(n+1)}_i=\pTx^{(n)}_i\ \forall i\notin\{j,k\}$ and $\ps\left(\bm\pTx^{(n+1)}\right)=\ps\left(\bm\pTx^{(n)}\right)$, such that $\pTx^{(n+1)}_j=\pTx_{\max}\vee\pTx^{(n+1)}_k=0$. This is always possible thanks to Lemma~\ref{lm:eq_success}, and, due to the finite dimension of vector $\bm\pTx$, the sequence must converge to a threshold-based policy in a finite number of steps. Lemma~\ref{lm:accuracy} must then hold if every intermediate step improves the accuracy, i.e., if $\bar{\lambda}\left(\bm\pTx^{(n+1)}\right)\geq\bar{\lambda}\left(\bm\pTx^{(n)}\right)\ \forall n$. This can be proven to hold under the relatively mild condition of the hypothesis and after a long sequence of algebraic steps. In the following, we consider the case in which $\peras=0$ for the sake of brevity, without loss of generality. The Lemma holds under the same conditions for other values of $\peras$, after a change of variable from $\pTx_i$ to $z_i=(1-\peras)\pTx_i$. As this is a linear transformation that preserves the sign, and $(1-\peras)\pTx_{\max}\leq\pTx_{\max}$, all the following inequalities are true for $\peras>0$.

We start by modeling all zones other than $j$ and $k$ as a combined zone $c$, and denote the probability of all nodes in combined zone $c$ being silent as $s_c$, the probability of a single one transmitting as $a_c$, and the expected accuracy of packets received from the combined zone $c$ is $\bar{\lambda}_c$, all of which stay constant in the water-filling transfer from $k$ to $j$. The success probability $\ps\left(\bm\pTx^{(n)}\right)$ can then be expressed as
\begin{equation}
\begin{aligned}
   \ps\left(\bm\pTx^{(n)}\right)=&\left[a_c+s_c\left(\frac{\pTx^{(n)}_j}{1-\pTx^{(n)}_j}+\frac{\pTx^{(n)}_k}{1-\pTx^{(n)}_k}\right)\right]\\
   &\times \left(1-\pTx^{(n)}_j\right)^{\nodes_j}\left(1-\pTx^{(n)}_k\right)^{\nodes_k}
\end{aligned}
\end{equation}
and condition $\ps\left(\bm\pTx^{(n+1)}\right)=\ps\left(\bm\pTx^{(n)}\right)$ becomes
\begin{equation}\label{eq:equal_success_gamma}
\begin{aligned}
    \frac{a_c+s_c\left(\frac{\nodes_j\pTx^{(n)}_j}{1-\pTx^{(n)}_j}+\frac{\nodes_k\pTx^{(n)}_k}{1-\pTx^{(n)}_k}\right)}{\left(1-\pTx^{(n+1)}_j\right)^{\nodes_j}\left(1-\pTx^{(n+1)}_k\right)^{\nodes_k}}=\\
    \frac{a_c+s_c\left(\frac{\nodes_j\pTx^{(n+1)}_j}{1-\pTx^{(n+1)}_j}+\frac{\nodes_k\pTx^{(n+1)}_k}{1-\pTx^{(n+1)}_k}\right)}{\left(1-\pTx^{(n)}_j\right)^{\nodes_j}\left(1-\pTx^{(n)}_k\right)^{\nodes_k}}.
\end{aligned}
\end{equation}
Similarly, the the overall expected accuracy $\bar{\lambda}\left(\bm\pTx^{(n)}\right)$ is
\begin{equation}
\begin{aligned}
  \bar{\lambda}\left(\bm\pTx^{(n)}\right)=&\frac{\left(1-\pTx^{(n)}_j\right)^{\nodes_j}\left(1-\pTx^{(n)}_k\right)^{\nodes_k}}{\ps\left(\bm\pTx^{(n)}\right)}\\ &\times\left[\bar{\lambda}_c a_c+s_c\left(\frac{\nodes_j\pTx^{(n)}_j\lambda_j}{1-\pTx^{(n)}_j}+\frac{\nodes_k\pTx^{(n)}_k\lambda_k}{1-\pTx^{(n)}_k}\right)\right]
\end{aligned}
\end{equation}
and we have $\bar{\lambda}\left(\bm\pTx^{(n+1)}\right)\geq\bar{\lambda}(\bm\pTx^{(n)})$ 
if and only if
\begin{equation}
\begin{aligned}
   \frac{\left[\bar{\lambda}_c a_c+s_c\left(\frac{\nodes_j\pTx^{(n+1)}_j\lambda_j}{1-\pTx^{(n+1)}_j}+\frac{\nodes_k\pTx^{(n+1)}_k\lambda_k}{1-\pTx^{(n+1)}_k}\right)\right]}{\left(1-\pTx^{(n)}_j\right)^{\nodes_j}\left(1-\pTx^{(n)}_k\right)^{\nodes_k}\ps\left(\bm\pTx^{(n+1)}\right)} \geq\\
   \frac{\left[\bar{\lambda}_c a_c+s_c\left(\frac{\nodes_j\pTx^{(n)}_j\lambda_j}{1-\pTx^{(n)}_j}+\frac{\nodes_k\pTx^{(n)}_k\lambda_k}{1-\pTx^{(n)}_k}\right)\right]}{\left(1-\pTx^{(n+1)}_j\right)^{\nodes_j}\left(1-\pTx^{(n+1)}_k\right)^{\nodes_k}\ps\left(\bm\pTx^{(n)}\right)}.
   \label{eq:constr_accuracy}
\end{aligned}
\end{equation}
For the sake of compactness, we define two auxiliary symbols:
\begin{align}
    \xi_j=&\nodes_j\left(\frac{\pTx^{(n+1)}_j}{1-\pTx^{(n+1)}_j}-\frac{\pTx^{(n)}_j}{1-\pTx^{(n)}_j}\right);\\
    \xi_k=&\nodes_k\left(\frac{\pTx^{(n)}_k}{1-\pTx^{(n)}_k}-\frac{\pTx^{(n+1)}_k}{1-\pTx^{(n+1)}_k}\right).
\end{align}
Due to~\eqref{eq:equal_success_gamma}, \eqref{eq:constr_accuracy} is equivalent to
\begin{equation}
       \frac{a_c}{s_c}\left((\bar{\lambda}_c-\lambda_k)\xi_k-(\bar{\lambda}_c-\lambda_j)\xi_j\right)+(\lambda_j-\lambda_k)\xi_j\xi_k\geq0.
\end{equation}

We then introduce an additional Lemma:
\begin{lemma}\label{lm:swap_cond}
    Under the condition in~\eqref{eq:cond_j1}, we have
    \begin{equation}\label{eq:cond_xi}
\xi_j(\lambda_j-\lambda_k)\geq(\xi_j-\xi_k)(\bar{\lambda}_c-\lambda_k).
\end{equation}
\end{lemma}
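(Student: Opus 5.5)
The plan is to make one case split on the sign of $\xi_j-\xi_k$. Before that, I would collect three ordering facts that follow from the water-filling construction in the proof of Lemma~\ref{lm:accuracy}.

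First, since $j<k$ and $\lambda_i$ in \eqref{eq:powerLaw} is non-increasing in $i$, we have $\lambda_j\ge\lambda_k$. Second, every active zone in the combined zone $c$ has index smaller than $k$, because $k=\sup\{i:\pTx^{(n)}_i>0\}$. Hence $\bar\lambda_c$, which is a convex combination of the $\lambda_i$ of active zones, satisfies $\lambda_k\le\bar\lambda_c\le\lambda_0=1$. Third, the transfer raises $\pTx_j$ and lowers $\pTx_k$, and $x\mapsto x/(1-x)$ is increasing, so $\xi_j\ge0$ and $\xi_k\ge0$.

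\emph{Case $\xi_j\le\xi_k$.} Here $(\xi_j-\xi_k)(\bar\lambda_c-\lambda_k)\le0$. The left-hand side $\xi_j(\lambda_j-\lambda_k)$ is non-negative, so \eqref{eq:cond_xi} holds with no further work.

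\emph{Case $\xi_j>\xi_k$.} Dividing by $\xi_j>0$, \eqref{eq:cond_xi} becomes
\begin{equation*}
\lambda_j-\lambda_k\ \ge\ \Bigl(1-\tfrac{\xi_k}{\xi_j}\Bigr)(\bar\lambda_c-\lambda_k).
\end{equation*}
Since $\bar\lambda_c\le 1$, it suffices to prove $\lambda_j-\lambda_k\ge\bigl(1-\xi_k/\xi_j\bigr)(1-\lambda_k)$. The key quantitative step is a lower bound on $\xi_k/\xi_j$ extracted from the equal-success constraint \eqref{eq:equal_success_gamma}. I would rewrite that constraint in terms of the increments of $\pTx/(1-\pTx)$ and of the silence factors $(1-\pTx)^{\nodes}$. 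All probabilities involved lie in $[0,\pTx_{\max}]$, so each multiplicative factor $(1-\pTx_j)$ or $(1-\pTx_k)$ that appears when comparing the two sides costs at most a factor $1-\pTx_{\max}$. This should yield $\xi_k/\xi_j\ge 1-\pTx_{\max}$, that is, $1-\xi_k/\xi_j\le\pTx_{\max}$. Substituting, the target reduces to
\begin{equation*}
\lambda_j\ \ge\ (1-\pTx_{\max})\lambda_k+\pTx_{\max}.
\end{equation*}

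It then remains to show that this last inequality holds for all $j<k$ as soon as it holds for $j=1$, $k=2$, which is exactly \eqref{eq:cond_j1}. Cases where $\lambda$ is clipped at $1/2$ or where $j=0$ (so $\lambda_j=1$) are trivial. Otherwise, $\lambda_j-(1-\pTx_{\max})\lambda_k$ is smallest when $k=j+1$, since lowering $\lambda_k$ only helps. For adjacent zones I would use convexity of $r\mapsto(1+r)^{-\powLaw}$: the gap $\lambda_j-\lambda_{j+1}$ shrinks with $j$, but so does $\lambda_j$. I would therefore compare the function $g(j)=(1+j\rad)^{-\powLaw}-(1-\pTx_{\max})(1+(j+1)\rad)^{-\powLaw}-\pTx_{\max}$ across $j$ and show that the binding case is the innermost non-trivial pair $(1,2)$. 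I expect this monotonicity claim to be the main obstacle, because $g$ need not be monotone for all parameter values. If it fails in general, the fallback is to use that the relevant $\pTx_{\max}$ (around $10^{-4}$) is tiny and argue the bound zone by zone. The other delicate point is making the ratio bound $\xi_k/\xi_j\ge1-\pTx_{\max}$ rigorous from \eqref{eq:equal_success_gamma}, in particular controlling how the terms $a_c$ and $s_c$ enter.
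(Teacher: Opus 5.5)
\paragraph{Where your plan matches the paper.}
Three parts of your plan coincide with the paper's argument for $j\le 1$: the case $\xi_j\le\xi_k$, the trivial case $j=0$, and the reduction of $\xi_j>\xi_k$ to $\lambda_j\ge(1-\pTx_{\max})\lambda_k+\pTx_{\max}$ via $\bar\lambda_c\le 1$ and $\xi_k/\xi_j\ge 1-\pTx_{\max}$. For $j=1$ the worst case $k=2$ is exactly \eqref{eq:cond_j1}. The paper proves the ratio bound separately, as Lemma~\ref{lm:bound_xi}.

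\paragraph{The step that fails: passing to $j\ge 2$.}
The monotonicity runs the other way. Write $g(j)=(\lambda_j-\lambda_{j+1})-\pTx_{\max}(1-\lambda_{j+1})$. The map $i\mapsto\lambda_i$ is convex, and clipping preserves this because it is a maximum of convex functions. So $\lambda_j-\lambda_{j+1}$ is non-increasing, while $1-\lambda_{j+1}$ is non-decreasing. Hence $g$ is non-increasing: $(1,2)$ is the \emph{most} favourable pair, and \eqref{eq:cond_j1} gives no information for larger $j$.

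For large $j$ the gap $\lambda_j-\lambda_{j+1}$ falls below $\pTx_{\max}(1-\lambda_{j+1})$. In the clipped regime $\lambda_j=\lambda_k=1/2$, your target reads $1/2\ge 1/2+\pTx_{\max}/2$. That case is not trivial; it is exactly where the reduction breaks. Taking $\pTx_{\max}$ small only pushes the failure outward, so the fallback is not a proof under \eqref{eq:cond_j1}.

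\paragraph{The ratio-bound sketch.}
Your per-factor heuristic is not the right mechanism. To first order, the equal-success constraint (with $\peras=0$) reads $\xi_j[1-S(1-\pTx_j)]=\xi_k[1-S(1-\pTx_k)]$, where $S=\sum_i\nodes_i\pTx_i/(1-\pTx_i)$. So the ratio is governed by the load rather than by factors $1-\pTx$, and it tends to $\pTx_j/\pTx_k$ as $S\to 1$.

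\paragraph{How the paper handles $j>1$, and why that does not fully close the gap.}
The paper drops $\bar\lambda_c\le 1$ for $j>1$. Zones $0,\dots,j-1$ are saturated, and the other zones of $c$ are less accurate than zone $j$. So $\bar\lambda_c$ is at most the node-weighted mean of $\lambda_0,\dots,\lambda_{j-1}$. Convexity bounds that mean by $1-(1-\lambda_j)\frac{4j^2-3j-1}{6j^2}$; the coefficient $2j^2-3j+1$ printed in the paper is a slip. This $j$-dependent bound is the idea missing from your plan.

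It still allows $\bar\lambda_c-\lambda_j$ as large as at least $(1-\lambda_j)/3$, so it cannot absorb a vanishing $\lambda_j-\lambda_k$. Concretely, suppose $\lambda_j=\lambda_k=1/2$ and $\xi_j>\xi_k$. Then the left side of \eqref{eq:cond_xi} is $0$, while the right side is $(\xi_j-\xi_k)(\bar\lambda_c-1/2)>0$. By the first-order relation above, $\xi_j>\xi_k$ does occur at load below one whenever $\pTx_j<\pTx_k$ along the transfer.

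Clipping is compatible with the hypothesis. For example, with $\powLaw=0.2$, $\rad=10$ and $\pTx_{\max}=10^{-4}$, \eqref{eq:cond_j1} holds and every zone $i\ge 4$ is clipped. So beyond $j=1$ the statement needs an additional hypothesis, and a correct proof must use it. Possible choices are a lower bound on the gaps $\lambda_j-\lambda_{j+1}$ of the active zones relative to $\pTx_{\max}$, together with a load restriction for the ratio bound.
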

\begin{IEEEproof}
    The proof of the Lemma is provided in Appendix~\ref{app:lemma5}.
\end{IEEEproof}
We then note that~\eqref{eq:cond_xi} represents a more restrictive condition, so that~\eqref{eq:constr_accuracy} always holds if~\eqref{eq:cond_xi} is verified.
Lemma~\ref{lm:swap_cond} thus proves that each water-filling transfer improves the expected accuracy of received packets. We thus proved that any individual step in the sequence has $\ps\left(\bm\pTx^{(n+1)}\right)=\ps\left(\bm\pTx^{(n)}\right)$ and $\bar{\lambda}\left(\bm\pTx^{(n+1)}\right)\geq\bar{\lambda}\left(\bm\pTx^{(n)}\right)$. As the sequence must reach a threshold-based policy, the Lemma is proven.

\subsection{Proof of Lemma 4}~\label{app:lemma5}

We first note that, if we have $j=0$, i.e., the innermost zone does not have a transmission probability $\pTx_{\max}$, while external zones have a non-zero transmission probability, the condition is trivially true, as $\lambda_j\geq\bar{\lambda}_c$. We must then handle the two cases with $j=1$ and $j>1$ separately, but first, we introduce another Lemma to simplify the condition in~\eqref{eq:cond_xi}.
\begin{lemma}\label{lm:bound_xi}
    Considering the $n$-th step in the sequence, the ratio $\xi_k/\xi_j$ must respect lower bound
    \begin{equation}\label{eq:xi_bound_value}
        \frac{\xi_k}{\xi_j}\geq1-\pTx_{\max}.
    \end{equation}
\end{lemma}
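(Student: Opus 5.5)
The plan is to treat the water-filling transfer from zone $k$ to zone $j$ at step $n$ as a continuous path that keeps $\ps$ constant. Along this path I will bound the infinitesimal ratio $d\xi_k/d\xi_j$ from below by $1-\pTx_{\max}$, and then integrate to get \eqref{eq:xi_bound_value}. As in the proof of Lemma~\ref{lm:accuracy}, I set $\peras=0$ and recover the general case through $z_i=(1-\peras)\pTx_i$.

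\textbf{Step 1: reparametrize by odds.} I use the odds variables $u_i=\pTx_i/(1-\pTx_i)$, so that $1-\pTx_i=(1+u_i)^{-1}$. The success probability then reads
\begin{equation}
\ps=(1+u_j)^{-\nodes_j}(1+u_k)^{-\nodes_k}\,S,\qquad S=a_c+s_c(\nodes_j u_j+\nodes_k u_k),
\end{equation}
with $a_c,s_c$ fixed. The quantities $\xi_j$ and $\xi_k$ are exactly the total increment of $v_j:=\nodes_j u_j$ and the total decrement of $v_k:=\nodes_k u_k$ over the transfer.

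\textbf{Step 2: differentiate along the iso-throughput curve.} Setting $d\log\ps=0$ gives
\begin{equation}
dv_j\Big(\tfrac{s_c}{S}-(1-\pTx_j)\Big)=dv_k\Big(\tfrac{s_c}{S}-(1-\pTx_k)\Big),
\end{equation}
where $dv_j\ge0$ and $dv_k\ge 0$ denotes the decrease in $v_k$. Writing $x:=s_c/S$, this yields
\begin{equation}
\frac{dv_k}{dv_j}=\frac{x-1+\pTx_j}{x-1+\pTx_k}.
\end{equation}
For this to be meaningful I need $x>1-\pTx_k$, i.e., $\partial\ps/\partial u_k>0$, so that $\ps$ is increasing in each $\pTx_i$. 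This holds in the sub-unit-load regime that the proof of Theorem~\ref{th:opt_pTx} restricts to without loss of optimality, the same monotonicity already used in Lemma~\ref{lm:eq_success}.

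\textbf{Step 3: the pointwise bound.} It remains to show that $x-1+\pTx_j\ge(1-\pTx_{\max})(x-1+\pTx_k)$ at every point of the path. Since $\pTx_j\ge0$ and $\pTx_k\le\pTx_{\max}$, it is enough to show
\begin{equation}
x-1\ge(1-\pTx_{\max})(x-1+\pTx_{\max}),
\end{equation}
which simplifies to $x\ge 1-\pTx_{\max}$.

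This is the step I expect to be the main obstacle. My first attempt is to bound $S/s_c=a_c/s_c+\nodes_ju_j+\nodes_ku_k$. I will use that $a_c/s_c=\sum_{i\in c}\nodes_i u_i$ (a single transmitter among the combined zone, divided by all silent) to get
\begin{equation}
S/s_c=\sum_i \nodes_i u_i\le \frac{\sum_i\nodes_i\pTx_i}{1-\pTx_{\max}}\le\frac{1}{1-\pTx_{\max}},
\end{equation}
using the load constraint $\sum_i\nodes_i\pTx_i\le1$. This gives $x\ge1-\pTx_{\max}$, as required.

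If the load constraint only holds approximately, or the endpoint cases $\pTx_j=\pTx_{\max}$ and $\pTx_k=0$ need care, I would fall back on bounding directly with the current $\pTx_j,\pTx_k$. That is where the hypothesis \eqref{eq:cond_j1} might need to be invoked.

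\textbf{Step 4: integrate.} Integrating $dv_k\ge(1-\pTx_{\max})\,dv_j$ along the path from $\bm\pTx^{(n)}$ to $\bm\pTx^{(n+1)}$, whose existence is guaranteed by Lemma~\ref{lm:eq_success}, gives $\xi_k\ge(1-\pTx_{\max})\xi_j$. This is \eqref{eq:xi_bound_value}.
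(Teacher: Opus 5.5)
Your Step 3 fails, and the problem is not only algebraic. The reduced inequality $x-1\ge(1-\zeta_{\max})(x-1+\zeta_{\max})$ is equivalent to $\zeta_{\max}(x-1)\ge\zeta_{\max}(1-\zeta_{\max})$, i.e., to $x\ge 2-\zeta_{\max}$, not $x\ge 1-\zeta_{\max}$. Write $L=\sum_i m_iu_i$ for the total odds-load, so that $x=1/L$. You would then need $L\le 1/(2-\zeta_{\max})$, which means a load of at most about one half. The sub-unit-load restriction only gives the $x\ge 1-\zeta_{\max}$ you derived.

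Nor is the reduction merely lossy. Your Step~2 formula is correct, and it shows that the pointwise bound is false in the admissible regime. At $\zeta_j=0$, $\zeta_k=\zeta_{\max}$, $x=1+\delta$, the ratio equals $\delta/(\delta+\zeta_{\max})$, which tends to $0$ as $L\to 1$.

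Step~4 cannot rescue this. When $m_k\gg m_j$ and $\zeta_j^{(n)}=0$, $\zeta_k$ hardly moves while $\zeta_j$ sweeps $[0,\zeta_{\max}]$. The integrand therefore stays near $(\delta+\zeta_j)/(\delta+\zeta_{\max})$ along the whole path. Falling back on \eqref{eq:cond_j1} does not help either: it constrains only the accuracies $\lambda_i$, whereas $\xi_k/\xi_j$ depends only on the transmission probabilities and the $m_i$.

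In fact, your differential shows that \eqref{eq:xi_bound_value} is false without further assumptions. Take $p_e=0$ and $\zeta_{\max}=0.01$, with:
\begin{itemize}
\item one node in zone $0$ at $\zeta_{\max}$;
\item one node in zone $j=1$ with $\zeta_j^{(n)}=0$;
\item $79$ nodes in zone $k=2$ with $\zeta_k^{(n)}=\zeta_{\max}$.
\end{itemize}
The $m_i$ are non-decreasing and the load is $0.8$. Here $x=99/80$, and your ratio at the start of the path is $0.2375/0.2475\approx 0.96$.

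Saturating $\zeta_j$ at equal throughput gives $\xi_j=1/99$, and $r=\xi_k/\xi_j$ solves $\log\bigl(1+(1-r)/80\bigr)-\log(100/99)-79\log(1-r/7900)=0$. The left-hand side is strictly decreasing on the feasible range and already negative at $r=0.99$. The root is $r\approx 0.980<1-\zeta_{\max}$. The same happens with $m_i\propto 2i+1$: for $m_0,m_1,m_2=15,45,75$ at load $0.9$, one gets $r\approx 0.98$.

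The paper takes a different route. It introduces $\psi^*$, for which $\xi_k/\xi_j=(1-\zeta_j^{(n)})/(1-\zeta_k^{(n)})$, asserts that throughput does not decrease at $\psi^*$, and concludes $\psi\ge\psi^*$. By your formula, moving in that direction gives $d\log p_s=x(\zeta_j-\zeta_k)\,dv_j/(1-\zeta_k)$. The paper's assertion therefore holds only when $\zeta_j^{(n)}\ge\zeta_k^{(n)}$. In the example the true ratio is below $1$, let alone above $1/0.99$.

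What your approach does prove is a restricted version. If $\zeta_j^{(n)}\ge\zeta_k^{(n)}$, the integrand is at least $1$ along the whole path, so $\xi_k\ge\xi_j$; this is stronger than the paper's intermediate bound. More generally, the claimed bound holds whenever $x\ge 1+[(1-\zeta_{\max})\zeta_k-\zeta_j]/\zeta_{\max}$ along the path, for instance when $L\le 1/(2-\zeta_{\max})$ throughout.
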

\begin{IEEEproof}
    The proof of the Lemma is provided in Appendix~\ref{app:lemma6}.
\end{IEEEproof}

In the case in which $j=1$, we consider the loosest possible bound, i.e., $\bar{\lambda}_c\leq1$. We then substitute the bound in~\eqref{eq:xi_bound_value} into~\eqref{eq:cond_xi} to get
\begin{equation}
    \lambda_j\geq(1-\pTx_{\max})\lambda_k+\pTx_{\max}.
\end{equation}
Naturally, the worst case is $k=2$, as it represents the highest possible value of $\lambda_k$. We substitute the definition from~\eqref{eq:powerLaw} to obtain
\begin{equation}\label{eq:alpha_cond}
    \frac{1}{(1+\rad)^{\powLaw}}\geq\frac{1-\pTx_{\max}}{(1+2\rad)^{\powLaw}}+\pTx_{\max},
\end{equation}
which is identical to the condition in~\eqref{eq:cond_j1}.

We now consider the case in which $j>1$, where we can exploit the structure of $\lambda_i$ that we defined in~\eqref{eq:powerLaw} to obtain a tighter upper bound on $\bar{\lambda}_c$. We first note that $c$ contains zones with indices $\{0,\ldots,j-1\}\cup\{j+1,\ldots,k-1\}$. The former set has $\lambda_i>\lambda_j$, while the latter has $\lambda_i<\lambda_j$, as $\lambda$ is a strictly decreasing function. We then note that considering only zones $\{0,\ldots,j-1\}$ yields an upper bound on $\bar{\lambda}_c$. As all these zones have $\pTx^{(n)}_i=\pTx_{\max}$ by the definition of $j$, we know that
\begin{equation}
    \bar{\lambda}_c\leq\sum_{i=0}^{j-1}\frac{\nodes_i\lambda_i}{\sum_{\ell=0}^{j-1}\nodes_{\ell}}=\sum_{i=0}^{j-1}\frac{2i+1}{(1+iR)^{\powLaw}j^2}.
\end{equation}
where we consider the expected number of nodes $\mathbb{E}[\nodes_i]=(2i+1)\nodes K^{-2}$ and $\lambda_i=(1+i\rad)^{-\powLaw}$ by applying the definitions in~\eqref{eq:region_def} and~\eqref{eq:powerLaw}. Under our model, the actual values of $\nodes_j$ and $\nodes_k$ are random, as nodes are uniformly distributed across the area. However, we use the expected value as a proxy for the sake of readability. This does not affect the rest of the proof, as the linearity of the expected value allows us to extend the result directly. We note that $\partial^2\lambda_i/\partial i^2$ is always positive for $\powLaw\geq0$, and $\lambda_i$ is thus convex in $i$. By convexity, we get
\begin{equation}
    \lambda_i\leq\frac{(j-i)\lambda_0+i\lambda_j}{j}
\end{equation}
and $\lambda_0=1$ by the definition in~\eqref{eq:powerLaw}. We then get a looser bound on $\bar{\lambda}_c$, as all values are positive:
\begin{equation}
\begin{aligned}
    \bar{\lambda}_c\leq&\sum_{i=0}^{j-1}\frac{(2i+1)(j-i(1-\lambda_j))}{j^2}\\
    =&\frac{2j^2+3j+1}{6j^2}+\frac{(2j^2-3j+1)\lambda_j}{6j^2}.
\end{aligned}
\end{equation}
We then substitute the bound into~\eqref{eq:cond_xi}:
\begin{equation}
\begin{aligned}
\lambda_j\left[\xi_j(4j^2+3j-1)+\xi_k(2j^2-3j+1)\right]\geq\\
\left[\xi_j-\xi_k(1-\lambda_k)\right](2j^2+3j+1).
\end{aligned}
\end{equation}
As $\lambda_k\leq\lambda_j$, we can further restrict ourselves to
\begin{equation}\label{eq:cond_minlambda}
\lambda_j\left[\xi_j(4j^2+3j-1)-6\xi_kj\right]\geq(\xi_j-\xi_k)(2j^2+3j+1)
\end{equation}
and we have
\begin{equation}
  (\xi_j-\xi_k)(2j^2+3j+1)=\left(1-\frac{\xi_k}{\xi_j}\right)\frac{(2j^2+3j+1)}{4j^2+3j-1-\frac{6\xi_k j}{\xi_j}}.
\end{equation}
If we take the derivative of the right side of the inequality with respect to $\frac{\xi_k}{\xi_j}$, we get
\begin{equation}
    \frac{\partial\left(1-\frac{\xi_k}{\xi_j}\right)\frac{(2j^2+3j+1)}{4j^2+3j-1-\frac{6\xi_k j}{\xi_j}}}{\partial\frac{\xi_k}{\xi_j}}=-\frac{8j^4+6j^3-7j^2-6j-1}{\left(4j^2+3j-1-\frac{6\xi_k j}{\xi_j}\right)^2},
\end{equation}
which is always negative for $j>1$. We then note that setting $\xi_k/\xi_j\geq 1-\pTx_{\max}$ makes the condition in~\eqref{eq:cond_minlambda} always true for $\pTx_{\max}>\frac{1}{2}$.

\subsection{Proof of Lemma 5}\label{app:lemma6}

We first introduce two parameters, $\varphi=\pTx^{(n+1)}_j-\pTx^{(n)}_j$ and $\psi=\left(\pTx^{(n)}_k-\pTx^{(n+1)}_k\right)\varphi^{-1}$. By the definition of the swap operation between zones $j$ and $k$, we must have $\varphi\in\left(0,\pTx_{\max}-\pTx^{(n)}_j\right]$ and $\psi\in\left(0,\pTx^{(n)}_k\varphi^{-1}\right]$. 
Substituting the two values into~\eqref{eq:equal_success_gamma}, and using the relations
\begin{align}
    \frac{\nodes_j\left(\pTx^{(n)}_j+\varphi\right)}{1-\pTx^{(n)}_j-\varphi}=&\frac{\nodes_j\pTx^{(n)}_j}{1-\pTx^{(n)}_j}+\frac{\nodes_j\varphi}{\left(1-\pTx^{(n)}_j\right)\left(1-\pTx^{(n)}_j-\varphi\right)};\\
    \frac{\nodes_k\!\left(\pTx^{(n)}_k-\psi\varphi\right)}{1-\pTx^{(n)}_k+\psi\varphi}=&\frac{\nodes_k\pTx^{(n)}_k}{1-\pTx^{(n)}_k}-\frac{\nodes_k\psi\varphi}{\left(1-\pTx^{(n)}_k\right)\!\left(1-\pTx^{(n)}_k+\psi\varphi\right)},
\end{align}
the condition can be manipulated to get 
\begin{equation}\label{eq:cond_xi_ab}
\begin{aligned}
 \frac{A(0)}{A(\varphi)}=&1+\frac{B\varphi\nodes_j}{\left(1-\pTx^{(n)}_j\right)\left(1-\pTx^{(n)}_j-\varphi\right)}\\
 &-\frac{B\psi\varphi\nodes_k}{\left(1-\pTx^{(n)}_k\right)\left(1-\pTx^{(n)}_k+\psi\varphi\right)},
\end{aligned}
\end{equation}
where we introduced utility symbols
\begin{align}
  A(\varphi)=&\left(1-\pTx^{(n)}_j-\varphi\right)^{\nodes_j}\left((1-\pTx^{(n)}_k+\psi\varphi\right)^{\nodes_k};\\
  B=&s_c\left(a_c+s_c\left(\frac{\nodes_j\pTx^{(n)}_j}{1-\pTx^{(n)}_j}+\frac{\nodes_k\pTx^{(n)}_k}{1-\pTx^{(n)}_k}\right)\right)^{-1}.
\end{align}

We first note that the condition in~\eqref{eq:cond_xi_ab} is trivially true if $\varphi=0$, as the two solutions are identical. 
We then compute the condition for the derivative of the left side of~\eqref{eq:cond_xi_ab} with respect to $\varphi$ to be positive:
\begin{equation}
    \begin{aligned}
        A(\varphi)B\left(\frac{\nodes_j}{\left(1-\pTx^{(n)}_j-\varphi\right)^2}-\frac{\nodes_k\psi}{\left(1-\pTx^{(n)}_k+\psi\varphi\right)^2}\right)\\+\frac{A(0)}{A(\varphi)}\left(\frac{A(\varphi)\nodes_k\psi}{1-\pTx^{(n)}_k+\psi\varphi}-\frac{A(\varphi)\nodes_j}{1-\pTx^{(n)}_j-\varphi}\right)\geq0.
    \end{aligned}
\end{equation}
We then define value $\psi^*$ as follows:
\begin{equation}
    \psi^*=\frac{\nodes_j\left(1-\pTx^{(n)}_k\right)}{\nodes_k\left(1-\pTx^{(n)}_j-\varphi\right)-\nodes_j\varphi}.\label{eq:psi_bound}
\end{equation}
If we set $\psi=\psi^*$, $\ps\left(\bm\pTx^{(n+1)}\right)=\ps\left(\bm\pTx^{(n)}\right)$ for $\varphi=0$ and $\ps\left(\bm\pTx^{(n+1)}\right)\geq\ps\left(\bm\pTx^{(n)}\right)$ for $\varphi>0$. As solution $\bm\pTx^{(n)}$ is not threshold-based, we must have $\varphi>0$. We then remark that the system 
clearly has to operate at load lower than $1$, and thus the success probability must be a strictly decreasing function of $\psi$, which controls the reduction of the transmission probability from zone $k$. This means that, in order to have $\ps\left(\bm\pTx^{(n+1)}\right)=\ps\left(\bm\pTx^{(n)}\right)$, we must have $\psi\geq\psi^*$. We then compute the value of $\xi_k/\xi_j$:
\begin{equation}
    \frac{\xi_k}{\xi_j}=
    \frac{\nodes_k\psi\varphi\left(1-\pTx^{(n)}_j\right)\left(1-\pTx^{(n)}_j-\varphi\right)}{\nodes_j\varphi\left(1-\pTx^{(n)}_k\right)\left(1-\pTx^{(n)}_k+\psi\varphi\right)}.
\end{equation}
As the value is a strictly increasing function of $\psi$, we can substitute $\psi^*$ to obtain a lower bound. After some algebraic manipulations, we get:
\begin{equation}
   \frac{\xi_k}{\xi_j}\geq\frac{1-\pTx^{(n)}_j}{1-\pTx^{(n)}_k}\geq 1-\pTx_{\max}.
\end{equation}

\bibliographystyle{IEEEtran}
\bibliography{IEEEabrv,AoI24,biblio_RandomAccess,bibliography}

@article{testi2025packet,
  author={Testi, Enrico and Paolini, Enrico},
  journal={IEEE Internet Things J.}, 
  title={Packet Collision Probability of Direct-to-Satellite IoT Systems}, 
  year={2025},
  volume={12},
  number={2},
  pages={1843--1855},
}

@ARTICLE{lee2024handover,
  author={Lee, Ju-Hyung and Park, Chanyoung and Park, Soohyun and Molisch, Andreas F.},
  journal={IEEE Trans. Wireless Commun.}, 
  title={Handover Protocol Learning for {LEO} Satellite Networks: Access Delay and Collision Minimization}, 
  year={2024},
  volume={23},
  number={7},
  pages={7624--7637}
}

@article{callebaut2019characterization,
  title={Characterization of {LoRa} point-to-point path loss: Measurement campaigns and modeling considering censored data},
  author={Callebaut, Gilles and Van der Perre, Liesbet},
  journal={IEEE Internet of Things Journal},
  volume={7},
  number={3},
  pages={1910--1918},
  year={2020},
  publisher={IEEE}
}

@article{yavascan2021analysis,
  title={Analysis of slotted {ALOHA} with an age threshold},
  author={Yavascan, Orhan Tahir and Uysal, Elif},
  journal={IEEE J. Sel. Areas Commun.},
  volume={39},
  number={5},
  pages={1456--1470},
  year={2021},
  publisher={IEEE}
}

@article{beltramelli2020lora,
  title={{LoRa} beyond {ALOHA}: An investigation of alternative random access protocols},
  author={Beltramelli, Luca and Mahmood, Aamir and {\"O}sterberg, Patrik and Gidlund, Mikael},
  journal={IEEE Trans. Industrial Informatics},
  volume={17},
  number={5},
  pages={3544--3554},
  year={2020},
  publisher={IEEE}
}

@article{Munari21_TCOM,
  title={Modern random access: an age of information perspective on irregular repetition slotted {ALOHA}},
  author={Munari, Andrea},
  journal=IEEE_J_COM,
  year={2021},
  volume = 69,
  number = 6,
  month = jun,
  pages = {3572-3585}
}

@inproceedings{Kellerer19,
  title={Age-of-information vs. value-of-information scheduling for cellular networked control systems},
  author={Ayan, Onur and Vilgelm, Mikhail and Kl{\"u}gel, Markus and Hirche, Sandra and Kellerer, Wolfgang},
  booktitle={Proc.\ ACM/IEEE ICCPS},
  pages={109--117},
  month={April},
  year={2019}
}

@article{B,
author = "Leonardo Badia",
title = "On the Impact of Correlated Arrivals and Errors on {ARQ} Delay Terms",
journal = IEEE_J_COM,
volume = 57,
number = 2,
pages = {334--338},
month = feb,
year = 2009,
}

@inproceedings{yates2017status,
  title={Status updates over unreliable multiaccess channels},
  author={Yates, Roy D and Kaul, Sanjit K},
  booktitle={Proc.\ IEEE ISIT},
  pages={331--335},
  month={June},
  year={2017},
}

@article{guo2021enabling,
  title={Enabling massive {IoT} toward {6G}: A comprehensive survey},
  author={Guo, Fengxian and Yu, F Richard and Zhang, Heli and Li, Xi and Ji, Hong and Leung, Victor CM},
  journal=IEEE_J_IOT,
  volume={8},
  number={15},
  pages={11891--11915},
  year={2021},
}

@article{Hribar2018_IoT,
  title={Using correlated information to extend device lifetime},
  author={Hribar, Jernej and Costa, Maice and Kaminski, Nicholas and DaSilva, Luiz A},
  journal=IEEE_J_IOT,
  volume={6},
  number={2},
  pages={2439--2448},
  year={2018},
  publisher={IEEE}
}

@inproceedings{Munari25_SPAWC,
    author={A. Munari and F. Chiariotti and L. Badia and P. Popovski},  
    title={Spatio-Temporal Information Freshness for Remote Source Monitoring in {IoT} Systems},
    booktitle={Proc.\ IEEE SPAWC},
    year={2025},
    month={July}
}

@ARTICLE{rabiner1989tutorial,
  author={Rabiner, L.R.},
  journal={Proc. IEEE}, 
  title={A tutorial on hidden {M}arkov models and selected applications in speech recognition}, 
  year={1989},
  volume={77},
  number={2},
  pages={257-286},
}

@STRING{IEEE_J_COM        = "{IEEE} Trans. Commun."}

@STRING{IEEE_J_IT         = "{IEEE} Trans. Inf. Theory"}

@STRING{IEEE_J_IOT        = "{IEEE} Internet Things J."}

@STRING{IEEE_J_NET        = "{IEEE/ACM} Trans. Netw."}

@Book{Cover_Thomas,
  Title                    = {Elements of information theory},
  Author                   = {T. M. Cover and J. A. Thomas},
  Publisher                = {Wiley},
  Year                     = {2006},
  Address                  = {New York},
  Edition                  = {2nd},
}

@book{kovalevsky,
title ={\foreignlanguage{russian}{Читающие автоматы и распознавание образов} [{Character readers and pattern recognition}]},
Author={Valery A. Kovalevsky},
Publisher = {Naukova Dumka},
Year = {1968},
Address = {Kyiv, Ukraine}
}

@Article{Abramson77:PacketBroadcasting,
  author                    = {Norman Abramson},
  title                     = {The Throughput of Packet Broadcasting Channels},
  journal                   = IEEE_J_COM,
  year                      = {1977},
  volume                    = {COM-25},
  number                    = {1},
  pages                     = {117--128},
}

@Article{Ivanonv17:TCOM,
  author                    = {Ivanov, M. and Br\"annstr\"om, F. and Graell i Amat, A. and Popovski, P.},
  title                     = {Broadcast Coded Slotted {ALOHA}: A Finite Frame Length Analysis},
  journal                   = IEEE_J_COM,
  year                      = {2017},
  volume                    = {65},
  number                    = {2},
  pages                     = {651-662},
}

@InProceedings{OnOff2003,
  Title                    = {{The On-Off Fading Channel}},
  Author                   = {Perron, E. and Rezaeian, M. and Grant, A.}, 
  month={June},
  Booktitle                = {Proc. IEEE ISIT},
  Year                     = {2003},
}

@InProceedings{Sun16:PECCSA,
  author =    {Sun, Z. and Xie, Y. and Yuan, J. and Yang, T.},
  title =     {{Coded Slotted ALOHA Schemes for Erasure Channels}},
  booktitle = {Proc. IEEE ICC Workshops},
  year =      {2016},
  month =     {May}
}

@article{chen2019minimum,
  title={Minimum error entropy {Kalman} filter},
  author={Chen, Badong and Dang, Lujuan and Gu, Yuantao and Zheng, Nanning and Pr{\'\i}ncipe, Jos{\'e} C},
  journal={IEEE Trans. Syst., Man, Cybern. Syst.},
  volume={51},
  number={9},
  pages={5819--5829},
  year={2019},
  publisher={IEEE}
}

@article{saridis2002entropy,
  title={Entropy formulation of optimal and adaptive control},
  author={Saridis, George N},
  journal={IEEE Trans. Autom. Control},
  volume={33},
  number={8},
  pages={713--721},
  year={2002},
  publisher={IEEE}
}

@ARTICLE{Cocco23_JSAIT,
  author={Cocco, Giuseppe and Munari, Andrea and Liva, Gianluigi},
  journal={IEEE J. Sel. Areas Inf. Theory}, 
  title={Remote Monitoring of Two-State {Markov sources} via Random Access Channels: An Information Freshness vs. State Estimation Entropy Perspective}, 
  year={2023},
  volume={4},
  number={},
  pages={651-666},
}

@inproceedings{Munari25_ISIT,
  title={On the Uncertainty of a Simple Estimator for Remote Source Monitoring over {ALOHA} Channels},
  author={Andrea Munari},
  booktitle={Proc.\ IEEE ISIT},
  year={2025},
  month={June}
}

@INPROCEEDINGS{Asgari25,
  author={Asgari, Houman and Munari, Andrea and Liva, Gianluigi and Cocco, Giuseppe},
  booktitle={Proc. ITG SCC}, 
  title={Remote Monitoring of Two-State {Markov} Sources in Random Access Channels: Joint Model and State Estimation}, 
  year={2025},
  month={March}
}

@ARTICLE{Liew22_TIT,
  author={Chen, Gongpu and Liew, Soung-Chang and Shao, Yulin},
  journal=IEEE_J_IT, 
  title={Uncertainty-of-Information Scheduling: A Restless Multiarmed Bandit Framework}, 
  year={2022},
  volume={68},
  number={9},
  pages={6151-6173},
  }

@techreport{ericsson2024mobility,
    author = {Ericsson},
    title = {Mobility Report -- {November} 2024},
    institution = {Ericsson},
    year = 2024
}

@ARTICLE{Yates19_TIT,
author={R. D. {Yates} and S. K. {Kaul}},
journal=IEEE_J_IT,
title={The Age of Information: Real-Time Status Updating by Multiple Sources},
year={2019},
volume={65},
number={3}
}

@article{zancanaro2023modeling,
  title={Modeling value of information in remote sensing from correlated sources},
  author={Zancanaro, Alberto and Cisotto, Giulia and Badia, Leonardo},
  journal={Comput. Commun.},
  volume={203},
  pages={289--297},
  year={2023},
  publisher={Elsevier}
}

@Article{Ephremides19_AoII,
  author =  {A. Maatouk and S. Kriouile and M. Assaad and A. Ephremides},
  title =   {The Age of Incorrect Information: a new Performance Metric for Status Updates},
  journal= IEEE_J_NET,
  volume = {28},
  number = {5},
  year  = 2020,
}

@INPROCEEDINGS{Yates17:AoI_SA,
author        ={R. Yates and S. Kaul},
booktitle     ={Proc. IEEE ISIT},
title         ={Status updates over unreliable multiaccess channels},
year          ={2017},
}

@INPROCEEDINGS{Modiano18_AoI,
author      ={R. Talak and S. Karaman and E. Modiano},
booktitle   ={Proc. IEEE SPAWC},
title       ={Distributed Scheduling Algorithms for Optimizing Information Freshness in Wireless Networks},
year        ={2018},
month       ={June},
}

@article{gunduz2023timely,
  title={Timely and massive communication in {6G}: Pragmatics, learning, and inference},
  author={G{\"u}nd{\"u}z, Deniz and Chiariotti, Federico and Huang, Kaibin and Kal{\o}r, Anders E and Kobus, Szymon and Popovski, Petar},
  journal={IEEE BITS Inf. Theory Mag.},
  volume={3},
  number={1},
  pages={27--40},
  year={2023},
  publisher={IEEE}
}

@INPROCEEDINGS{Kaul11_SECON,
author={S. {Kaul} and M. {Gruteser} and V. {Rai} and J. {Kenney}},
booktitle={Proc. IEEE SECON},
title={Minimizing age of information in vehicular networks},
year={2011},
month={June}
}

@article{kalor2024wireless,
  title={Wireless {6G} connectivity for massive number of devices and critical services},
  author={Kal\o{}r, Anders E and Durisi, Giuseppe and Coleri, Sinem and Parkvall, Stefan and Yu, Wei and Mueller, Andreas and Popovski, Petar},
  journal={Proc. IEEE},
  year={2024},
  publisher={IEEE}
}

@article{yang2021spatiotemporal,
  title={Spatiotemporal analysis for age of information in random access networks under last-come first-serve with replacement protocol},
  author={Yang, Howard H and Arafa, Ahmed and Quek, Tony QS and Poor, H Vincent},
  journal={IEEE Trans. Wireless Commun.},
  volume={21},
  number={4},
  pages={2813--2829},
  year={2021},
  publisher={IEEE}
}

@article{tong2022age,
  title={Age-of-information oriented scheduling for multichannel {IoT} systems with correlated sources},
  author={Tong, Jingwen and Fu, Liqun and Han, Zhu},
  journal={IEEE Trans. Wireless Commun.},
  volume={21},
  number={11},
  pages={9775--9790},
  year={2022},
  publisher={IEEE}
}

@article{fidler20242d,
  title={{2D-AoI}: Age-of-Information of Distributed Sensors for Spatio-Temporal Processes},
  author={Fidler, Markus and Gallistl, Flavio and Champati, Jaya Prakash and Widmer, Joerg},
  journal={IEEE Trans. Commun.}, 
  year={2026},
  volume={74},
  pages={645-661}
}

@inproceedings{luo2024minimizing,
  title={Minimizing the age of missed and false alarms in remote estimation of {Markov} sources},
  author={Luo, Jiping and Pappas, Nikolaos},
  booktitle={Proc. 25th ACM MOBIHOC},
  pages={381--386},
  month={October},
  year={2024}
}

@article{salimnejad2023state,
  title={State-aware real-time tracking and remote reconstruction of a {Markov} source},
  author={Salimnejad, Mehrdad and Kountouris, Marios and Pappas, Nikolaos},
  journal={J. Commun. Netw.},
  volume={25},
  number={5},
  pages={657--669},
  year={2023},
  publisher={KICS}
}

@ARTICLE{talli2025pragmatic,
  author={Talli, Pietro and Santi, Edoardo David and Chiariotti, Federico and Soleymani, Touraj and Mason, Federico and Zanella, Andrea and Gündüz, Deniz},
  journal={IEEE J. Sel. Areas Commun.}, 
  title={Pragmatic Communication for Remote Control of Finite-State {Markov} Processes}, 
  year={2025},
  volume={43},
  number={7},
  pages={2589--2603},
}

@article{holm2023goal,
  title={Goal-oriented scheduling in sensor networks with application timing awareness},
  author={Holm, Josefine and Chiariotti, Federico and Kal{\o}r, Anders E and Soret, Beatriz and Pedersen, Torben Bach and Popovski, Petar},
  journal={IEEE Trans. Commun.},
  volume={71},
  number={8},
  pages={4513--4527},
  year={2023},
  publisher={IEEE}
}

@ARTICLE{zakeri2025semantic,
  author={Zakeri, Abolfazl and Moltafet, Mohammad and Codreanu, Marian},
  journal={IEEE Trans. Commun.}, 
  title={Semantic-Aware Sampling and Transmission in Real-Time Tracking Systems: A {POMDP} Approach}, 
  year={2025},
  volume={73},
  number={7},
  pages={4898--4913}
}

@ARTICLE{cosandal2025multi,
  author={Cosandal, Ismail and Akar, Nail and Ulukus, Sennur},
  journal={IEEE Trans. Inf. Theory}, 
  title={Multi-Threshold AoII-Optimum Sampling Policies for Continuous-Time Markov Chain Information Sources}, 
  year={2025},
  volume={71},
  number={9},
  pages={6968--6988},
}

@inproceedings{cosandal2024joint,
  title={Joint age-state belief is all you need: Minimizing {AoII} via pull-based remote estimation},
  author={Cosandal, Ismail and Ulukus, Sennur and Akar, Nail},
  booktitle={Proc. IEEE ICC Workshops},
  month={June},  
  pages={1098--1103},
  year={2025}
}

@inproceedings{liyanaarachchi2025structured,
  title={Structured Estimators: A New Perspective on Information Freshness},
  author={Liyanaarachchi, Sahan and Ulukus, Sennur and Akar, Nail},
  booktitle={Proc. IEEE ITW},
  month=sep,
  year={2025}
}

@ARTICLE{chen2025preempting,
  author={Chen, Yutao and Ephremides, Anthony},
  journal={IEEE Trans. Netw.}, 
  title={Preempting to Minimize Age of Incorrect Information Under Transmission Delay}, 
  year={2026},
  volume={34},
  number={},
  pages={668-680}
}

@INPROCEEDINGS{rezaeianPercom2007,  
author={Rezaeian, Mohammad},  
booktitle={Proc. IEEE PERCOM Workshops},   
title={Sensor Scheduling for Optimal Observability Using Estimation Entropy},   year={2007},
month={March},
}

@article{rezaeianArxiv2006,
  title = {Hidden {Markov} Process: A New Representation, Entropy Rate and Estimation Entropy},
  author={Rezaeian, Mohammad},  
  publisher = {arXiv},
  journal={arXiv preprint cs/0606114},
  year = {2006}
}

@ARTICLE{luo_TIT2009,  
author={Luo, Jun and Guo, Dongning},  
journal={IEEE Trans. Inf. Theory},   
title={On the Entropy Rate of Hidden {Markov} Processes Observed Through Arbitrary Memoryless Channels},   
year={2009},  volume={55},  
number={4},  pages={1460-1467},  doi={10.1109/TIT.2009.2013030}}

@ARTICLE{Liew24_TIT,
  author={Chen, Gongpu and Liew, Soung-Chang},
  journal=IEEE_J_IT, 
  title={An Index Policy for Minimizing the Uncertainty-of-Information of {Markov} Sources}, 
  year={2024},
  volume={70},
  number={1},
  pages={698-721},
  }

@ARTICLE{Feder94_TIT,
  author={Feder, M. and Merhav, N.},
  journal=IEEE_J_IT, 
  title={Relations between entropy and error probability}, 
  year={1994},
  volume={40},
  number={1},
  pages={259-266},
}

@book{TaylorKarlin1998,
  author    = {Taylor, Howard and Karlin, Samuel},
  title     = {An Introduction to Stochastic Modeling},
  edition   = {3rd},
  publisher = {Academic Press},
  address   = {San Diego, CA, USA},
  year      = {1998},
  isbn      = {978-0-12-684887-8}
}

@book{johnson1977urn,
author = "Johnson, N. L. and Kotz, S.", 
title = "Urn Models and Their Application: An Approach to Modern Discrete Probability Theory", 
address = "New York, NY, USA", 
publisher = "John Wiley \& Sons", 
year = 1977,
}

\end{document}